\crefname{subsection}{subsection}{subsections}
\numberwithin{equation}{section}
\newtheorem{theorem}{Theorem}[section]
\newtheorem{proposition}[theorem]{Proposition}
\newtheorem{lemma}[theorem]{Lemma}
\theoremstyle{definition}
\newcommand{\Bin}{\mathrm{Bin}}
\newtheorem{remark}[theorem]{Remark}
\newcommand{\scr}{\mathcal}
\newcommand{\mb}{\mathbb}
\newcommand{\E}{\mathbb E}
\newcommand{\eps}{\varepsilon}
\newcommand{\hs}{\hat{s}}
\newcommand{\citet}{\cite}
\newcommand{\citep}{\cite}
\newcommand{\bE}{\mathbb{E}}
\newcommand{\bP}{\mathbb{P}}
\newcommand{\ra}{\rightarrow}
\newcommand{\sfe}{\mathsf{e}}
\newcommand{\LB}{\scr{L}}
\newcommand{\UB}{\scr{U}}
\newcommand{\Ber}{\mathrm{Ber}}
\newcommand{\hS}{\hat{S}}
\begin{document}
\title{
Random-order Contention Resolution via Continuous Induction: Tightness for Bipartite Matching under Vertex Arrivals
}

\author{
Calum MacRury
\thanks{Graduate School of Business, Columbia University.
\texttt{cm4379@columbia.edu}}
\and
Will Ma
\thanks{Graduate School of Business and Data Science Institute, Columbia University.
\texttt{wm2428@gsb.columbia.edu}
}
}

\date{}
\maketitle

\begin{abstract}
We introduce a new approach for designing Random-order Contention Resolution Schemes (RCRS) via exact solution in continuous time. Given a function $c(y):[0,1] \ra [0,1]$,
we show how to select each element which arrives at time $y \in [0,1]$ with probability \textit{exactly} $c(y)$.
We provide a rigorous algorithmic framework for achieving this, which discretizes the time interval and also needs to sample its past execution to ensure these exact selection probabilities.
We showcase
our framework in the context of online contention resolution schemes for matching with random-order vertex arrivals. For bipartite graphs with two-sided arrivals, we design a $(1+\sfe^{-2})/2 \approx 0.567$-selectable RCRS, which we also show to be \textit{tight}.
Next, we show that the presence of short odd-length cycles is the only barrier to attaining a (tight) $(1+\sfe^{-2})/2$-selectable RCRS on general graphs. By generalizing our bipartite RCRS, we design an RCRS for graphs with odd-length girth $g$ which is $(1+\sfe^{-2})/2$-selectable as $g \rightarrow \infty$. This convergence happens very rapidly: for triangle-free
graphs (i.e., $g \ge 5$), we attain a  $121/240 + 7/16 e^2 \approx 0.563$-selectable RCRS.
Finally, for general graphs we improve on the $8/15 \approx 0.533$-selectable RCRS of \citet{Fu2021} and design an RCRS which is at least $0.535$-selectable.
Due to the reduction of \citet{Ezra_2020}, our bounds yield a $0.535$-competitive (respectively, $(1+\sfe^{-2})/2$-competitive) algorithm for prophet secretary matching
on general (respectively, bipartite) graphs under vertex arrivals.
\end{abstract}

\section{Introduction} 

Prophet inequalities compare the performance of online vs.\ offline selection algorithms in an abstract, parsimonious setting.
In this setting, there is a universe of elements, each with a valuation that is initially unknown but drawn from a known and independent distribution.
The elements arrive one by one, revealing their valuation upon arrival, at which point an online algorithm must irrevocably decide whether to accept or reject the element.
Feasibility constraints prevent the algorithm from accepting all elements, and the online algorithm's objective is to maximize the expected total valuation of accepted elements.

This paper focuses on feasibility constraints defined by matchings in a graph $G=(V,E)$.  The elements are the edges $e\in E$, each with a random valuation or "utility" $U_e$.  A subset of edges $M\subseteq E$ forms a \textit{matching} if no two of them share a vertex.  In this case, an arriving edge $e$ is feasible to accept as long as it does not share a vertex with a previously-accepted edge; however, the online algorithm can still optionally reject $e$ if the revealed $U_e$ is small compared to the opportunity cost of "depleting" the two vertices in $e$.  We note that exact opportunity costs are generally intractable to compute since the state space is exponential in $|V|$, hence it is a priori unclear how to even design an online algorithm to approximate the dynamic programming solution.

In prophet inequalities, the online algorithm is compared to a stronger benchmark of the "prophet", who sees all realized valuations in advance and can accept the matching $M$ that maximizes $\sum_{e\in M} U_e$.  Let $\mathcal{F}$ denote the family of all possible matchings in graph $G$.  A prophet inequality is then a guarantee that an online algorithm's expected total valuation is at least $\alpha\cdot \bE[\max_{M\in\mathcal{F}}\sum_{e\in M} U_e]$, for some constant $\alpha\in[0,1]$, that holds for \textit{all} graphs $G$ (possibly restricted to some class, e.g.\ bipartite) and independent distributions for the utilities $U_e$.
Since the prophet provides an upper bound on the expected total valuation of any dynamic programming solution, a prophet inequality guarantee of $\alpha$ also implies an $\alpha$-approximation algorithm for this online stochastic optimization problem on graphs.

\textbf{Reduction to Online Contention Resolution Schemes (OCRS)}.
Prophet inequalities are often derived via OCRS's, which are algorithms that operate in the following alternate setting.
Elements $e\in E$ now have a binary random state, that is \textit{active} independently with a known probability $x_e$, and inactive otherwise.
"Active" should be interpreted as realizing a valuation "worthy of acceptance", and only active elements can be accepted.
The elements arrive one by one, revealing their state upon arrival, at which point active elements can be irrevocably accepted or rejected by the OCRS (inactive elements must be rejected), subject to the same feasibility constraints.  Again we focus on the feasibility constraint that the final subset of accepted elements forms a matching in the graph.
The objective is to accept every edge $e$ with probability (w.p.) at least $\alpha$ conditional on it being active, for a constant $\alpha\in[0,1]$ as large as possible.
An OCRS is said to be \textit{$\alpha$-selectable} (equivalently, we say it has \textit{selectability} $\alpha$) if it can achieve this for all graphs $G$ and vectors $(x_e)_{e\in E}$ satisfying
\begin{align} \label{eqn:introMatchingPolytope}
\sum_{e\ni v}x_e &\le1 &\forall v\in V.
\end{align}

The motivation for~\eqref{eqn:introMatchingPolytope} is that the expected number of active edges around any vertex $v$ should not exceed 1, since at most one of them can be accepted.
It is well-known that an $\alpha$-selectable OCRS can be converted into an online algorithm for the original problem whose expected total valuation is at least $\alpha\cdot \bE[\max_{M\in\mathcal{F}}\sum_{e\in M} U_e]$, providing a prophet inequality guarantee of $\alpha$ for matchings in graphs.  We defer a proof of this fact to \citet{feldman2021online, Ezra_2020} and hereafter focus on OCRS's.

\textbf{Vertex-batched arrivals}. \citet{Ezra_2020} introduce a generalization of prophet inequality where elements arrive in batches; the online algorithm sees the realized valuations of all elements in a batch before committing to any decisions and moving to the next batch.  The original online algorithms problem is recovered if each batch consists of a single element; the prophet's problem is recovered if there is a single batch consisting of all elements.  For graphs, a natural intermediate granularity of batching is vertex arrivals: the vertices arrive one by one, and all edges between the arriving vertex and previously-arrived vertices are revealed at the same time.  In the analogous OCRS, the binary random states are revealed in the same batches, with the additional constraint that there is \textit{at most one active edge} per batch. This additional constraint comes from the fact that at most one edge per batch can be matched, under vertex arrivals. We defer the details of the reduction to \citet{Ezra_2020}, where the authors prove that an $\alpha$-selectable OCRS with this additional constraint implies a prophet inequality guarantee of $\alpha$ under vertex-batched arrivals.

\textbf{Random-order arrivals}. Aside from batching, the guarantees $\alpha$ that are achievable for prophet inequalities and OCRS's depend heavily on the order in which elements arrive.
This paper assumes that elements arrive in a uniformly random order, in which case an OCRS can be called a \textit{Random-order Contention Resolution Scheme (RCRS)}.
We adopt the standard re-formulation of random-order arrivals where each element $e$ arrives at a random time $Y_e$, that is drawn independently and uniformly from the continuous time horizon [0,1].

\subsection{New Framework: Exact Selection in Continuous Time} \label{sec:introContInd}
Suppose that element $e\in E$ arrives at time $Y_e\in[0,1]$ and is active.
The RCRS only has a decision if $e$ is feasible to accept.
A common approach to making this decision is to accept $e$ following an independent Bernoulli random bit with probability $f_e(Y_e)$, which can depend on both the element $e$ and its realized arrival time $Y_e$ (e.g.~\citet{Lee2018} use the function $f_e(y)=\sfe^{-yx_e}$, recalling that $x_e$ is the activeness probability of $e$).
Integrating over arrival time, the probability of accepting any element $e$ conditional on it being active is then
\begin{align} \label{eqn:typicalSelGuarantee}
\int_0^1 f_e(y) \bP[e\text{ is feasible}\mid Y_e=y]dy,
\end{align}
and the RCRS is $\alpha$-selectable where $\alpha$ is the minimum value of~\eqref{eqn:typicalSelGuarantee} over elements $e$.
We note that $f_e$ does not necessarily need to depend on arrival time $y$ (e.g.~\citet{brubach2021offline} use the function $f_e(y)=\sfe^{-x_e/2}$, among others); however in existing RCRS's, $f_e$ has always been given as an explicit function, sometimes called an attenuation function.

Our framework differs by defining $f_e$ implicitly, with $f_e(y):=c(y)/\bP[e\text{ is feasible}\mid Y_e=y]$.  Substituting into~\eqref{eqn:typicalSelGuarantee}, the probability of accepting any element $e$ conditional on it being active is then $\int_0^1 c(y) dy$, and the RCRS is $(\int_0^1 c(y) dy)$-selectable.  We call $c:[0,1]\to[0,1]$ a \textit{selection function}.  Note that our RCRS accepts all elements with the \textit{same, exact} probability $c(y)$, conditional on it arriving at time $y$ and being active.

Our RCRS is only well-defined if the probabilities $f_e(y)$ are at most 1, which requires
\begin{align} \label{eqn:goal}
c(y) &\le \bP[e\text{ is feasible}\mid Y_e=y] &\forall e\in E, y\in[0,1].
\end{align}
We design selection functions $c$ that satisfy this property, which we establish inductively for any $e\in E$ and $t \in[0,1]$ by assuming the property is satisfied for elements arriving at all previous times $y\in[0,t)$.  In this sense, we are applying a "continuous induction" argument over real interval $[0,1]$.  Moreover, the benefit of exact selection is that it is easier to lower-bound $\bP[e\text{ is feasible}\mid Y_e=t]$ in~\eqref{eqn:goal} using the continuous induction hypothesis, since $e$ is feasible at time $t$ as long as conflicting elements (in this case, edges $e'$ that share a vertex with $e$) were not accepted before $t$, which we know happens with exact probabilities $c(y)$ for $y\in[0,t)$.

The idea of exact selection has been previously used by \citet{Ezra_2020} in the design of OCRS, where the arrival order of past elements is fixed, and the selection function is a constant $c$ instead of dependent on the arrival time $y$.  We show this idea to be powerful in the design of RCRS, where we can now integrate over random arrival times.
It is especially well-suited for designing RCRS's under vertex-batched arrivals, but first, we warm up by demonstrating the power of our framework in non-batched settings, recovering some known results and also easily establishing a new $1/2$-selectable RCRS on trees (which, as we will explain, is not easily established from previous approaches).

\textbf{Rank-1 matroids: $1-1/\sfe\approx0.632$-selectable RCRS \citep{Lee2018}}. Rank-$1$ matroids describe the most basic feasibility constraint where at most one element can be accepted.  We represent this using a star graph with all edges $e\in E$ containing a central vertex $v$, in which case we know $\sum_{e\in E} x_e\le1$ from~\eqref{eqn:introMatchingPolytope}. To establish the desired~\eqref{eqn:goal}, note that $e$ is feasible if and only if no edge was accepted before time $t$.  We claim that
\begin{align} \label{eqn:dropConditioning}
\bP[\text{no edge was accepted before $t$}\mid Y_e=t]\ge \bP[\text{no edge was accepted before $t$}],
\end{align}
which is subtle but important.  Indeed, conditioning on $Y_e=t$ can only increase the probability that no edge was accepted before $t$, since it means $e$ cannot be accepted before time $t$, also noting that conditioning on $Y_e$ does not affect the independent probabilities with which our RCRS attempts to accept other elements.  Only after dropping the conditioning can we derive
\begin{align} \label{eqn:rank1}
\bP[\text{no edge was accepted before $t$}] = 1-\sum_{e'} x_{e'} \int_0^t c(y) dy,
\end{align}
where we have used the induction hypothesis that the (unconditional) probability of accepting any element $e'$ conditional on it arriving at time $y$ and being active (which occurs w.p.~$x_{e'}$) is $c(y)$.

We now analytically derive the best selection function.  In order for~\eqref{eqn:dropConditioning} and~\eqref{eqn:rank1} combined to establish~\eqref{eqn:goal}, we require $c(t)\le1-\int_0^tc(y)dy$ for all $t\in[0,1]$, recalling that $\sum_{e'} x_{e'}\le1$.  To maximize the guarantee $\alpha=\int_0^1 c(y)dy$ while satisfying this requirement, we solve the integral equation $c(t)=1-\int_0^t c(y)dy$ with boundary condition $c(0)\le1$, leading to the well-known selection function $c(y)=\sfe^{-y}$ and guarantee $\alpha=1-1/\sfe$.
Interestingly, if $\sum_{e\in E}x_e=1$, then our framework actually recovers the $(1-1/\sfe)$-selectable RCRS of \citet[Apx.~A.2]{Lee2018}, because it can be checked that $\bP[e\text{ is feasible}\mid Y_e=y]=\sfe^{-y(1-x_e)}$ and hence our $f_e(y)=\sfe^{-y}/\sfe^{-y(1-x_e)}=\sfe^{-yx_e}$. We verify this in \Cref{sec:rank_one}.

\textbf{General graphs: $(1-\sfe^{-2})/2\approx0.432$-selectable RCRS \citep{brubach2021offline}}. We extend the same framework from star graphs to general graphs.
An edge $e=(u,v)$ arriving at time $t$ is now feasible if and only if neither vertex $u$ nor $v$ has already been matched.  We can drop the conditioning on $Y_e=t$ as in~\eqref{eqn:dropConditioning}, following the same argument.
The analogue of~\eqref{eqn:rank1} is now
\begin{align} \label{eqn:432}
\bP[\text{neither $u$ nor $v$ matched before $t$}] \ge 1-\Big(\sum_{e'\ni u} x_{e'} + \sum_{e'\ni v} x_{e'}\Big)\int_0^t c(y) dy,
\end{align}
using the same induction hypothesis. Unlike~\eqref{eqn:rank1}, we now have an inequality in~\eqref{eqn:432} from the union bound, because the "bad" events of $u$ and $v$ being matched may not be mutually exclusive.

Applying~\eqref{eqn:introMatchingPolytope} for vertices $u$ and $v$, requirement~\eqref{eqn:goal} is satisfied as long as $c(t)\le1-2\int_0^t c(y)dy$ for all $t\in[0,1]$.  The integral equation is now $c(t)= 1-2\int_0^t c(y)dy$, leading to selection function $c(y)=\sfe^{-2y}$ and a guarantee of $\alpha=\int_0^1\sfe^{-2y}dy=(1-\sfe^{-2})/2$.

\textbf{Trees: $1/2$-selectable RCRS [New]}.
When the graph is a tree, a selection function with a better guarantee is possible.
This is because the outcomes (determined by activeness, arrival times, and the independent Bernoulli bits drawn by our RCRS) on one side of edge $e=(u,v)$ before time $t$ are now \textit{independent} from outcomes on the other side, \textit{conditional} on $e$ arriving at time $t$, i.e.
\begin{align}
&\bP[\text{neither $u$ nor $v$ matched before $t$}|Y_e=t] \nonumber
\\ &=\bP[\text{$v$ not matched before $t$}|Y_e=t]\bP[\text{$u$ not matched before $t$}|Y_e=t]. \label{eqn:treeIntro}
\end{align}
From here we drop the conditioning on $Y_e=t$ and then apply the induction hypothesis, like before:
\begin{align*}
\eqref{eqn:treeIntro} &\ge\bP[\text{$v$ not matched before $t$}]\bP[\text{$u$ not matched before $t$}]
\\ &=\Big(1-\sum_{e'\ni u} x_{e'}\int_0^t c(y) dy)\Big)\Big(1-\sum_{e'\ni v} x_{e'}\int_0^t c(y) dy\Big).
\end{align*}
Note that this sequence of arguments can be applied as long as the bad events of $u$ and $v$ being matched before time $t$, conditional on $Y_e=t$, are not negatively correlated.
The resulting integral equation is $c(t) = (1 - \int_{0}^{t} c(y) dy)^2$, and
its solution is $c(y)=1/(1+y)^2$, leading to a guarantee of $\alpha=\int_0^1 (1+y)^{-2} dy=1/2$. We remark that \cite{macrury2023random} establish an upper bound of 1/2 on the guarantee of RCRS that can be achieved for bipartite graphs (a superclass of trees).

This result does not appear to easily follow from modifying existing approaches to exploit independence.  Indeed, both the $(1-\sfe^{-2})/2$-selectable RCRS of \citet{brubach2021offline} and the alternate proof of a $(1-\sfe^{-2})/2$-selectable RCRS from \citet{pollner2022improved} (based on the attenuation function of \citet{Lee2018}) require defining explicit functions $f_e$,
and in fact are exactly\footnote{This can be seen from a graph $G=(V,E)$ with $V=\{u_i,v_i\}_{i=0}^n$ and $E=\{(u_0,v_0)\}\cup\{(u_0,u_i),(v_0,v_i)\}_{i=1}^n$. The activeness probability $x_e$ of every $e\in E$ is $1/(n+1)$.  It can be computed that both of their RCRS's select the middle edge $(u_0,v_0)$ with probability \textit{exactly} $(1-\sfe^{-2})/2$ as $n\to\infty$.} $(1-\sfe^{-2})/2$-selectable on trees. By contrast, in our framework it is clear how the integral equation changes under independence, and solving it directly yields a new algorithm and guarantee.

\textbf{Note on "continuous induction", and our discretization/sampling framework.}
We have thus far argued that our algorithm is well-defined at each time $t$ assuming it is well-defined at all previous times $y<t$, and hence well-defined for all $t\in[0,1]$ via "continuous induction". While there are formalisms of continuous induction on $[0,1]$ (see \cite{continuos_induction_2007}), they require verifying an additional property: one must show that if the algorithm is well-defined at $t \in [0,1]$, then it is also well-defined at
time $t + \delta_t$ for some $\delta_t > 0$. The challenge in verifying such a property is that one doesn't get to assume the algorithm is well-defined for $y \in [t, t+ \delta_t)$, and so the edges need not be selected with probability exactly $c(y)$.

We thus abandon this approach, and instead discretize our algorithm to make decisions in $T$ phases.  Since we also want our algorithms to be polyonimal-time, we must estimate the probabilities $\bP[e\text{ is feasible}\mid Y_e=y]$ (which affect the random bits used by our algorithm) using sampling.
We provide a rigorous framework for doing this discretization combined with sampling, in \Cref{sec:recursive_rcrs}.  We believe that this framework, which reduces RCRS to finding selection functions $c(y)$ that satisfy~\eqref{eqn:goal}, is a broader contribution of our work.  We now show that our framework yields particularly good results in the vertex-batched arrival setting.

\subsection{Results for Vertex-batched Arrivals} \label{sec:results}

The \textit{odd-girth} $g$ of a graph $G$ is defined as the length of the shortest odd cycle in $G$ \footnote{Technically, we can remove the edges $e\in E$ with $x_e=0$ from the edge set of $G$.}.  A bipartite graph, which contains no odd cycles, is said to have odd-girth $g=\infty$.
For each odd integer $g \ge 3$, we define a target selectability $\alpha_g$, where $\alpha_{\infty} := (1+\sfe^{-2})/2 \approx 0.5676$ (see \eqref{eqn:alpha_definition} in \Cref{sec:technical_overview} for an explicit definition of $\alpha_g$). The sequence $(\alpha_g)_g$ is increasing and
converges to $\alpha_{\infty}$ as $g \rightarrow \infty$. In particular, $\alpha_3 = 5/12 + 1/4 \sfe^2 \ge 0.4505$, $\alpha_5 = 121/240 + 7/16 \sfe^2 \ge 0.5633$,
and $\alpha_7 = 10121/20160 + 31/64 \sfe^2 \ge 0.5675$.

\begin{theorem} \label{thm:recursive_rcrs}
Fix $0 < \eps < 1$, and let $G=(V,E)$ be a graph with odd-girth $g$. Then an RCRS with runtime polynomial in the input $G,(x_e)_{e\in E}$ and $1/\eps$ is
$(1-\eps)^2 \alpha_g$-selectable \footnote{The multiplicative $(1-\eps)^2$ error term is to simplify the computations in the proof of \Cref{thm:recursive_rcrs}.  The same statement holds if we want the RCRS to be $(1-\eps)\alpha_g$-selectable or $(\alpha_g-\eps)$-selectable.} under vertex arrivals.
\end{theorem}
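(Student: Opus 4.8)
The plan is to instantiate the general discretization-plus-sampling framework of \Cref{sec:recursive_rcrs} with a specific selection function $c = c_g$ that achieves $\int_0^1 c_g(y)\,dy = \alpha_g$ and satisfies the feasibility condition~\eqref{eqn:goal} on any graph of odd-girth $g$. The first step is to identify $c_g$: generalizing the tree case, where~\eqref{eqn:treeIntro} let us factor the two-sided survival probability, and the general-graph case, where~\eqref{eqn:432} only gave a union bound, the idea is to handle edges lying on short odd cycles by a more refined inclusion-exclusion / recursive argument. Conditioned on $Y_e = t$ for $e=(u,v)$, the events ``$u$ matched before $t$'' and ``$v$ matched before $t$'' become independent unless there is a short path between $u$ and $v$ avoiding $e$; odd-girth $g$ means the shortest such path has length $\ge g-1$, so correlations only enter at ``depth'' $\Theta(g)$. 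This motivates defining $c_g$ via the integral/functional equation that tracks survival probabilities up to that depth, and it is this equation whose solution gives the explicit constants $\alpha_g$ quoted before the theorem (e.g.\ $\alpha_3 = 5/12 + 1/(4\sfe^2)$). I would first verify, by continuous induction exactly as in the warm-up examples, that this $c_g$ satisfies~\eqref{eqn:goal}: assuming every edge arriving at time $y \in [0,t)$ is selected with probability exactly $c_g(y)$ when active and feasible, bound $\bP[e \text{ feasible} \mid Y_e = t]$ from below by expanding the matched-before-$t$ events around $u$ and $v$ along non-backtracking walks, using~\eqref{eqn:introMatchingPolytope} at each vertex and the absence of odd cycles of length $< g$ to control the correlation terms, and check the resulting inequality reduces to the defining functional equation for $c_g$.

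The second step is to make this rigorous and polynomial-time via the framework of \Cref{sec:recursive_rcrs}, which (as promised in the excerpt) reduces the construction of an RCRS to exhibiting such a $c_g$. Discretizing $[0,1]$ into $T = T(\eps)$ phases replaces the exact selection probability $c_g(y)$ by a stepwise approximation; one shows the per-phase error is $O(1/T)$, so choosing $T = \poly(1/\eps)$ costs a factor $(1-\eps)$ in selectability. Separately, since $\bP[e \text{ feasible} \mid Y_e = y]$ (needed to set the Bernoulli bit $f_e(y) = c_g(y)/\bP[e\text{ feasible}\mid Y_e=y]$) is $\#P$-hard to compute exactly, it is estimated by Monte Carlo simulation of the algorithm's own past execution; a union bound over edges and phases, together with a Chernoff bound, shows $\poly(|E|, 1/\eps)$ samples suffice to estimate every needed probability within relative error $\eps$ w.h.p., which costs the second factor $(1-\eps)$. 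Multiplying, the discretized sampled algorithm is $(1-\eps)^2\alpha_g$-selectable, and its runtime is polynomial in $G$, $(x_e)_{e\in E}$, and $1/\eps$. One subtlety to address: the self-referential sampling must be well-founded — the simulation at phase $i$ only invokes the (already-estimated) behavior at phases $1,\dots,i-1$ — so the estimates are built up inductively over phases with no circularity, and the accumulated error stays controlled because selection probabilities are bounded away from $1$.

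The main obstacle I expect is the first step: pinning down the right selection function $c_g$ and proving it satisfies~\eqref{eqn:goal} on all odd-girth-$g$ graphs. The delicate point is that conditioning on $Y_e = t$ does \emph{not} make the survival events at $u$ and $v$ independent when $u,v$ lie on a common odd cycle of length exactly $g$ (or slightly larger), so one cannot simply multiply as in~\eqref{eqn:treeIntro}; instead one must carefully expand both events along walks in $G$, identify the lowest-order term where a walk from $u$ and a walk from $v$ close up into an odd cycle, and show that the net effect of these correlation corrections is dominated by (or exactly captured by) the functional equation defining $c_g$ — in particular that the ``bad'' correlations push in the favorable direction (events not negatively correlated), as flagged in the tree discussion. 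Getting the bookkeeping of this walk expansion to close, uniformly over all graphs with odd-girth $\ge g$ and all feasible $(x_e)$, and matching it to the claimed closed-form $\alpha_g$, is where the real work lies; the discretization and sampling arguments in the second step are, by contrast, standard given the framework of \Cref{sec:recursive_rcrs}.
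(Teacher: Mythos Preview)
Your proposal misreads the setting and, as a result, aims at the wrong inequality. \Cref{thm:recursive_rcrs} is stated for the \emph{vertex-arrival} model: when $e=(u,v)$ arrives at time $Y_e=\max\{Y_u,Y_v\}=t$ with, say, $Y_u<Y_v=t$, the edge is feasible iff \emph{only} $u$ is unmatched (the vertex $v$ is arriving now and cannot already be matched). So the target inequality is
\[
\bP[u\text{ not matched before }t \mid Y_u<Y_v=t]\ \ge\ c(t),
\]
not a two-sided survival bound on both $u$ and $v$. Your whole plan of controlling the correlation between ``$u$ matched'' and ``$v$ matched'' by expanding both along non-backtracking walks and looking for where they close an odd cycle is the tree/edge-arrival picture from the warm-up, and it does not apply here. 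Relatedly, $\alpha_g=\int_0^1 2y\,c(y)\,dy$ (since $Y_e=\max\{Y_u,Y_v\}$ has density $2y$), not $\int_0^1 c(y)\,dy$ as you write.

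The actual obstacle, and the paper's key idea, is that one cannot simply drop the conditioning on $Y_v=t$ as in~\eqref{eqn:dropConditioning}: under vertex arrivals, allowing $Y_v<t$ can \emph{decrease} the probability that $u$ is matched (e.g.\ in a triangle $u,v,w$, an early $v$ may absorb $w$ and prevent $w$ from matching $u$). The paper handles this by coupling the execution on $G$ with the execution on $G\setminus v$ and showing that $u$'s matched status can flip only if there is a \emph{flipping sequence}: an even-length path $v=v_1,\dots,v_d$ with $F_u=v_d$, all edges surviving and arriving in increasing order before $Y_u$. Using the reformulated process (the choices $F_w$), there is at most one candidate path, its length is $\ge g-1$ by odd-girth, and the probability its vertices arrive before $Y_u=y$ in one of the two admissible orders is $2y^{g-1}/(g-1)!$. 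This yields the correction term $\phi_g(y)=y^{g-1}/(g-1)!$ in the defining integral inequality for $c$, which is how the odd-girth enters---not through an inclusion--exclusion over walks from both endpoints. Your second step (discretization and sampling) is broadly right, but note the paper needs the induction hypothesis in a two-sided form (both upper and lower bounds on the per-time match probability), and must verify that $c(y)/\hS_{v\to u}(j)$ stays at most $1$; the bounded-away-from-$0$ property you need is that $c(y)\ge C>0$, not that selection probabilities are bounded away from $1$.
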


\begin{remark}
A classical result of \citet{karp1981maximum} shows that the expected size of the maximum matching of the bipartite Erdős–Rényi random graph $G(n,n,1/n)$ is $\approx0.544 n$ as $n \rightarrow \infty$. This implies that $\approx 0.544$ is an upper bound on the selectability of any RCRS, in fact even any \textit{offline} contention resolution scheme \citep{chekuri2014submodular}, on bipartite graphs.  However, such a result does not contradict our guarantee of $\alpha_\infty\approx0.567$ for bipartite graphs because crucially, we are assuming the vertex-arrival model, in which edges are negatively correlated so that there is at most one active edge incident to each arriving vertex.  By contrast, the upper bound of \citet{karp1981maximum} holds when the edges are independent.

\end{remark}

Our \Cref{thm:recursive_rcrs} improves the state-of-the-art $8/15\approx0.533$-selectable RCRS from \citet{Fu2021} that holds generally under vertex arrivals, except in the case where $G$ has triangles (i.e.~$g=3$).  For this, we design a different RCRS.

\begin{theorem} \label{thm:general_graph_positive}
An RCRS (with polynomial runtime) is $\approx 0.535$-selectable under vertex arrivals.
\end{theorem}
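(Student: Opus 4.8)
The plan is to work inside the exact-selection framework of \Cref{sec:introContInd}: fix a selection function $c:[0,1]\to[0,1]$, attempt to accept each active edge $e$ arriving at time $y$ (when feasible) with the implicitly-defined probability $c(y)/\bP[e\text{ feasible}\mid Y_e=y]$, and verify the continuous-induction invariant \eqref{eqn:goal}, i.e.\ $c(t)\le\bP[e\text{ feasible}\mid Y_e=t]$ for every edge $e$ and time $t$, where feasibility probabilities are evaluated under the hypothesis that every edge arriving before $t$ is accepted with exact probability $c(\cdot)$. Any such $c$ yields an idealized continuous-time RCRS of selectability $\int_0^1 c(y)\,dy$, and the discretization-and-sampling machinery developed for \Cref{thm:recursive_rcrs} converts it into a polynomial-time RCRS losing only a $(1-\eps)^2$ factor; taking $\eps$ small enough that $(1-\eps)^2\int_0^1 c(y)\,dy>0.535$ then gives the theorem. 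So everything reduces to finding a single $c$ with $\int_0^1 c>0.535$ for which \eqref{eqn:goal} holds on \emph{every} graph, triangles included.

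To beat the $(1-\sfe^{-2})/2\approx0.432$ bound of \eqref{eqn:432}, one must retain the inclusion--exclusion correction that the union bound discards. Writing $e=(u,v)$ and letting $A_u,A_v$ be the events that $u$, respectively $v$, is matched before time $t$,
\begin{equation*}
\bP[e\text{ feasible}\mid Y_e=t]=1-\bP[A_u\mid Y_e=t]-\bP[A_v\mid Y_e=t]+\bP[A_u\cap A_v\mid Y_e=t].
\end{equation*}
The first two terms are bounded exactly as in \eqref{eqn:432}, using that the events ``$e'$ accepted before $t$'' for $e'\ni u$ are mutually exclusive, together with the induction hypothesis and \eqref{eqn:introMatchingPolytope}, giving $\bP[A_u\mid Y_e=t]\le\int_0^t c$ and symmetrically for $v$. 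The new ingredient is a \emph{lower} bound on $\bP[A_u\cap A_v\mid Y_e=t]$. I would use mutual exclusivity again: since $e$ is not accepted before $t$ (as $Y_e=t$), on $A_u\cap A_v$ vertex $u$ is matched by a unique edge $uz$ with $z\ne v$ and $v$ by a unique edge $vz'$ with $z'\ne u$, and necessarily $z\ne z'$, so
\begin{equation*}
\bP[A_u\cap A_v\mid Y_e=t]=\sum_{z\ne z'}\bP[\,uz\text{ and }vz'\text{ both accepted before }t\mid Y_e=t\,],
\end{equation*}
a sum over pairs of \emph{vertex-disjoint} edges.

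The crux, which I expect to be the main obstacle, is lower bounding each joint acceptance probability $\bP[uz,\,vz'\text{ accepted before }t]$. If these two edges were independent, each term would equal $x_{uz}x_{vz'}(\int_0^t c)^2$, the sum would be at least $(\sum_{z}x_{uz})(\sum_{z'}x_{vz'})(\int_0^t c)^2$ up to a negligible diagonal correction, and \eqref{eqn:goal} would reduce to the tree equation $c(t)=(1-\int_0^t c)^2$ with selectability $1/2$. The shortfall down to $0.535$ is precisely the negative correlation created when $u$ and $v$ share common neighbors: accepting $uz$ and accepting $vz$ compete at the shared vertex $z$, and the extremal configuration should be the ``book'' graph obtained by gluing many triangles along $e$ (a weighted generalization of a single triangle, which is what limits the $g=3$ construction behind $\alpha_3$). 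The plan is to make the loss quantitative --- condition on the at-most-one accepted edge at each common neighbor, bookkeep the budget consumed at $u$, $v$, and the shared vertices, and show the correlation deficit is controlled by a second-order term in $\int_0^t c$ that is maximized by a symmetric book --- thereby reducing \eqref{eqn:goal} to a one-parameter family of integro-differential inequalities whose extremal member can be solved (numerically) for a selection function with $\int_0^1 c>0.535$.

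A delicate point pervading the last step: unlike the upper bounds on $\bP[A_u]$ and $\bP[A_v]$, the conditioning on $Y_e=t$ cannot simply be ``dropped'' when \emph{lower}-bounding $\bP[A_u\cap A_v\mid Y_e=t]$, since deleting the event that $e$ itself is accepted before $t$ only decreases this probability. The argument must instead restrict to matchings of $u$ and $v$ through edges whose acceptance before $t$ is provably unaffected by $e$ arriving only at time $t$ --- so that the induction hypothesis applies cleanly (if anything, in our favor) --- and absorb any residual discrepancy into the error terms of the book-graph optimization.
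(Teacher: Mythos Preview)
Your proposal has a fundamental mismatch: you are working in the \emph{edge-arrival} model, but \Cref{thm:general_graph_positive} is about \emph{vertex} arrivals. Your feasibility event ``neither $u$ nor $v$ matched before $t$'' and the resulting inclusion--exclusion with $\bP[A_u\cap A_v\mid Y_e=t]$ is the edge-arrival condition from \eqref{eqn:432}; the selectability formula $\int_0^1 c(y)\,dy$ is likewise the edge-arrival one. Under vertex arrivals (see \Cref{sec:technical_overview}), an edge $e=(u,v)$ arrives at time $\max\{Y_u,Y_v\}$, feasibility requires only the \emph{earlier} endpoint to be unmatched, and the selectability is $\int_0^1 2y\,c(y)\,dy$. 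So there is no intersection term $A_u\cap A_v$ to exploit: the quantity to lower-bound is simply $\bP[u\text{ not matched before }t\mid Y_u<Y_v=t]$ as in \eqref{eqn:goalVertexArrival}, and the obstruction in general graphs is the one-sided correlation-decay issue \eqref{eqn:simple_correlation_decay}, not a union-bound slack.

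The paper's proof takes an entirely different route and explicitly \emph{abandons} the continuous-induction framework for this theorem (see the end of \Cref{sec:technical_overview}: ``\Cref{thm:general_graph_positive} does not argue by continuous induction and hence does not use our framework''). The paper also explains why: the exact-selection RCRS of \Cref{thm:recursive_rcrs} gives only $\alpha_3\approx 0.45$ on triangle-containing graphs because \eqref{eqn:simple_correlation_decay} degrades badly as $t\to 1$, and improving that bound within the framework appears difficult. Instead, the proof in \Cref{sec:general_graph_positive} constructs a \emph{two-phase hybrid} algorithm: for vertices arriving before a threshold time $t\approx 0.119$ it runs the OCRS of Ezra et al.\ (which has the crucial property of exact selection probability $1/2$ conditional on \emph{any} arrival times, not just in expectation), and for later vertices it runs the prune-greedy RCRS of Fu et al. The analysis (\Cref{lem:recursion_two_phase}) recursively expands matching probabilities to depth $\ell=4$ in the graph, uses the $1$-regularity reduction of Fu et al., and optimizes the phase-transition time $t$ via \Cref{lem:expand_recursion}. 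The strong conditional guarantee of the Ezra et al.\ OCRS in phase 1 is precisely what makes the phase-2 analysis tractable; an exact-selection guarantee that holds only in expectation over arrival times (which is all the continuous-induction framework provides) would not survive the multi-vertex conditioning the recursion requires.
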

\begin{remark} \label{rem:general_graph_vanishing}
If the graph has \textit{vanishing edges values}, i.e.~$\max_{e \in E}x_e \rightarrow 0$, then the RCRS from \Cref{thm:general_graph_positive} is $\frac{4}{5} - \frac{3 \sqrt{3}}{20} \approx 0.540$-selectable.
\end{remark}

\Cref{thm:general_graph_positive} does not argue by continuous induction and hence does not use our framework, but nonetheless uses the idea of exact selection, as in \citet{Ezra_2020}, and adds it to the algorithm of \citet{Fu2021}.  Our analysis shows that interpolating between the two algorithms performs better than either of them alone, under random-order vertex arrivals.

Finally, we show that an upper bound of $(1+\sfe^{-2})/2$, first proved in \citet{Fu2021} using a vanishing complete graph, still holds when the counterexample is a vanishing complete \textit{bipartite} graph (to our understanding, this requires new technical arguments, as we explain at the end of \Cref{sec:technical_overview}).

\begin{theorem} \label{thm:hardness_vertex_rom}
No RCRS is better than $(1+\sfe^{-2})/2$-selectable for bipartite graphs under vertex arrivals.
\end{theorem}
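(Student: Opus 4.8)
The plan is to exhibit a family of bipartite instances on which no RCRS can be better than $(1+\sfe^{-2})/2$-selectable, mirroring the vanishing complete-graph construction of \citet{Fu2021} but forcing bipartiteness. I would take the complete bipartite graph $K_{n,n}$ with each edge active independently with probability $x_e = 1/n$, so that \eqref{eqn:introMatchingPolytope} holds with equality at every vertex, and let $n \to \infty$. The distinguished element will be a fixed edge $e^\star = (u^\star, v^\star)$; I would analyze the probability that an arbitrary RCRS selects $e^\star$ conditional on it being active, and show this is at most $(1+\sfe^{-2})/2 + o(1)$. As $n \to \infty$ the number of active edges incident to $u^\star$ (respectively $v^\star$), other than $e^\star$, converges to a Poisson random variable with mean $1$, and their arrival times are i.i.d.\ uniform on $[0,1]$; this Poissonization is the clean limiting object to work with, exactly as in the general-graph upper bound.

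The key steps, in order, would be: (1) Set up the limiting continuous-time process: condition on $e^\star$ being active and arriving at time $Y_{e^\star} = t$; the ``threat'' to $e^\star$ from each side is a Poisson($1$) number of competing active edges with uniform arrival times, and additionally — this is where the vertex-arrival constraint bites — at most one active edge arrives per vertex, so the edges incident to $v^\star$ that compete with $e^\star$ all arrive at \emph{distinct} vertex-arrival times on the other side, and similarly for $u^\star$. (2) Argue that the best any RCRS can do is, in the limit, described by a selection function: by a symmetrization/averaging argument over the edges of $K_{n,n}$ and over which edge is designated $e^\star$, reduce an arbitrary RCRS to one that selects an edge arriving at time $y$ with some probability $c(y)$ depending only on $y$ (and the observed history), so that the quantity to bound becomes $\int_0^1 c(y)\,\bP[e^\star \text{ feasible} \mid Y_{e^\star}=y]\,dy$. (3) Derive the binding feasibility constraint in the limit. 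Because the two sides are independent given $Y_{e^\star}=t$ (the graph is bipartite, hence locally tree-like around $e^\star$), $\bP[e^\star \text{ feasible}\mid Y_{e^\star}=t]$ factors as a product over the two endpoints, and each factor is the probability that a Poisson($1$) stream of competitors, each selected w.p.\ $c(\cdot)$ at its arrival time, produces no selected edge before $t$; this gives $\bP[e^\star\text{ feasible}\mid Y_{e^\star}=t] = \exp\!\big({-2\int_0^t c(y)\,dy}\big)$ as an \emph{upper bound}, valid regardless of correlations introduced by the scheme, using that conditioning on $Y_{e^\star}=t$ can only help (cf.\ \eqref{eqn:dropConditioning}) and a convexity/Jensen step to pass from the scheme's possibly history-dependent acceptance rule to the averaged rate $c$. (4) Optimize: the selectability is then at most $\max_c \int_0^1 c(y)\exp(-2\int_0^t c(y)dy)\,dy$ subject to $0 \le c \le 1$; solving this variational problem (the unconstrained optimizer wants $c$ large early, so the constraint $c \le 1$ binds on an initial interval $[0,\tau]$ and $c(y) = \tfrac12$ afterwards via the first-order condition) yields the extremal value $(1+\sfe^{-2})/2$, with $\tau = 1/2$.

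The main obstacle I anticipate is Step~(3): making rigorous the claim that an \emph{arbitrary} adaptive RCRS — which may base its decision for $e^\star$ on the full observed arrival pattern, not just on a predetermined function $c(y)$ — cannot beat the ``averaged-rate'' bound $\exp(-2\int_0^t c)$. The subtlety is that the scheme could correlate its acceptances of $e^\star$'s competitors with $e^\star$'s own arrival time or with each other in ways that, a priori, might make $e^\star$ \emph{more} likely to be feasible precisely when the scheme wants to accept it. Handling this needs a careful argument that conditioning on $\{Y_{e^\star}=t,\ e^\star\text{ active}\}$ leaves the competitors' (activeness, arrival time, Bernoulli-bit) triples with a product structure, so that no such favorable correlation can be manufactured; combined with a martingale/rate-integral bound $\bP[\text{no competitor selected before }t] \le \exp(-\int_0^t (\text{instantaneous selection rate}))$ and Jensen to replace the instantaneous rate by its average $c$. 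This is exactly the kind of ``new technical argument'' the authors flag at the end of \Cref{sec:technical_overview}: the bipartite structure is what delivers the independence across the two sides (hence the clean product, hence the factor $2$ in the exponent), whereas in a complete \emph{bipartite} graph one must still rule out correlations \emph{within} each side coming from the at-most-one-active-edge-per-vertex constraint. A secondary, more routine obstacle is controlling the $o(1)$ Poissonization errors uniformly, and checking that the optimal $c$ is indeed feasible (i.e., that $\tau = 1/2 \le 1$) and attains the stated constant.
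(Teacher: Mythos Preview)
Your approach is genuinely different from the paper's, and as written it has a concrete gap that is not merely a matter of rigor.

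First, the variational problem in Step~(4) is wrong. With $F(y)=\int_0^y c(s)\,ds$, your objective is
\[
\int_0^1 c(y)\exp\!\Big({-}2\int_0^y c(s)\,ds\Big)\,dy \;=\; \int_0^1 F'(y)\,e^{-2F(y)}\,dy \;=\; \frac{1-e^{-2F(1)}}{2},
\]
which depends only on $F(1)=\int_0^1 c$ and is maximized at $c\equiv 1$ with value $(1-\sfe^{-2})/2\approx 0.432$, not $(1+\sfe^{-2})/2\approx 0.568$. The threshold structure you describe ($c=1$ on $[0,\tau]$, then $c=1/2$) does not arise from this functional. The likely source is that under vertex arrivals the edge $e^\star=(u^\star,v^\star)$ arrives at time $\max\{Y_{u^\star},Y_{v^\star}\}$, which has density $2y$, and feasibility at that time requires only the \emph{earlier}-arriving endpoint to be free; neither feature is reflected in your integrand.

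Second, and more fundamentally, the reduction in Step~(2) to a time-only selection function does not go through for an upper bound in the bipartite case. The paper explicitly notes (footnote in \Cref{sec:technical_overview}) that if the algorithm could choose the vertex arrival order, a guarantee of $1-1/\sfe$ is attainable on bipartite graphs by bringing one side first. Any valid upper bound must therefore exploit that under a uniformly random order the two sides arrive in roughly balanced fashion; a per-edge symmetrization does not capture this, and the ``conditioning can only help'' device of \eqref{eqn:dropConditioning} is a tool for positive results, not negative ones.

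The paper's proof is global: on $K_{n,n}$ with $x_e=1/n$ it bounds $\bE[|M|]$ for \emph{any} RCRS. With $M(t)$ the number of matched edges after $t$ vertex arrivals, it shows that on the high-probability event $Q_t$ that the two sides have arrived in balanced numbers,
\[
\bE[\Delta M(t)\mid \scr{H}_t]\ \le\ (1+o(1))\Big(\frac{t}{2n}-\frac{M(t)}{n}\Big),
\]
since an arriving vertex can only match if its random choice lands on an already-arrived, unmatched neighbor on the opposite side. The one-sided differential equation method of \citet{bennett2023} then yields $M(t)/n\le (1+o(1))\,m(t/n)$ where $m'(s)=s/2-m(s)$, $m(0)=0$, i.e.\ $m(s)=(e^{-s}+s-1)/2$; hence $\bE[|M|]/n\le (1+o(1))\,m(2)=(1+\sfe^{-2})/2$, and by symmetry this equals the per-edge selection probability. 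The ``new technical argument'' the paper flags is precisely establishing and using $Q_t$ to get the $t/(2n)$ coefficient---exactly the step that would fail under an adversarial (or algorithm-chosen) arrival order.
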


This shows that our RCRS from \Cref{thm:recursive_rcrs} is \textit{tight}, because $\alpha_{\infty}=(1+\sfe^{-2})/2$.  Tight contention resolution results for matchings in graphs are rare, with some exceptions being \citet{BruggmannZ22} (offline, bipartite, monotone), \citet{Ezra_2020} (adversarial-order, vertex-arrival), and \citet{nuti2023towards} (offline, vanishing edge values).
Complementing \citet{nuti2023towards} who show asymptotic tightness as $\max_{e \in E}x_e\to0$, we show asymptotic tightness in a new and different regime where the odd girth $g\to\infty$, with quantifiably fast convergence in that the error is less than $0.0001$ once $g\ge7$.

\subsection{Techniques for Vertex-batched Arrivals} \label{sec:technical_overview}

Having explained our discretization/sampling framework for continuous induction, we now describe our techniques for establishing the results in \Cref{sec:results}, focusing solely on finding selection functions that satisfy~\eqref{eqn:goal} under the vertex arrival model.

\textbf{Reformulation of arrival process}.  Recall from earlier that in the vertex-batched RCRS problem, every vertex $v$ arrives at a random time, at which point each edge between $v$ and an already-arrived vertex $u$ will be active w.p.~$x_{u,v}$, correlated so that at most one of these edges is active (which is feasible, by~\eqref{eqn:introMatchingPolytope}).
We reinterpret randomness as being realized in the following order instead.
First, every vertex $v$ independently \textit{chooses} each vertex $u$ such that $(u,v)\in E$ w.p.~$x_{u,v}$, choosing at most one vertex whose realization is denoted $F_v$.
Then, every vertex $v$ independently draws an arrival time $Y_v$ uniformly from [0,1], and an edge $(u,v)$ is active if either $F_v=u$ and $Y_u<Y_v$, or $F_u=v$ and $Y_v<Y_u$. In either case, edge $(u,v)$ arrives at time $\max\{Y_u,Y_v\}$.
These two processes for drawing active edges are equivalent because conditional on any realization of arrival times, every arriving vertex $v$ will bring at most one active edge with an already-arrived vertex $u$, drawn with the same probabilities $x_{u,v}$. Reformulating the random process in this way
also allows us to later bound the probability of certain rare events (i.e., event $B_{v \ra u}$ in \Cref{sec:proving_continuous_induction_step}), which is necessary for proving \Cref{thm:recursive_rcrs}.

\textbf{$(1+\sfe^{-2})/2$-selectable RCRS (\Cref{thm:recursive_rcrs} when $g=\infty$)}.
We will show that it is possible to select every edge $(u,v)$ conditional on it arriving at time $y=\max\{Y_u,Y_v\}$ and being active w.p.~$c(y)=(1-\sfe^{-2y})/2y$.
Since $Y_u$ and $Y_v$ are independent random variables drawn uniformly from [0,1], the density of arrival time $\max\{Y_u,Y_v\}$ is $y\mapsto 2y$,  and the guarantee implied by this selection function would be $\int_0^1 c(y) 2y dy=\int_0^1(1-\sfe^{-2y})dy=(1+\sfe^{-2})/2$.
We note that under vertex arrivals, it is conceivable that the selection function $c$ should depend on the values of both arrival times $Y_u$ and $Y_v$, instead of just the value $\max\{Y_u,Y_v\}$.
We show that making $c$ a function of $\max\{Y_u,Y_v\}$ suffices for deriving a tight guarantee on bipartite graphs.

We provide an informal but clean argument that $c(y)=(1-\sfe^{-2y})/2y$ satisfies~\eqref{eqn:goal} when $y=\max\{Y_u,Y_v\}$.  Suppose an edge $e=(u,v)$ arrives at time $t\in[0,1]$, where we assume without loss that $Y_u<Y_v=t$.  Under vertex arrivals, edge $e$ is feasible to accept if and only if vertex $u$ was not matched before time $t$.  Thus, we must show that
\begin{align} \label{eqn:goalVertexArrival}
\bP[u\text{ not matched before }t \mid Y_u<Y_v=t]\ge c(t),
\end{align}
assuming the induction hypothesis that active edges arriving at times $y$ before $t$ were accepted with probability $c(y)$.
Like in \Cref{sec:introContInd}, we need to drop the conditioning on the left-hand-side of~\eqref{eqn:goalVertexArrival} in order to apply the induction hypothesis.  However, under vertex arrivals, this no longer follows from a simple argument, and in fact dropping the conditioning could, unfortunately, increase the left-hand-side of~\eqref{eqn:goalVertexArrival}.

Our solution is to drop the conditioning on $Y_v=t$, but not on $Y_u<t$, which is sufficient for applying the induction hypothesis.  On bipartite graphs, we prove (for the complement events) that
\begin{align} \label{eqn:dropConditioning_vertex}
\bP[u\text{ is matched before }t|Y_u<Y_v=t]\le\bP[u\text{ is matched before }t|Y_u<t],
\end{align}
using a refined\footnote{In our formal proofs, the induction hypothesis is that an active edge $e=(u,v)$ arriving at time $y$ is accepted w.p.~$c(y)$ \textit{further conditioned} on either $Y_u<Y_v=y$ or $Y_v<Y_u=y$.  Our informal proof sketches omit this detail.} induction hypothesis.
We can then analyze the right-hand-side of~\eqref{eqn:dropConditioning_vertex}:
\begin{align*}
\bP[u\text{ is matched before }t \mid Y_u<t]
&=\sum_w t \cdot x_{u,w}\bP[(u,w)\text{ is accepted before }t \mid Y_u<t,Y_w<t,(u,w)\text{ is active}]
\\ &=\sum_w t\cdot x_{u,w}\int_0^t c(y) \frac{2y}{t^2} dy
\\ &\le \frac1t \int_0^t \frac{1-\sfe^{-2y}}{2y} 2y dy=\frac1t \left(t-\frac{1-\sfe^{-2t}}{2} \right)=1-c(t)
\end{align*}
where the first equality holds because the events of $(u,w)$ being accepted for different vertices $w$ are mutually exclusive,
the second equality applies the induction hypothesis (where the conditional arrival time of an edge $(u,w)$ has probability density $y\mapsto 2y/t^2$),
and the inequality applies~\eqref{eqn:introMatchingPolytope}.
This in conjunction with~\eqref{eqn:dropConditioning_vertex} establishes the desired~\eqref{eqn:goalVertexArrival} for bipartite graphs.

\textbf{Generalizing to odd-girth $g \ge 3$ via correlation decay (\Cref{thm:recursive_rcrs}).}
When $G$ is not bipartite,~\eqref{eqn:dropConditioning_vertex} no longer holds.  For example, if vertices $u,v,w$ form a 3-cycle, then allowing for the case where $Y_v<t$ (equivalently, dropping the conditioning on $Y_v=t$) could match $v$ with $w$ before time $t$, decreasing the probability that $u$ is matched (to $w$) before $t$.  For general graphs, we prove a weaker version where
\begin{equation} \label{eqn:simple_correlation_decay}
 \mb{P}[\text{$u$ is matched before $t$}| Y_u < Y_v= t] - \mb{P}[\text{$u$ is matched before $t$}| Y_u < t] \le \frac{1}{t} \int_{0}^{t} \frac{2 y^{g-1}}{(g-1)!} dy.
\end{equation}
Recall that $(u,v)\in E$, so a path of even length between $u$ and $v$ must have length at least $g-1$.
\Cref{eqn:simple_correlation_decay} can be viewed as a \textit{correlation decay} argument, where the impact of arrival time $Y_v$ on $u$ being matched, upper-bounded by term $\frac{1}{t} \int_{0}^{t} \frac{2 y^{g-1}}{(g-1)!} dy$, vanishes to 0 as the distance between $v$ and $u$ increases when $g\to\infty$.

We only worry about even paths (and hence odd-girth) because we show that 
in order for allowing $Y_v < t$ to decrease
the probability that $u$ is matched before time $t$, roughly speaking, both of the following must occur:
\begin{itemize}
    \item An even path $P$ of \textit{active} edges between $u$ and $v$ exists (in the preceding example, the even path has length 2, consisting of edges $(u,w)$ and $(w,v)$);
    \item The edges of $P$ arrive in a specific order (where allowing $Y_v<t$ would cause $(w,v)$ to be matched instead of $(u,w)$).
\end{itemize}
Since the vertices arrive in random order, and there are at most $|V|$ active edges in $G$, it is unlikely that any such path $P$ exists, leading to \eqref{eqn:simple_correlation_decay}.

Interestingly, our argument here
allows an adversary to set the "choices" $(F_v)_{v \in V}$ of every vertex $v$ under our reformulated arrival process.
Indeed, our correlation decay argument does not take an expectation over the realizations of $F_v$ from the probabilities $x_e$ (as done in other correlation decay arguments, e.g.~\citet{kulkarni2022}), and already leads to very fast convergence (in $g$) to the tight guarantee $(1+\sfe^{-2})/2$.
Our analysis is also tight, once we concede to an adversarial choice of $(F_v)_{v\in V}$.

\textbf{RCRS not dependent on odd-girth (\Cref{thm:general_graph_positive}).}
Our RCRS from \Cref{thm:recursive_rcrs} is not better than $1/2$-selectable on general graphs, due to the rate at which it selects
arriving edges $e$ with $Y_e$ close to $1$. In particular, \eqref{eqn:simple_correlation_decay} provides
a poor upper bound on the difference between conditioning vs.\ not conditioning on $Y_v=t$, as $t$ gets closer $1$. In contrast, when $t$ is close to $0$, \eqref{eqn:simple_correlation_decay} provides a good upper bound on the difference no matter the value of $g$. As a result, our $g=3$ RCRS selects edges $e$ with $Y_e$ close to $0$ at roughly the
same rate as our bipartite RCRS. Since our bipartite RCRS is tight, this suggests that in order to improve on the general graph RCRS of \citet{Fu2021},
one should define a \textit{hybrid} or \textit{two-phased} RCRS which executes our $g=3$ RCRS for vertices which arrival before some time $t\in[0,1]$, followed
by the RCRS of \citet{Fu2021}. Unfortunately, analyzing such an RCRS does not seem possible in the second phase.
This is because the analysis of the RCRS of \citet{Fu2021} requires conditioning on multiple vertices' arrival times, which then destroys
the (unconditional) exact selection guarantee of our $g=3$ RCRS.  

As a solution, we again use the idea of exact selection, however this time by explicitly using the vertex-arrival OCRS from \citet{Ezra_2020} in the first phase. Crucially, this OCRS accepts every active edge with probability \textit{exactly} 1/2, conditional on \textit{any} realization of arrival times.  This is a very strong guarantee which allows us to successfully analyze the RCRS of \citet{Fu2021} in the second phase. Our RCRS interpolates between the RCRS of \cite{Fu2021} and the OCRS of \cite{Ezra_2020} by setting $t=0$ and $t=1$, respectively. By taking $0 < t < 1$, we get a \textit{hybrid} RCRS which outperforms both schemes. Intuitively, this is because the \textit{non-greedy} decisions made in phase $1$ allow us to more equitably select edges in the second phase. By optimizing over $t$, we attain a selectability of $\approx 0.535$.

\textbf{Upper bound for bipartite graphs (\Cref{thm:hardness_vertex_rom}).} 
In order to prove the $(1+\sfe^{-2})/2$ negative result of \Cref{thm:hardness_vertex_rom},
we consider the complete bipartite graph on $2n$ vertices with uniform edge values
of $1/n$. We argue that any RCRS can match at most $(1+ o(1))(1+\sfe^{-2}) n/2$ edges in expectation as $n \rightarrow \infty$.
Our approach is similar to \citet{Fu2021} in that we bound the one-step changes of the RCRS and then apply
the differential equation method \cite{de, warnke2019wormalds, bennett2023}.
However, since \citet{Fu2021} analyze a complete graph, all arrival orders are equivalent, by symmetry.
In contrast, in the bipartite graph we analyze, we must argue that in a \textit{typical} random arrival order, the RCRS can only gain $o(n)$ from leveraging the asymmetry in the number of vertices that have arrived on each side of the graph.
Put another way, the upper bound of \citet{Fu2021} holds even if the algorithm could choose the order of arrival, but our upper bound for bipartite graphs must\footnote{On bipartite graphs, if the algorithm could choose the vertex arrival order, then in fact a guarantee of $1-1/\sfe$ is possible---just choose one side of the graph (the "offline" side) to arrive entirely before the other, and then run a $(1-1/\sfe)$-selectable rank-1 RCRS for each vertex on the offline side.} exploit the fact that a uniform random arrival order does not help the algorithm too much.
After making this argument, we deduce that the upper bound of $(1+\sfe^{-2})/2$ holds even if we restrict the graph to be bipartite.

\subsection{Further Discussion of Related Work}

\textbf{RCRS for rank-1 matroids and edge arrivals}.
\citet{Lee2018} first established a $(1-1/\sfe)$-selectable RCRS for rank-1 matroids, which is \textit{tight} in that a guarantee $\alpha>1-1/\sfe$ is not possible for arbitrary rank-1 matroids and activeness probabilities.
\citet{brubach2021offline} establish alternate $(1-1/\sfe)$-selectable RCRS's for rank-1 matroids, in which the attenuation function is agnostic to the arrival time.
They also first established a $(1-\sfe^{-2})/2$-selectable RCRS for random-order edge arrivals in general graphs, using the same time-agnostic attenuation functions.
\citet{pollner2022improved} establish an alternate $(1-\sfe^{-2})/2$-selectable RCRS using the time-aware attenuation function of \citet{Lee2018}, and also break the barrier of $(1-\sfe^{-2})/2$ for this problem.  The state-of-the-art for this problem of random-order edge arrivals in general graphs (again using time-agnostic attenuation functions) is currently given by \citet{macrury2023random}.

\textbf{Vertex-batched arrivals}.
Motivated by prophet inequality problems on graphs, contention resolution schemes under vertex-batched arrivals were introduced in \citet{Ezra_2020}.  They derive a tight $1/2$-selectable OCRS, which assumes adversarial vertex arrival order. For the query-commit matching \citep{Chen} and price of information \citep{singla2018price} problems, 
\citet{Fu2021} observed that attaining an $\alpha$-approximation guarantee for either problem reduces to designing
an $\alpha$-selectable RCRS for vertex arrivals on
a general graph, so their 8/15-selectable RCRS implies an $8/15$-approximation for each problem. A similar reduction was observed in \citet{borodin2023} for the query-commit matching problem on bipartite graphs with patience constraints on one side of the bipartition: each vertex \textit{without} patience constraints executes its own rank-$1$ $(1-1/\sfe)$-selectable RCRS,
leading to a $(1-1/\sfe)$-approximation. We improve the results of \cite{Fu2021}, and hence also the best-known approximation guarantees for the query-commit matching \citep{Chen} and price of information \citep{singla2018price} problems on general graphs. Recently, RCRS for vertex arrivals have also been
used in the stochastic weighted matching problem when $O(1/p)$ edges per vertex can be queried,
where $p$ is the minimum edge probability of the stochastic graph. \citet{derakhshan2023query} prove an interesting connection that reduces designing a $1/(2-\alpha)$-approximation algorithm for the stochastic matching problem to designing an $\alpha$-selectable RCRS for vertex arrivals. Since we improve on \cite{Fu2021}, our new selectability bounds are also relevant here.

\textbf{Sampling to achieve exact selection}.
Our algorithm requires sampling of its past executions in order to approximate its stochastic state at any point in time.  This idea has been previously used for other problems such as stochastic knapsack \citep{ma2014improvements} and query-commit matching \citep{Adamczyk15}, and is sometimes summarized as "simulation-based attenuation" in the literature \citep{BrubachSSX20}.  Importantly, the algorithm should not \textit{re-sample} previously-recorded state approximations, in order to avoid error propagation \citep{ma2014improvements}.  We use this idea in our framework, which has to be carefully combined with the discretization of a state space that evolves in continuous time.  To our knowledge, previous papers have always had discrete time steps, and hence have not encountered this challenge of "exact selection in continuous time".

\section{Notation and Preliminaries}

We formalize notation for our problem, recapping some concepts from the introduction.

Let $G=(V,E)$ be a graph.  An edge $e=(u,v)$ is said to be \textit{incident} to vertices $u$ and $v$, and $v$ is said to be a \textit{neighbor} of $u$ (and vice versa).
For any vertex $v\in V$, let $N(v)\subseteq V$ denote its neighbors, and let $\partial(v)\subseteq E$ denote the set of edges incident to $v$.
For any $e =(u,v)\in E$, let $\partial(e) := \partial(u) \cup \partial(v) \setminus \{e\}$ denote the set of edges "incident" to $e$.
A \textit{matching} $M$ is a subset of edges no two of which are incident to the same vertex, i.e.\ satisfying $|M\cap\partial(v)|\le 1$ for all $v\in V$.  A vector $\bm{x} = (x_e)_{e \in E} \in[0,1]^E$ lies in the \textit{matching polytope} of $G$ if $\sum_{e\in\partial(v)}x_e\le 1$ for all $v\in V$. In this case,
we refer to $\bm{x}$ as a \textit{fractional matching} for $G$, and $\bm{x}$ is said to be \textit{1-regular} if $\sum_{e\in\partial(v)}x_e=1$ for all $v\in\ V$.  We emphasize that generally our analysis does not require $\bm{x}$ to be 1-regular, except \Cref{sec:general_graph_positive}, in which assuming $1$-regularity is without loss due to the reduction of \cite{Fu2021}.

Fix a fractional matching $\bm{x}$. For each $v \in V$, draw $F_v \in N(v) \cup\{\perp\}$ independently, where $\mb{P}[ F_v = u] = x_{u,v}$ for each $u \in N(v)$, and $\mb{P}[F_v = \perp] = 1- \sum_{u \in N(u)} x_{u,v}$
We say that $v$ \textit{chooses} $u$ if $F_v =u$. Here $\perp$ is a \textit{null} element
indicating that $v$ does not choose any vertex. In the \textit{random-order vertex arrival model}, we assume that each $v$ has an \textit{arrival time} $Y_v$ drawn independently and uniformly from $[0,1]$. 
We assume that all arrival times are distinct, which occurs w.p.~1.
We say that $e =(u,v) \in E$ is \textit{active}, provided $F_v =u$ and
$Y_u < Y_v$, or $F_u =v$ and $Y_v < Y_u$.  Note that $\mb{P}[\text{$e$ is active}] = x_e$ for each $e \in E$. 
It will be convenient to define $Y_e = \max\{Y_u, Y_v\}$ to be the \textit{arrival time} of the edge $e$.

Fix a graph $G$ with a fractional matching $\bm{x} = (x_e)_{e \in E}$.
In order to define our algorithms, we first introduce the some additional generic terminology for an RCRS. Given $e=(u,v) \in E$, we say that $v$ \textit{selects} $u$ (respectively, $u$ selects $v$), provided $Y_u < Y_v$ (respectively, $Y_v < Y_u$) and $e$ is matched by the RCRS. Moreover, given $y \in [0,1]$,
if $v$ selects $u$ and $Y_u < Y_v \le y$, then we say that $v$ selects $u$ \textit{by time} $y$. 
We say that $u$ is \textit{matched by time} $y$, if there exists some $v \in N(u)$ such that $u$ selects $v$ or $v$ selects $u$ (with either event occurring by time $y$). Finally, if a vertex $u$ is \textit{not} matched by time $y$,
then we say that $u$ is \textit{safe at time} $y$. Using this terminology, we introduce the following indicator random variables:
\begin{itemize}
\item $M_{v \rightarrow u}(y) =1$ if and only if $v$ selects $u$ by time $y$;
\item $M_{u}(y)=1$ if and only if $u$ is matched by time $y$;
\item $S_u(y) =1$ if and only if $u$ is safe at time $y$.
\end{itemize}
Observe that $M_{u}(y) = \sum_{v \in N(u)} ( M_{v \ra u}(y) + M_{u \ra v}(y))$ and $S_{u}(y) = 1 - M_{u}(y)$.

\subsection{Selection Function for General Odd-girth}

Suppose that $G$ has odd-girth $g \ge 3$, where $g := \infty$ if $G$ has no odd-cycles (and thus is
bipartite). We refer to $c(y): [0,1] \rightarrow [0,1]$
as a \textit{selection function}, provided the following conditions hold:
\begin{enumerate}
\item $c$ is decreasing on $[0,1]$.
\item There exists a constant $C > 0$,
such that $c(y) \ge C$ for all $y\in[0,1]$.
\item For any $t \in[0,1]$, we have that
\begin{align} \label{eqn:designOfC}
c(t) \le 1-\frac1t\int_0^t 2 (c(y) y  +\phi_g(y)) dy,
\end{align}
where $\phi_g(y):= \frac{y^{g-1}}{(g-1)!}$, and $\phi_g(y):=0$ if
$g = \infty$. 
\end{enumerate}
The key condition is~\eqref{eqn:designOfC}, which was the goal of our framework as described in the introduction in~\eqref{eqn:goal}.  The first two conditions are minor technicalities useful for discretization and sampling respectively, which are naturally satisfied when we solve the differential equation implied by~\eqref{eqn:designOfC}.

We now derive in closed form the selection function to be used for a graph with given odd-girth $g$.
Let $c(t)$ be the general solution to the differential equation
\begin{equation} \label{eqn:differential_selection}
    c'(t) = \frac{(1 - 2 (t c(t) + \phi_g(t)) - c(t))}{t}
\end{equation}
After solving for \eqref{eqn:differential_selection}, we get a closed form solution in terms of the \textit{upper incomplete gamma function} $\Gamma(s,z):= \int_{z}^{\infty} \zeta^{s-1} e^{-\zeta} d\zeta$ for $s > 0$ and $z \in \mb{R}$. (Here $\Gamma(0,z)$ is the usual \textit{gamma function}). By choosing the constant of integration appropriately, we get the following
function
\begin{equation} \label{eqn:optimal_selection_function}
c(y) = \frac{1}{2y} - \frac{\sfe^{-2 y}}{2y} - \frac{\sfe^{-2 y} ( \Gamma(g, -2y) - \Gamma(g, 0))}{2^{g-1} y(g -1)!},
\end{equation}
where $c(0):=1$, and $\lim_{y \rightarrow 0^+} c(y) =1$. Clearly, $c$ is a selection function, as $c(0)=1$, $c(1) > 0$, and $c$ is decreasing since
the right-hand side of \eqref{eqn:differential_selection} is negative for all $t \in (0,1]$. The integral equation \eqref{eqn:designOfC} can be seen to hold
after rearranging \eqref{eqn:differential_selection} and integrating.

For each odd $g \ge 3$, define
\begin{equation} \label{eqn:alpha_definition}
    \alpha_g := 2 \int_{0}^{1} c(y) y dy,
\end{equation}
where $\alpha_{\infty}:= (1+\sfe^{-2})/2$.
There exists a closed form solution to \eqref{eqn:alpha_definition}
in terms of the incomplete gamma function. We state this in \Cref{appendix} in expression \eqref{eqn:gamma_explicit}, and
also prove the following:
\begin{proposition} \label{prop:increasing_alpha}
    $(\alpha_{2k+1})_{k=1}^{\infty}$ is an increasing sequence which converges to $\alpha_{\infty}$.
\end{proposition}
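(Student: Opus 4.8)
The plan is to work directly with the closed form \eqref{eqn:optimal_selection_function} for $c(y)$, and to control the $g$-dependence through the remainder term
$R_g(y) := \frac{\sfe^{-2y}(\Gamma(g,-2y)-\Gamma(g,0))}{2^{g-1}y(g-1)!}$,
which is the only part of $c(y)$ that depends on $g$. Plugging into \eqref{eqn:alpha_definition}, we get
$\alpha_g = 2\int_0^1 c(y)y\,dy = (1+\sfe^{-2})/2 - 2\int_0^1 R_g(y)y\,dy = \alpha_\infty - 2\int_0^1 R_g(y)y\,dy$,
so the proposition reduces to showing (i) $\int_0^1 R_g(y)y\,dy > 0$ for all odd $g\ge 3$, (ii) $\int_0^1 R_g(y)y\,dy$ is strictly decreasing in $g$ along odd values, and (iii) $\int_0^1 R_g(y)y\,dy \to 0$ as $g\to\infty$. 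The cleanest route to analyzing $R_g$ is to expand $\Gamma(g,-2y)-\Gamma(g,0)$: using the series/integral representation of the incomplete gamma function, one can write this difference as an explicit convergent power series in $y$ with all-positive coefficients (essentially $\int_{-2y}^0 \zeta^{g-1}\sfe^{-\zeta}d\zeta$, suitably massaged), which makes positivity of $R_g$ transparent and also exposes the $1/(g-1)!$ and $2^{-(g-1)}$ prefactors that drive the decay.

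For step (iii), the bound I would use is the crude but sufficient estimate $\phi_g(y) = y^{g-1}/(g-1)! \to 0$ uniformly on $[0,1]$ as $g\to\infty$; since $R_g$ is, up to bounded factors, an antiderivative-type expression built from $\phi_g$ (indeed $R_g$ arises precisely from integrating the $\phi_g(t)$ source term in \eqref{eqn:differential_selection}), one gets $\sup_{y\in[0,1]} R_g(y) \le \frac{C'}{(g-1)!}$ for an absolute constant $C'$, hence $\int_0^1 R_g(y)y\,dy \to 0$ and therefore $\alpha_g \to \alpha_\infty$. A complementary and perhaps more robust argument for convergence: since $c(y)$ solves \eqref{eqn:differential_selection}, it depends continuously (indeed monotonically) on the forcing term $2\phi_g$, and $\|2\phi_g\|_{L^1[0,1]}\to 0$; a Grönwall-type estimate on the linear ODE \eqref{eqn:differential_selection} then gives $\|c_g - c_\infty\|_\infty \to 0$, where $c_\infty(y) = (1-\sfe^{-2y})/(2y)$, and integrating against $2y$ gives $\alpha_g\to\alpha_\infty$.

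For the monotonicity statement (ii), which is the technical heart, I would avoid manipulating the gamma-function closed form and instead argue at the level of the ODE. Write $c_g$ for the solution of \eqref{eqn:differential_selection} with odd-girth parameter $g$. The key observation is that $\phi_g(y) = y^{g-1}/(g-1)!$ is, for each fixed $y\in(0,1)$, strictly decreasing in $g$ (this is where $y<1$ is used: $y^{g-1}/(g-1)!$ is decreasing in $g$ on $(0,1]$ since both $y^{g-1}$ decreases and $(g-1)!$ increases). So the forcing term in \eqref{eqn:differential_selection} is pointwise larger for smaller $g$. Rewriting \eqref{eqn:differential_selection} as $t c'(t) + (1+2t)c(t) = 1 - 2\phi_g(t)$, i.e.\ a linear first-order ODE with the same homogeneous part for every $g$, a standard comparison/variation-of-parameters argument shows that a pointwise-larger right-hand side yields a pointwise-larger solution (with the same initial condition $c(0)=1$): concretely, $d := c_{g'} - c_g$ for $g' > g$ satisfies $t d'(t) + (1+2t)d(t) = 2(\phi_g(t) - \phi_{g'}(t)) \ge 0$ with $d(0)=0$, and the integrating factor for this linear equation is positive, forcing $d(y) \ge 0$ on $[0,1]$, with strict inequality on $(0,1]$. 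Integrating $d$ against the positive weight $2y$ gives $\alpha_{g'} > \alpha_g$, and combined with the convergence in step (iii) this establishes that $(\alpha_{2k+1})_{k\ge1}$ is strictly increasing with limit $\alpha_\infty$.

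The main obstacle I anticipate is making the ODE-comparison argument fully rigorous near the singular point $t=0$: equation \eqref{eqn:differential_selection} has a $1/t$ singularity, so one cannot invoke a textbook comparison theorem off the shelf; I would handle this by first establishing that $c_g$ extends continuously to $t=0$ with $c_g(0)=1$ and $c_g'(0)$ finite (which follows from \eqref{eqn:optimal_selection_function} and the stated limit $\lim_{y\to0^+}c(y)=1$), then running the comparison on $[\delta,1]$ with the integrating factor and letting $\delta\to0$, using that the difference $d$ and its defining relation are well-behaved on $[0,1]$. An alternative that sidesteps the singularity entirely is to use the integrated form \eqref{eqn:designOfC} with equality (which holds for our $c$) as a fixed-point / monotone-iteration characterization of $c_g$, and run the monotonicity comparison there; either way the positivity of the relevant integrating factor or kernel is what needs to be checked carefully.
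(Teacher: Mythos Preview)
Your proposal is correct and complete; the monotonicity via ODE comparison goes through cleanly once you multiply by the integrating factor $t\sfe^{2t}$, since $(t\sfe^{2t}d(t))' = 2\sfe^{2t}(\phi_g(t)-\phi_{g'}(t))\ge 0$ is continuous on $[0,1]$ and $t\sfe^{2t}d(t)\to 0$ as $t\to 0^+$ (as $d$ is bounded), so the singularity is harmless.

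Your route differs from the paper's. The paper evaluates the integral $\int_0^1 R_g(y)y\,dy$ in closed form to get $\alpha_g = \alpha_\infty - s_g/(g-1)!$ with $s_g := \tfrac{2}{g} - \tfrac{\Gamma(g,-2)-\Gamma(g,0)}{2^{g-1}\sfe^2}$, and then asserts (leaving details to the reader) that $(s_{2k+1})_k$ is a decreasing sequence of positive terms. You instead bypass the gamma-function algebra entirely and work at the level of the ODE~\eqref{eqn:differential_selection}: because the forcing $\phi_g(y)=y^{g-1}/(g-1)!$ is pointwise decreasing in $g$ on $(0,1]$, the linear comparison gives $c_{g'}(y)\ge c_g(y)$ pointwise, and integrating against $2y$ yields $\alpha_{g'}>\alpha_g$ directly. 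Your argument is more self-contained and explains \emph{why} monotonicity holds (smaller forcing $\Rightarrow$ larger solution $\Rightarrow$ larger $\alpha$), whereas the paper's approach has the advantage of producing the explicit formula~\eqref{eqn:gamma_explicit}, which is used elsewhere to compute the numerical values $\alpha_3,\alpha_5,\alpha_7$. For convergence both arguments are essentially the same, driven by the $1/(g-1)!$ decay.
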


\section{General Framework for Proving \Cref{thm:recursive_rcrs}} \label{sec:recursive_rcrs}

Let us suppose that $G$ has odd-girth $g \ge 3$, and $c$ is an arbitrary selection function.
Fix $T \in \mb{N}$ to be a \textit{discretization constant}, and $\delta \in [0,1)$ to be a \textit{sampling parameter}. Note that we can take $\delta = 0$ when proving the \textit{existence} of an RCRS with the selectability claimed in \Cref{thm:recursive_rcrs} (and ignore issues of computational efficiency). We define an RCRS for each $T$ and  $\delta$, and show
that its selectability on $G$ is $(1 - o_{T,\delta}(1))\int_{0}^{1} 2 c(y) y dy$
(here $o_{T,\delta}(1)$ is a function which tends to $0$ as $T \rightarrow \infty$ and $\delta \rightarrow
 0$).  \Cref{thm:recursive_rcrs} then follows by taking $c$ as in \eqref{eqn:optimal_selection_function}.

For each $j=0, \ldots ,T$, define $t_j=j/T$.
We divide $(0,1]$ into $T$ intervals, $I_j:=(t_{j-1},t_j]$ for each $1 \le j \le T$. 
We define our RCRS in $T$ \textit{phases}, where in phase $j$ the RCRS
processes vertices $v$ which arrive in the interval $I_j$ (i.e., $Y_v \in I_j$).  We ignore the measure-zero event of an arrival time being $0$, so that all vertex arrivals occur during a phase. Our RCRS is defined
recursively over indices $j=0,1,\ldots,T-1$.  To elaborate, assuming we've defined an RCRS for vertices which arrive up until
the end of phase $I_j$ (time $t_{j}$), we extend the RCRS to vertices which arrive in the interval $I_{j+1}$, thus defining an RCRS for
vertices which arrive up until time $t_{j+1}$. 

Suppose that $v$ arrives at time $Y_v = y$ where $y \in I_{j+1}$.
Assume that $F_v = u$ and $Y_u < y$ so that $(u,v)$ is active. At this point, we would like to compute $\bE[S_u(t_{j}) \mid Y_u<t_{j}<Y_v]$, which is the probability that $u$ is safe at time $t_{j}$, given $Y_u<t_{j}<Y_v$ (here $\bE[S_u(t_{0}) \mid Y_u<t_{0}<Y_v] :=1$). Observe that this probability is well-defined, as it is determined by the decisions of the RCRS up until time $t_j$. However, since we want our RCRS to be polynomial-time, exact computation is not possible, and so we must we instead estimate these values: suppose that the \textit{estimates} $\hS(i):= (\hS_{v \ra u }(i))_{u,v \in V}$ were computed for $i=0, \ldots ,j-1$ in the previous phases,
where $\hS_{v \ra u }(0):=1$ for all $u,v \in V$. Then, we compute an estimate $\hS_{v \ra u }(j)$
of $\bE[S_u(t_{j}) \mid Y_u<t_{j}<Y_v, (\hS(i))_{i=0}^{j-1}]$
via Monte-Carlo sampling by simulating the execution of
the RCRS up until time $t_j$. Specifically, we take $Q \in \mb{N}$ independently drawn executions of the RCRS up until time $t_j$ and define $\hS_{v \ra u }(j)$ to
be the fraction of executions where $u$ was safe, i.e.\ not matched. (Here the specific value of $Q$ is defined later in \Cref{lem:sampling_step}.) Crucially, in these simulations, the previously-computed estimates $(\hS(i))_{i=0}^{j-1}$ are used, and the arrival time of $u$ (respectively $v$) is drawn uniformly from $[0,t_j)$ (respectively, $(t_j,1]$). We emphasize that we compute these directed estimates $\hS_{v \ra u }(j)$ and $\hS_{u \ra v}(j)$ for all edges $(u,v)\in E$, irrespective of whether $v$ actually arrived during $I_{j+1}$ with $F_v=u$, because these estimates are needed for Monte-Carlo sampling in future phases.
We say that $\hS_{v \ra u }(j)$ is a $\delta$-\textit{approximation} of $\bE[S_u(t_{j}) \mid Y_u<t_{j}<Y_v, (\hS(i))_{i=0}^{j-1}]$ if
\begin{equation}\label{eqn:IHsamplingError}
1-\delta \le \frac{\hS_{v \ra u }(j)}{\bE[S_u(t_{j}) \mid Y_u<t_{j}<Y_v, (\hS(i))_{i=0}^{j-1}]}\le 1+\delta. 
\end{equation}
If we do not care about computational efficiency, then we can
take $\delta =0$ and \textit{define} $\hS_{v \ra u }(j)$ to be
$\bE[S_u(t_{j}) \mid Y_u<t_{j}<Y_v]$ for each $j=0, \ldots , T-1$
and $u,v \in V$.

Having computed the estimate $\hS_{v \ra u}(j)$, the RCRS draws $A_{v \ra u}$ from $\Ber \left(\min\left( \frac{c(y)}{\hS_{v \ra u }(j)}\frac{1-\delta}{1+\frac1{CTy}}, 1\right) \right)$ independently (technically, since $\hS_{v \ra u}(j)$ is sampled, the Bernoulli parameter here is a random variable, however the reader should think 
of $\frac{c(y)}{\hS_{v \ra u }(j)}\frac{1-\delta}{1+\frac1{CTy}} \approx \frac{c(y)}{\bE[S_u(t_{j}) \mid Y_u<t_{j}<Y_v]}$). If $A_{v \ra u}=1$ and $u$ is safe at time $y$,
then $e$ is added to the matching. The RCRS then processes the remaining vertices in phase
$j+1$ before continuing to the next phase. Below is a formal description of the $T$ phases
of the RCRS.

\begin{algorithm}[H]
\caption{Recursive RCRS}
\label{alg:recursive_rcrs}
\begin{algorithmic}[1] 
\Require $G=(V,E)$, $\bm{x}=(x_e)_{e \in E}$, $T,Q \in \mb{N}$, and $\delta \in (0,1)$.
\Ensure subset of active edges forming a matching $\scr{M}$
\State $\scr{M} \leftarrow \emptyset$

\For{$j=0, \ldots ,T-1$}

\If{$j=0$}
\State Set $\hS_{v \ra u }(0) =1$ for each $u, v \in V$.

\ElsIf{$j \ge 1$}
\State For each $u, v \in V$, using $Q$ independently drawn executions of the RCRS up until time $t_j$, compute an estimate $\hS_{v \ra u }(j)$ of $\bE[S_u(t_{j}) \mid Y_u<t_{j}<Y_v, (\hS(i))_{i=0}^{j-1}]$.

\EndIf

\For{arriving vertices $v$ with $Y_v \in I_{j+1}$, and $F_v \neq \perp$}
\State Set $F_v = (u,v)$ and $y = Y_v$.

\State Draw $A_{v \ra u}$ from $\Ber \left(\min \left( \frac{c(y)}{\hS_{v \ra u }(j)}\frac{1-\delta}{1+\frac1{CTy}}, 1\right) \right)$ independently.
\If{$u$ is unmatched, $A_{v \ra u}=1$, and $Y_{u} < y$}
\State $\scr{M}\leftarrow \scr{M} \cup \{(u,v)\}$
\EndIf

\EndFor
\EndFor

\State \Return $\scr{M}$.

\end{algorithmic}
\end{algorithm}
\begin{remark}
The estimates $\hS(j) := (\hS_{v \ra u }(j))_{u, v \in V}$ are computed once at the start of the phase, and then never updated again. Moreover, since the estimates $(\hS(j))_{=0}^{T-1}$ are computed using independent executions of \Cref{alg:recursive_rcrs} on copies of $G$, they are independent from the random variables $(F_v)_{v \in V}$ and $(Y_v)_{v \in V}$ of $G$.
\end{remark}

For each $0 \le j \le T-1$, observe that \Cref{alg:recursive_rcrs} uses the randomly drawn estimates $\hS(j) = (\hS_{v \ra u }(j))_{u, v \in V}$ in phase $j+1$, even when the estimates happen to be \textit{bad} (i.e., they aren't $\delta$-approximations). Moreover, if the estimates are bad, then we can't easily control the $A_{v \ra u}$ random variables,
which makes analyzing the RCRS challenging. Instead, we only analyze the algorithm when the estimates are good,
and then argue that the estimates are good with sufficiently high probability. Our goal for now is to prove that $\bE[M_{u \ra v}(y) \mid Y_u<Y_v=y,F_v=u] = (1 - o_{T,\delta}(1)) c(y)$ for all edges $(u,v)$ and all arrival times $y\in(0,1]$, assuming the sampled estimates are always good.

More formally, let $B_j$ be the (bad) event in which $\hS_{v \ra u }(j)$
is \textit{not} a $\delta$-approximation for some $u,v \in V$. Observe that
$B_j$ is determined by $(\hS(i))_{i=0}^j$ (i.e., $B_j$ 
is $(\hS(i))_{i=1}^j$-measurable), and so given $(\hS(i))_{i=0}^j$, we can always check whether an \textit{instantiation} $\hs(j)$ of $\hS(j)$
satisfies $B_j$ or not.
We first analyze \Cref{alg:recursive_rcrs} in phase $j +1$ when $(\hS(i))_{i=0}^j$ satisfies $\cap_{i=0}^{j} \neg B_i$.
Afterwards, we show that this is enough to prove \Cref{thm:recursive_rcrs},
as we argue that $\mb{P}[ \cup_{i=0}^{T-1} B_i] \le \delta$ assuming $Q$ is sufficiently large (see \Cref{lem:sampling_step}). Let us now consider the following hypothesis, dependent on $j=0, \ldots , T-1$, which we shall afterwards prove inductively: 

For each instantiation $(\hs(i))_{i=0}^j$ of $(\hS(i))_{i=0}^j$ which satisfies $\cap_{i=0}^{j} \neg B_i$, and each $u,v \in V$, $y \in I_{j+1}$,
\begin{align} \label{eqn:IHmatchingProb}
\frac{1-\delta}{1+\delta}\left( 1-\frac{4}{CTy} \right) c(y) \le \bE[M_{v \ra u}(y) \mid Y_u<Y_v, Y_v=y,F_v=u,
(\hS(i))_{i=0}^{j} = (\hs(i))_{i=0}^j
]\le c(y).
\end{align}

\begin{lemma}[Base Case] \label{lem:base_case}
For $j=0$, \eqref{eqn:IHmatchingProb} holds.
\end{lemma}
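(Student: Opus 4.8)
The plan is to unwind \Cref{alg:recursive_rcrs} in its first phase, where no sampling has taken place yet. For $j=0$ the algorithm sets $\hS_{v \ra u}(0) = 1$ for every $u,v \in V$, so the unique instantiation $\hs(0)$ of $\hS(0)$ is the all-ones vector; it automatically lies in $\neg B_0$, since $1$ is trivially a $\delta$-approximation of $\bE[S_u(t_0) \mid Y_u < t_0 < Y_v] = 1$. Hence it suffices to verify \eqref{eqn:IHmatchingProb} for this instantiation: fixing an arbitrary $(u,v) \in E$ and $y \in I_1 = (0, 1/T]$, we must bound $\bE[M_{v\ra u}(y) \mid Y_u < Y_v, Y_v = y, F_v = u]$ between $\tfrac{1-\delta}{1+\delta}\big(1 - \tfrac{4}{CTy}\big)c(y)$ and $c(y)$.

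First I would condition on $Y_u < Y_v = y$ and $F_v = u$, which makes $(u,v)$ active with arrival time $y$, and examine how \Cref{alg:recursive_rcrs} processes $v$. It draws $A_{v\ra u} \sim \Ber(p)$ with $p = \min\big(\tfrac{c(y)}{\hS_{v\ra u}(0)} \cdot \tfrac{1-\delta}{1 + 1/(CTy)}, 1\big) = \min\big(c(y)\cdot\tfrac{1-\delta}{1+1/(CTy)}, 1\big)$. Since $c$ maps into $[0,1]$ we have $c(y) \le 1$, and $\tfrac{1-\delta}{1+1/(CTy)} < 1$, so the minimum is never attained by $1$: $p = c(y)\cdot\tfrac{1-\delta}{1+1/(CTy)} \le c(y)$. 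On the conditioning event, the edge $(u,v)$ is added to the matching exactly when $u$ is unmatched at the instant $v$ is processed — call this event $R$ — and $A_{v\ra u} = 1$; and since arrival times are a.s.\ distinct, no edge incident to $u$ other than $(u,v)$ can arrive at time exactly $y$, so $R$ is determined entirely by the choices, arrival times, and acceptance bits of edges arriving strictly before $y$. In particular $A_{v\ra u}$, a freshly drawn bit, is independent of $R$, so $\bE[M_{v\ra u}(y) \mid \cdots] = q\,p$, where $q := \bP[R \mid \cdots] \in [0,1]$.

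The upper bound is then immediate: $\bE[M_{v\ra u}(y) \mid \cdots] = qp \le p \le c(y)$. For the lower bound, I would note that it is vacuous in the base case. Since $c$ is decreasing and maps into $[0,1]$, we have $C \le \min_{y \in [0,1]} c(y) = c(1) \le 1$; hence $CTy \le C \cdot T \cdot \tfrac1T = C \le 1$ because $y \le 1/T$, so $\tfrac{4}{CTy} \ge \tfrac{4}{C} \ge 4 > 1$ and thus $1 - \tfrac{4}{CTy} < 0$. Therefore the left-hand side $\tfrac{1-\delta}{1+\delta}\big(1 - \tfrac{4}{CTy}\big)c(y)$ of \eqref{eqn:IHmatchingProb} is strictly negative, while $\bE[M_{v\ra u}(y) \mid \cdots] = qp \ge 0$.

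I do not expect any genuine obstacle: the base case is easy precisely because the phase-$1$ estimates $\hS_{v\ra u}(0)$ are exact (equal to $1$), so the acceptance parameter is a clean expression bounded by $c(y)$, and because the error-absorbing slack factor $1 - 4/(CTy)$ in \eqref{eqn:IHmatchingProb} is negative for every $y \le 1/T$. The only points that need (minor) care are recognizing that the $\min$ with $1$ is inactive since $c \le 1$, the conditional independence of $A_{v\ra u}$ from the event $R$, and the fact that $\neg B_0$ holds automatically. All of the substantive work — lower-bounding the probability that $u$ is safe via the matching-polytope constraint and the correlation-decay term $\phi_g$, and controlling the sampling error through the events $B_j$ — is deferred to the inductive step.
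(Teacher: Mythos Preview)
Your proof is correct. The upper bound argument matches the paper's: both observe that $\hS_{v\ra u}(0)=1$ makes the Bernoulli parameter equal to $c(y)\cdot\tfrac{1-\delta}{1+1/(CTy)}\le c(y)$, and that $M_{v\ra u}(y)$ factors as (safety of $u$) times (acceptance bit), yielding the bound $c(y)$.

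For the lower bound, however, you take a shorter route than the paper. The paper actually bounds the safety probability $\bE[S_u(y)\mid Y_u<y<Y_v]\ge 1-2/T$ by arguing that at most one other neighbor $w$ can match $u$ before time $y\le 1/T$, and then pushes through the chain $\tfrac{1-2/T}{1+1/(CTy)}\ge\tfrac{1-3/(CTy)}{1+1/(CTy)}\ge 1-\tfrac{4}{CTy}$. You instead observe that since $y\le 1/T$ and $C\le c(1)\le 1$, we have $CTy\le 1$ and hence $1-\tfrac{4}{CTy}<0$, so the lower bound in \eqref{eqn:IHmatchingProb} is vacuous. This is valid and more economical for the base case; the paper's explicit safety bound is not needed here, though the same style of argument does reappear in the inductive step (\Cref{lem:discretization_step}), which may be why the authors chose to preview it.
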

\begin{proof}
First observe that when $j =0$, each $\hS_{v \ra u }(0)$ is a $\delta$-approximation by definition, 
and so we can ignore
the conditioning on $(\hS(i))_{i=0}^j = (\hs(i))_{i=0}^{j}$ in \eqref{eqn:IHmatchingProb}.
Now, observe that if $y\in I_1$, then
\begin{align*}
\bE[M_{v\to u}(y) \mid Y_u<Y_v=y,F_v=u]
&=\bE[S_u(y) \mid Y_u<y<Y_v]c(y)\frac{1-\delta}{1+\frac1{CTy}}.
\end{align*}
As such, the upper bound of \eqref{eqn:IHmatchingProb} holds.  We claim that $\bE[S_u(y) \mid Y_u<y<Y_v]\ge1-2/T$ holds for the lower bound.
To see this, let us first condition on $Y_u <y < Y_v$. Observe
then that in order for $S_u(y)=0$, we would need another vertex $w \in N(u) \setminus \{v\}$
to arrive before time $y$ and match to $u$. This occurs with probability
at most $2 y$, as either $w$ selects $u$, or $u$ selects $w$, and there is at most one such $w$ in expectation (recall that $\sum_w x_{u,w}\le1$). Since $y \in I_1$,
this probability is at most $2/T$, which justifies the above equation.
This completes the base case because
$$
\frac{1-2/T}{1+1/(CTy)}\ge \frac{1-3/(CTy)}{1+1/(CTy)}\ge1-\frac4{CTy}.
$$
\end{proof}

For a time step $1 \le k \le T-1$, we first apply strong induction assuming~\eqref{eqn:IHmatchingProb} holds for all $0\le j\le k-1$, to establish the following lower bound on vertices being safe at time $t_k$.

\begin{lemma} \label{lem:continuous_induction_step}
Fix $1 \le k \le T-1$. Suppose that \eqref{eqn:IHmatchingProb} holds
for all $0 \le j \le k-1$.
Then for each instantiation $(\hs(i))_{i=0}^{k-1}$ of $(\hS(i))_{i=0}^{k-1}$ which satisfies $\cap_{i=0}^{k-1} \neg B_i$, we have that
$\bE[S_u(t_k) \mid Y_u<t_k<Y_v, (\hS(i))_{i=0}^{k-1} = (\hs(i))_{i=0}^{k-1}] \ge c(t_k)$
for each $u,v \in V$. 
\end{lemma}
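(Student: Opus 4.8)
The plan is to establish the bound $\bE[S_u(t_k) \mid Y_u<t_k<Y_v, (\hS(i))_{i=0}^{k-1}=(\hs(i))_{i=0}^{k-1}] \ge c(t_k)$ by combining the union bound over $u$'s neighbors, the inductive selection guarantee~\eqref{eqn:IHmatchingProb} applied at each earlier phase, and the key structural property~\eqref{eqn:designOfC} of the selection function. First I would write $S_u(t_k) = 1 - M_u(t_k)$ and expand $M_u(t_k) = \sum_{w \in N(u)} (M_{w \ra u}(t_k) + M_{u \ra w}(t_k))$, so that it suffices to upper-bound $\bE[M_u(t_k) \mid Y_u < t_k < Y_v, (\hs(i))_{i=0}^{k-1}]$ by $1 - c(t_k)$.

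The next step is to bound each term $\bE[M_{w \ra u}(t_k) + M_{u \ra w}(t_k) \mid \cdots]$. Here I would use the refined induction hypothesis: conditioned on an edge $(u,w)$ being active and arriving at time $y$ (with the two sub-cases $Y_u < Y_w = y$ or $Y_w < Y_u = y$), it is selected with probability at most $c(y)$ by the upper bound in~\eqref{eqn:IHmatchingProb}. Integrating over the arrival time $y \in [0, t_k)$ — where, as in the informal sketch, the conditional density of the edge arrival time $\max\{Y_u, Y_w\}$ given both endpoints arrive before $t_k$ is $y \mapsto 2y/t_k^2$ — and noting that the events "$(u,w)$ is accepted" for distinct $w$ are mutually exclusive (since $u$ can be matched at most once), yields something like $\bE[M_u(t_k) \mid \cdots] \le \sum_w t_k \cdot x_{u,w} \int_0^{t_k} c(y)\, \frac{2y}{t_k^2}\, dy \le \frac{1}{t_k} \int_0^{t_k} 2 c(y) y\, dy$, using $\sum_w x_{u,w} \le 1$. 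But this is only the bipartite bound; for odd-girth $g$ I must also account for the fact that dropping the conditioning on $Y_v = t_k$ can only be done at the cost of the correlation-decay error term, i.e.~I need the $+\phi_g$ term. This is where I would invoke (a discretized version of) the correlation-decay inequality~\eqref{eqn:simple_correlation_decay}, giving $\bE[M_u(t_k) \mid \cdots] \le \frac{1}{t_k}\int_0^{t_k} 2(c(y)y + \phi_g(y))\, dy$. Finally, applying the defining inequality~\eqref{eqn:designOfC} of the selection function at $t = t_k$ gives exactly $\bE[M_u(t_k)\mid\cdots] \le 1 - c(t_k)$, hence $\bE[S_u(t_k)\mid\cdots] \ge c(t_k)$, as desired.

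One technical point to handle carefully: the induction hypothesis~\eqref{eqn:IHmatchingProb} is stated for arrival times $y \in I_{j+1}$ across the phases $j = 0, \dots, k-1$, so assembling the integral $\int_0^{t_k} c(y)(\cdots)dy$ requires summing the phase-by-phase contributions $\sum_{j=0}^{k-1} \int_{I_{j+1}} (\cdots)$; since the $I_j$ partition $(0, t_k]$, this reassembles into the single integral cleanly, though I must be careful that the conditioning event in~\eqref{eqn:IHmatchingProb} ($Y_u < Y_w = y$, $F_w = u$) matches what the law of total probability produces when I decompose the event that $u$ selects $w$ / $w$ selects $u$ by time $t_k$. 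I also need to track that conditioning on $(\hS(i))_{i=0}^{k-1} = (\hs(i))_{i=0}^{k-1}$ (an instantiation satisfying $\cap \neg B_i$) is exactly the conditioning carried by~\eqref{eqn:IHmatchingProb}, and that the simulated executions used to compute later estimates are independent of $G$'s randomness, so no circularity arises.

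The main obstacle I anticipate is rigorously justifying the correlation-decay step~\eqref{eqn:simple_correlation_decay} in the \emph{discretized and conditioned} setting — that is, showing that $\bE[M_u(t_k) \mid Y_u < Y_v = t_k, \cdots] - \bE[M_u(t_k) \mid Y_u < t_k, \cdots]$ is at most $\frac{1}{t_k}\int_0^{t_k}\frac{2y^{g-1}}{(g-1)!}dy$ (or its discrete analogue with $o_{T}(1)$ slack). This requires the structural argument sketched in the overview: decomposing over whether an even-length path of active edges between $u$ and $v$ exists, bounding the probability of such a path using odd-girth $\ge g$ together with the fact that there are at most $|V|$ active edges and arrivals are in random order, and verifying that the adversarial choice of $(F_v)_{v}$ does not break the bound. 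I expect this to be the content of the subsequent lemma (\Cref{sec:proving_continuous_induction_step}, bounding event $B_{v \ra u}$), which I would cite rather than reprove here; the present lemma is then essentially the bookkeeping that turns that estimate plus~\eqref{eqn:designOfC} into the clean statement $\bE[S_u(t_k)\mid\cdots]\ge c(t_k)$.
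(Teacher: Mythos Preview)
Your proposal is correct and matches the paper's approach closely: write $S_u = 1 - M_u$, bound $\bE[M_u(t_k) \mid Y_u < t_k]$ via the upper bound of~\eqref{eqn:IHmatchingProb} to get $\frac{1}{t_k}\int_0^{t_k}2c(y)y\,dy$, then control the difference $\bE[M_u(t_k) \mid Y_u < t_k < Y_v] - \bE[M_u(t_k) \mid Y_u < t_k]$ by the correlation-decay term $\frac{1}{t_k}\int_0^{t_k}\phi_g(y)\,dy$, and finally apply~\eqref{eqn:designOfC}. One organizational note: the correlation-decay step (the coupling with $G^{-v}$, the flipping-sequence event $B_{v\to u}$, and its probability bound) is not a separate lemma to be cited but is itself the main substance of proving \emph{this} lemma, carried out in \Cref{sec:proving_continuous_induction_step}; so your ``bookkeeping'' characterization undersells what has to be done here.
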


\Cref{lem:continuous_induction_step} is stated in such a way that we can ignore the technicalities caused by the discretization and sampling errors: it boils down to proving $\bE[S_u(t_k) \mid Y_u<t_k<Y_v]\ge c(t_k)$ for all $(u,v)$ (i.e.\ any vertex $u$ is "safe" for any other vertex $v$ to select at time $t_k$ w.p.~at least $c(t_k)$), assuming $\bE[M_{v\to u}(y)\mid Y_u<Y_v=y,F_v=u]\le c(y)$ for all $(u,v)$ and $y\le t_k$ (i.e.\ vertices did not get matched with too high of probabilities before time $t_k$, given by the upper bound of~\eqref{eqn:IHmatchingProb}).
This forms the core of our analysis, as outlined in the ``continuous induction'' argument in \Cref{sec:technical_overview}. We defer the proof of \Cref{lem:continuous_induction_step} to \Cref{sec:proving_continuous_induction_step}.

We now complete the induction assuming \Cref{lem:continuous_induction_step} is true.
This induction was unnecessary under the informal "continuous induction" argument from \Cref{sec:technical_overview}, as there was no error in the bounds
of \eqref{eqn:IHmatchingProb} (i.e., $\bE[M_{u \ra v}(y) \mid Y_u<Y_v=y,F_v=u] = c(y)$)
because it took $T=\infty$ and $\delta=0$.

\begin{lemma}[Inductive Step] \label{lem:discretization_step}
Fix $1 \le k \le T-1$. Suppose that \eqref{eqn:IHmatchingProb} holds
for all $0 \le j \le k-1$.  Then~\eqref{eqn:IHmatchingProb} holds for $j=k$.
\end{lemma}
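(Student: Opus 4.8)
The plan is to compute $\bE[M_{v\ra u}(y)\mid Y_u<Y_v=y,\,F_v=u,\,(\hS(i))_{i=0}^{k}=(\hs(i))_{i=0}^{k}]$ directly for $y\in I_{k+1}$, exploiting that, conditioned on this event $\mathcal E$, \Cref{alg:recursive_rcrs} processes $v$ at time $y$ in phase $k+1$ and sets $M_{v\ra u}(y)=A_{v\ra u}\cdot\bI[u\text{ is safe just before }v\text{ is processed}]$, where $A_{v\ra u}\sim\Ber(p)$ with $p=\min\big(\tfrac{c(y)}{\hs_{v\ra u}(k)}\tfrac{1-\delta}{1+1/(CTy)},\,1\big)$ is drawn freshly and hence is conditionally independent of the ``safe'' event given $\mathcal E$. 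Thus $\bE[M_{v\ra u}(y)\mid\mathcal E]=p\cdot\bP[u\text{ safe just before }v\mid\mathcal E]$, and the whole proof reduces to sandwiching each factor.

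First I would dispose of the $\min$. Since we are on $\cap_{i=0}^{k}\neg B_i$, the $\delta$-approximation property \eqref{eqn:IHsamplingError} gives $\hs_{v\ra u}(k)\ge(1-\delta)\,\bE[S_u(t_k)\mid Y_u<t_k<Y_v,(\hS(i))_{i=0}^{k-1}]$; and \Cref{lem:continuous_induction_step} (whose hypothesis is exactly our assumption that \eqref{eqn:IHmatchingProb} holds for $0\le j\le k-1$) bounds that expectation below by $c(t_k)\ge c(y)$, using that $c$ is decreasing. Hence $\tfrac{c(y)}{\hs_{v\ra u}(k)}\tfrac{1-\delta}{1+1/(CTy)}\le\tfrac1{1+1/(CTy)}<1$, so $p=\tfrac{c(y)}{\hs_{v\ra u}(k)}\tfrac{1-\delta}{1+1/(CTy)}$ exactly, and writing $q:=\bE[S_u(t_k)\mid Y_u<t_k<Y_v,(\hS(i))_{i=0}^{k-1}=(\hs(i))_{i=0}^{k-1}]$ we have $\hs_{v\ra u}(k)\in[(1-\delta)q,(1+\delta)q]$ with $q\ge c(t_k)\ge c(y)\ge C$.

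The heart of the argument is estimating $\bP[u\text{ safe just before }v\mid\mathcal E]$ in terms of $q$. The event $\{S_u(t_k)=1\}$ is measurable with respect to the instance randomness and $(\hS(i))_{i=0}^{k-1}$ only; moreover, on $\{Y_v>t_k\}$ no edge incident to $v$ is processed by time $t_k$, so conditioning on $F_v=u$ and on the exact value $Y_v=y$ (rather than just $Y_v>t_k$) does not change its probability, and—by the Remark that $\hS(k)$ is, given $(\hS(i))_{i=0}^{k-1}$, independent of the instance—conditioning on $\hS(k)$ is likewise irrelevant. Splitting on whether $Y_u<t_k$ (conditional probability $t_k/y$ given $Y_u<y$) and noting that $u$ is automatically safe at $t_k$ when $Y_u>t_k$, this yields $\bP[S_u(t_k)=1\mid\mathcal E]=\tfrac{t_k}{y}q+\tfrac{y-t_k}{y}$. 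Since $\{u\text{ safe just before }v\}\subseteq\{S_u(t_k)=1\}$ and the gap between the two events forces some active edge at $u$ to arrive during $(t_k,y]$—an event of probability at most $\tfrac2{Ty}$ by a union bound over candidate neighbors $w$ (in both arrival directions, using $\sum_w x_{u,w}\le1$ and $y-t_k\le1/T$)—I obtain $q-\tfrac2{Ty}\le\bP[u\text{ safe just before }v\mid\mathcal E]\le\tfrac{t_k}{y}q+\tfrac{y-t_k}{y}$.

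Finally I would chain the estimates, using $y-t_k\le1/T$, $y>t_k$ and $q\ge C$. For the upper bound, $\bE[M_{v\ra u}(y)\mid\mathcal E]=p\cdot\bP[\text{safe}]\le\tfrac{c(y)(1-\delta)}{(1-\delta)q(1+1/(CTy))}\big(q+\tfrac1{Ty}\big)=c(y)\tfrac{1+1/(Tyq)}{1+1/(CTy)}\le c(y)$. For the lower bound, $\bE[M_{v\ra u}(y)\mid\mathcal E]\ge\tfrac{c(y)(1-\delta)}{(1+\delta)q(1+1/(CTy))}\big(q-\tfrac2{Ty}\big)=\tfrac{c(y)(1-\delta)}{1+\delta}\cdot\tfrac{1-2/(Tyq)}{1+1/(CTy)}\ge\tfrac{c(y)(1-\delta)}{1+\delta}\big(1-\tfrac2{CTy}\big)\big(1-\tfrac1{CTy}\big)\ge\tfrac{c(y)(1-\delta)}{1+\delta}\big(1-\tfrac4{CTy}\big)$, which is exactly \eqref{eqn:IHmatchingProb} for $j=k$ (and trivial when the right-hand side is negative). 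The main obstacle is the third step: carefully peeling off the conditioning on $Y_v=y$, $F_v=u$ and $\hS(k)$ without changing $\bP[S_u(t_k)=1]$, together with separately controlling the probability that $u$ becomes matched during phase $k+1$ before $v$ arrives; everything else is bookkeeping of the multiplicative $(1\pm\delta)$ and $1/(CTy)$ error factors.
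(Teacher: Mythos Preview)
Your proposal is correct and follows essentially the same approach as the paper. Both arguments factor $\bE[M_{v\ra u}(y)\mid\mathcal E]$ as the Bernoulli parameter times the safe probability, invoke \Cref{lem:continuous_induction_step} to dispose of the $\min$ and to lower-bound $q$ by $c(t_k)\ge C$, split on whether $Y_u<t_k$ or $t_k\le Y_u<y$, and control the discrepancy over the interval $(t_k,y)$ by $O(1/(CTy))$ via a union bound. The only cosmetic difference is that the paper packages everything as bounds on the ratio $\bE[S_u(y)\mid Y_u<y<Y_v]\big/\bE[S_u(t_k)\mid Y_u<t_k<Y_v]$ (obtaining $\le 1+\tfrac{1}{CTy}$ and $\ge 1-\tfrac{3}{CT}$), whereas you bound $\bP[\text{safe}\mid\mathcal E]$ directly in terms of $q$; the final constants and logic coincide.
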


\begin{proof}[Proof of \Cref{lem:discretization_step}]
Fix $1 \le k \le T-1$, $y\in I_{k+1}$, $u, v \in V$, and an instantiation $(\hs(i))_{i=0}^k$ of $(\hS(i))_{i=0}^k$ which satisfies $\cap_{i=0}^k \neg B_i$. Now, since $\hs(k)$ satisfies $\neg B_k$,
we know that $\hs_{v \ra u }(k)$ is a $\delta$-approximation. That is,
\begin{equation} \label{eqn:delta_approx}
1-\delta \le \frac{\hs_{v \ra u }(k)}{\bE[S_u(t_{k}) \mid Y_u<t_{k}<Y_v, (\hS(i))_{i=0}^k = (\hs(i))_{i=0}^{k-1}]}\le 1+\delta.
\end{equation}
Applying \Cref{lem:continuous_induction_step} we have  $\bE[S_u(t_k) \mid Y_u<t_k<Y_v, (\hS(i))_{i=0}^{k-1} = (\hs(i))_{i=0}^{k-1}] \ge c(t_k)$, and since $c(y) \le c(t_k)$
as $c$ is decreasing, we know that
$
\frac{c(y)}{\hs_{v \ra u }(k)}\frac{1-\delta}{1+\frac1{CTy}} \le 1.
$
Thus, $A_{v \ra u} \sim \Ber \left(\frac{c(y)}{\hs_{v \ra u }(k)}\frac{1-\delta}{1+\frac1{CTy}} \right)$
(conditional on $(\hS(i))_{i=0}^{k-1} = (\hs(i))_{i=0}^{k-1}$).

For the remainder of the proof, we implicitly condition on
$(\hS(i))_{i=0}^{k-1} = (\hs(i))_{i=0}^{k-1}$ to save space.
Now, by the definition of \Cref{alg:recursive_rcrs},
\begin{align*}
\bE[M_{v\to u}(y) \mid Y_u<Y_v, Y_v=y,F_v=u]
&=\bE[S_u(y) \mid Y_u<y<Y_v] \cdot \bE[A_{v \ra u}] \\
&= \bE[S_u(y) \mid Y_u<y<Y_v]\frac{c(y)}{\hs_{v \ra u }(k)}\frac{1-\delta}{1+\frac1{CTy}}.
\end{align*}
Moreover, by applying each bound of \eqref{eqn:delta_approx}, 
$$
\bE[M_{v\to u}(y) \mid Y_u<Y_v=y,F_v=u]
\le\frac{\bE[S_u(y) \mid Y_u<y<Y_v]}{\bE[S_u(t_k) \mid Y_u<t_k<Y_v]}\frac{c(y)}{1+\frac1{CTy}},
$$
and
$$
\bE[M_{v\to u}(y)\mid Y_u<Y_v=y,F_v=u] \ge \frac{\bE[S_u(y)\mid Y_u<y<Y_v]}{\bE[S_u(t_k)\mid Y_u<t_k<Y_v]}\frac{1-\delta}{1+\delta}\frac{c(y)}{1+\frac1{CTy}}.
$$
Thus, to complete the proof, we must upper and lower bound the ratio $\frac{\bE[S_u(y)\mid Y_u<y<Y_v]}{\bE[S_u(t_k)\mid Y_u<t_k<Y_v]}$ by
$1+\frac{1}{CTy}$ and $1-\frac{3}{CT}$, respectively.
For the upper bound, since $S_u(y)\le S_u(t_k)$ for $y > t_k$, and $c$ is a selection function, $\bE[S_u(t_k)\mid Y_u<t_k<Y_v]\ge c(t_k)\ge C$.
\begin{align*}
\frac{\bE[S_u(y)\mid Y_u<y<Y_v]}{\bE[S_u(t_k)\mid Y_u<t_k<Y_v]}
&=\frac{t_k}{y}\frac{\bE[S_u(y)\mid Y_u<t_k<Y_v]}{\bE[S_u(t_k)\mid Y_u<t_k<Y_v]}+(1-\frac{t_k}y)\frac{\bE[S_u(y)\mid t_k\le Y_u<y<Y_v]}{\bE[S_u(t_k)\mid Y_u<t_k<Y_v]}
\\ &\le\frac{t_k}{y}+(1-\frac{t_k}y)\frac1C \le1+\frac{1}{CTy}
\end{align*}
where the final inequality uses that $t_k\ge y-1/T$.
The lower bound is more involved. We state the sequence of inequalities,
and provide an explanation directly following:
\begin{align*}
\frac{\bE[S_u(y)\mid Y_u<y<Y_v]}{\bE[S_u(t_k)\mid Y_u<t_k<Y_v]}
&=\frac{t_k}{y}\frac{\bE[S_u(y)\mid Y_u<t_k<Y_v]}{\bE[S_u(t_k)\mid Y_u<t_k<Y_v]}+(1-\frac{t_k}y)\frac{\bE[S_u(y)\mid t_k\le Y_u<y<Y_v]}{\bE[S_u(t_k)\mid Y_u<t_k<Y_v]}
\\ &\ge\frac{t_k}{y}\frac{\bE[S_u(t_k)\mid Y_u<t_k<Y_v]-1/T}{\bE[S_u(t_k)\mid Y_u<t_k<Y_v]}+(1-\frac{t_k}y)\frac{1-2y}{\bE[S_u(t_k)\mid Y_u<t_k<Y_v]}
\\ &\ge\frac{t_k}{y}(1-\frac{1}{CT})+(1-\frac{t_k}y)(1-2y)
\\ &=1-\frac{t_k}{yCT}-2(y-t_k) \ge1-\frac{3}{CT}
\end{align*}
We explain the first inequality.  The first term is because for $S_u(t_k)-S_u(y)=1$ (conditional on $Y_u<t_k$), we need another vertex to arrive in the interval $[t_k,y)$ and select $u$.  The sum of expectation of these events is at most $1/T$.  The second term is because for $S_u(y)$ to be 0, we need another vertex to arriving before $y$ and select $u$, or need $u$ to select another vertex that arrives before $y$.  The second inequality applies the lower bound of $C$ and the upper bound of $1$.
\end{proof}

Observe that \Cref{lem:continuous_induction_step} and \Cref{lem:discretization_step}
together complete the inductive step. Thus, \eqref{eqn:IHmatchingProb} holds for all $0 \le j \le T-1$. We now analyze the sampling used by \Cref{alg:recursive_rcrs}.
Note that if $\delta =0$ (and we ignore the issue of computational efficiency), then this part of the proof is unnecessary.

\begin{lemma}[Sample Complexity] \label{lem:sampling_step}
Given $\delta \in (0,1)$ and $T \in \mb{N}$, suppose that $Q \ge \frac{3}{C\delta^2}\log\frac{2Tn^2}{\delta}$ in \Cref{alg:recursive_rcrs}.
Then, $\mb{P}[ \cup_{j =0}^k B_j] \le \frac{\delta k}{T}$ for each $k=0, \ldots ,T-1$.
\end{lemma}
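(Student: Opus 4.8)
The plan is to prove the bound $\mb{P}[\cup_{j=0}^k B_j] \le \delta k/T$ by induction on $k$, using a union bound over $j$ and over the $n^2$ ordered pairs of vertices, and a Chernoff/Hoeffding concentration estimate for each individual Monte-Carlo estimate $\hS_{v \ra u}(j)$.

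First I would set up the inductive structure. For $k=0$ there is nothing to prove, since $B_0$ never occurs ($\hS_{v \ra u}(0)=1$ is exactly $\bE[S_u(t_0)\mid\cdots]=1$). For the inductive step, write $\mb{P}[\cup_{j=0}^{k} B_j] \le \mb{P}[\cup_{j=0}^{k-1} B_j] + \mb{P}[B_k \cap (\cap_{j=0}^{k-1}\neg B_j)]$, so by the inductive hypothesis it suffices to show $\mb{P}[B_k \cap (\cap_{j=0}^{k-1}\neg B_j)] \le \delta/T$. To bound this, I would condition on an instantiation $(\hs(i))_{i=0}^{k-1}$ of the earlier estimates that satisfies $\cap_{j=0}^{k-1}\neg B_j$; it is enough to show that, conditionally on any such instantiation, $\mb{P}[B_k] \le \delta/T$, and then integrate over the distribution of $(\hS(i))_{i=0}^{k-1}$ restricted to this event.

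The core estimate is the following. Fix an ordered pair $(u,v)$ and condition on the good instantiation $(\hs(i))_{i=0}^{k-1}$. The estimate $\hS_{v \ra u}(k)$ is the average of $Q$ i.i.d.\ Bernoulli indicators, each equal to $1$ iff $u$ is safe at time $t_k$ in an independent simulation run with the recorded estimates $(\hs(i))_{i=0}^{k-1}$, with $Y_u$ drawn uniformly from $[0,t_k)$ and $Y_v$ from $(t_k,1]$. Thus $\bE[\hS_{v\ra u}(k)] = p := \bE[S_u(t_k)\mid Y_u<t_k<Y_v, (\hS(i))_{i=0}^{k-1}=(\hs(i))_{i=0}^{k-1}]$, exactly the quantity we want to $\delta$-approximate. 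By \Cref{lem:continuous_induction_step} (whose hypothesis is met because $(\hs(i))_{i=0}^{k-1}$ satisfies $\cap_{j=0}^{k-1}\neg B_j$ and \eqref{eqn:IHmatchingProb} holds for all $j\le k-1$), we have $p \ge c(t_k) \ge C$. A multiplicative Chernoff bound for the sum of $Q$ i.i.d.\ Bernoulli($p$) variables then gives $\mb{P}[|\hS_{v\ra u}(k) - p| > \delta p] \le 2\exp(-Qp\delta^2/3) \le 2\exp(-QC\delta^2/3)$, using $p\ge C$. With $Q \ge \frac{3}{C\delta^2}\log\frac{2Tn^2}{\delta}$ this is at most $\frac{\delta}{Tn^2}$. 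Taking a union bound over all at most $n^2$ ordered pairs $(u,v)$ gives $\mb{P}[B_k \mid (\hS(i))_{i=0}^{k-1}=(\hs(i))_{i=0}^{k-1}] \le \delta/T$ for every good instantiation, and hence $\mb{P}[B_k \cap (\cap_{j=0}^{k-1}\neg B_j)] \le \delta/T$, completing the induction.

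The main obstacle is a conceptual bookkeeping one rather than a computational one: one must be careful that the simulations used to form $\hS_{v\ra u}(k)$ reuse the previously recorded estimates $(\hs(i))_{i=0}^{k-1}$ (and do not re-sample them), so that conditionally on this instantiation the $Q$ runs are genuinely i.i.d.\ and their common mean is exactly the conditional expectation appearing in \eqref{eqn:IHsamplingError}; this is what makes the Chernoff bound applicable and what lets \Cref{lem:continuous_induction_step} supply the lower bound $p\ge C$ needed to control the exponent uniformly. Everything else — the two-sided multiplicative Chernoff inequality and the union bounds over $j\le k$ and over vertex pairs — is routine.
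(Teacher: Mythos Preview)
Your proposal is correct and follows essentially the same approach as the paper: induction on $k$, conditioning on a good instantiation $(\hs(i))_{i=0}^{k-1}$, invoking \Cref{lem:continuous_induction_step} to get the lower bound $p\ge c(t_k)\ge C$, applying the multiplicative Chernoff bound to each $\hS_{v\ra u}(k)$, and union-bounding over the $n^2$ ordered pairs. The only cosmetic difference is that you decompose via $\mb{P}[\cup_{j\le k}B_j]\le\mb{P}[\cup_{j\le k-1}B_j]+\mb{P}[B_k\cap(\cap_{j\le k-1}\neg B_j)]$ whereas the paper writes the equivalent total-probability split; both reduce the inductive step to showing $\mb{P}[B_k\mid\cap_{j\le k-1}\neg B_j]\le\delta/T$.
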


\begin{proof}[Proof of \Cref{lem:sampling_step}]
We proceed inductively. Clearly, if $k=0$, then each instantiation of $\hS(0)$ satisfies $\neg B_0$ by definition, and so the base
case holds. Let us now prove the inductive step. Fix $1 \le k \le T-1$, and assume that $\mb{P}[ \cup_{j =0}^{k-1} B_j] \le \frac{\delta (k-1)}{T}$. Now, by applying trivial upper bounds,
\begin{align*}
\mb{P}[ \cup_{j =0}^k B_j]  &= \mb{P}[B_k \mid  \cap_{j =0}^{k-1} \neg B_j] \cdot \mb{P}[\cap_{j =0}^{k-1} \neg B_j]  + \mb{P}[\cup_{j =0}^k B_j \mid  \cup_{j =0}^{k-1} B_j] \cdot \mb{P}[\cup_{j =0}^{k-1} B_j] \\
                            &\le \mb{P}[B_k \mid  \cap_{j =0}^{k-1} \neg B_j] + \mb{P}[\cup_{j =0}^{k-1} B_j] \\
                            &\le \mb{P}[B_k \mid  \cap_{j =0}^{k-1} \neg B_j] + \frac{\delta (k-1)}{T},
\end{align*}
where the last line follows from the induction hypothesis. It suffices to prove
that $\mb{P}[B_k \mid  \cap_{j =0}^{k-1} \neg B_j]  \le \delta/T$. Now, if $(\hs(i))_{i=0}^{k-1}$ satisfies $\cap_{j =0}^{k-1} \neg B_j$, then \Cref{lem:continuous_induction_step} implies
that $\bE[S_u(t_k) \mid Y_u<t_k<Y_v, (\hS(i))_{i=0}^{k-1} = (\hs(i))_{i=0}^{k-1}] \ge c(t_k)$ for all $u,v \in V$. On the other hand,
we know that $c(t_k) \ge C > 0$, as $c$ is a selection function. Thus, since
we take $Q \ge \frac{3}{C\delta^2}\log\frac{2Tn^2}{\delta}$ samples,
we can apply the multiplicative Chernoff bound to $\hS_{v \ra u }(k)$ and
its expectation $\mu = \bE[S_u(t_{k}) \mid Y_u<t_{j}<Y_v, (\hS(i))_{i=0}^{k-1} = (\hs(i))_{i=0}^{k-1}] \ge C$, to get that $$\mb{P}[ |\hS_{v \ra u }(k) - \mu| \ge \delta \mu \mid (\hS(i))_{i=0}^{k-1} = (\hs(i))_{i=0}^{k-1}] \le \frac{\delta}{Tn^2}.$$
This holds for each $(\hs(i))_{i=0}^{k-1}$ which satisfies $\cap_{j =0}^{k-1} \neg B_j$,
and so $\mb{P}[|\hS_{v \ra u }(k) - \mu| \ge \delta \mu \mid \cap_{j =0}^{k-1} \neg B_j] \le \frac{\delta}{Tn^2}.$ By union bounding over all $u,v \in V$, we get that $\mb{P}[B_k \mid  \cap_{j =0}^{k-1} \neg B_j]  \le \delta/T$,
which completes the proof.
\end{proof}

In \Cref{lem:sampling_step} we analyzed how the number of samples $Q$ should be set as a function of parameters $T$ and $\delta$.  We now finish the proof of \Cref{thm:recursive_rcrs} by specifying how $T$ and $\delta$ should be set given a desired error threshold $\eps$.

\begin{proof}[Proof of \Cref{thm:recursive_rcrs}]
Fix $\eps > 0$. Suppose that $\scr{M}$ is the matching returned by \Cref{alg:recursive_rcrs} when
executed with parameters $T,Q \in \mb{N}$ and $\delta \in (0,1)$. We shall prove
that if $Q \ge \frac{3}{C\delta^2}\log\frac{2Tn^2}{\delta}$, then for all $e \in E$, 
\begin{equation} \label{eqn:edge_guarantee_rough}
    \frac{\mb{P}[e \in \scr{M}]}{x_e} \ge \frac{(1-\delta)^2}{1+\delta} \left(1 -  \frac{8}{C^2T} \right)  \int_0^1 2 c(y) y dy.
\end{equation}
By taking $T \ge 8/C^2 \eps$ and $\delta \le \eps/(3-\eps)$, this implies
that $\frac{\mb{P}[ e \in \scr{M}]}{x_e} \ge (1 - \eps)^2 \int_0^1 2 c(y) y dy$ for all $e \in E$,
as desired.

Fix $u,v \in V$ with $e =(u,v) \in E$. Let us first assume that the instantiation $(\hs(i))_{i=0}^{T-1}$ of $(\hS(i))_{i=0}^{T-1}$ satisfies $\cap_{i=0}^{T-1} \neg B_i$. In this case,
for each $j=0, \ldots ,T -1$, \eqref{eqn:IHmatchingProb} implies that for all $y \in I_{j+1}$,
$$\bE[M_{v\to u}(y) \mid Y_u<Y_v, Y_v=y, F_v =u, (\hS(i))_{i=0}^{j}= (\hs(i))_{i=0}^{j}] \ge
c(y)\left(1-\frac{4}{CTy}\right).$$
Thus, since $\mb{P}[F_v =u] =x_{e}$ and $\mb{P}[Y_u < Y_v \mid Y_v = y] = y$, we can use the
independence of $(\hS(i))_{i=0}^{j}$ to get that
$$
\bE[M_{v\to u}(y) \mid Y_v=y, (\hS(i))_{i=0}^{j} = (\hs(i))_{i=0}^j] \ge  y x_e c(y)\left(1-\frac{4}{CTy}\right).
$$
As such, $\bE[M_{v\to u}(t_{j+1}) \mid (\hS(i))_{i=0}^{j}= (\hs(i))_{i=0}^{j}] - \bE[M_{v\to u}(t_{j}) \mid (\hS(i))_{i=0}^{j}= (\hs(i))_{i=0}^{j}]$ is lower bounded by
\begin{align*}
 \int_{t_j}^{t_{j+1}} \bE[M_{v\to u}(y) \mid Y_v=y, (\hS(i))_{i=0}^{j}= (\hs(i))_{i=0}^{j}] dy  &\ge x_e\frac{1-\delta}{1+\delta}\int_{t_j}^{t_{j+1}} c(y)\left(1-\frac{4}{CTy}\right) y dy \\
                                 &\ge  x_e \frac{1-\delta}{1+\delta}\left( \int_{t_j}^{t_{j+1}} c(y) y dy  -  \frac{4}{CT^2} \right), 
\end{align*}
where the last inequality uses $c(y) \le 1$, and $t_{j+1} -t_j =1/T$. Thus, since $\mb{P}[\cap_{i=0}^{j} \neg B_i] \ge 1- \delta$
by \Cref{lem:sampling_step}, we get that
$$\bE[M_{v\to u}(t_{j+1})] - \bE[M_{v\to u}(t_{j})] \ge  x_e \frac{(1-\delta)^2}{1+\delta}\left( \int_{t_j}^{t_{j+1}} c(y) y dy  -  \frac{4}{CT^2} \right).$$
Therefore, 
\begin{align*}
    \bE[M_{v\to u}(1)] &= \sum_{j=0}^{T-1} (\bE[M_{v\to u}(t_{j+1})] - \bE[M_{v\to u}(t_{j})]) \\
    &\ge x_e \frac{(1-\delta)^2}{1+\delta}  \sum_{j=0}^{T-1}\left( \int_{t_j}^{t_{j+1}} c(y) y dy  -  \frac{4}{CT^2} \right) = x_e \frac{(1-\delta)^2}{1+\delta}\left( \int_0^1 c(y) y dy  -  \frac{4}{CT} \right).
\end{align*}
The same lower bound holds for $\bE[M_{u\to v}(1)]$, and so
\begin{align*}
    \mb{P}[e \in \scr{M}]= \bE[M_{v\to u}(1)] + \bE[M_{u\to v}(1) ] &\ge  x_e \frac{(1-\delta)^2}{1+\delta}\left( \int_0^1 2 c(y) y dy  -  \frac{8}{CT} \right) \\
    &\ge x_e \frac{(1-\delta)^2}{1+\delta} \left(1 -  \frac{8}{C^2T} \right)  \int_0^1 2 c(y) y dy,
\end{align*}
where the second inequality uses that  $\int_0^1 2 c(y) y dy \ge  \int_0^1 2 C y dy= C$. 
Thus, \eqref{eqn:edge_guarantee_rough} holds, and so the proof is complete.
\end{proof}

\subsection{Proving \Cref{lem:continuous_induction_step}} \label{sec:proving_continuous_induction_step}

Let $k \ge 1$ and fix $(u,v)$.   Our goal is to prove that for any instantiation  $(\hs(i))_{i=0}^{k-1}$  of $(\hS(i))_{i=0}^{k-1}$ which satisfies
$\cap_{i=0}^{k-1} \neg B_i$, we have that $\mb{E}[ S_{u}(t_k) \mid Y_u < t_k < Y_v, (\hS(i))_{i=0}^{k-1} = (\hs(i))_{i=0}^{k-1}] \ge c(t_k)$. In other words, assuming the induction hypothesis upper bound on match probabilities before time $t_k$, we must prove a lower bound on $u$ being safe at time $t_k$. In order to simplify the equations, we implicitly condition on $(\hS(i))_{i=0}^{k-1} = (\hs(i))_{i=0}^{k-1}$ throughout the section.

Recall that $M_{u}(t_k) := 1 - S_{u}(t_k)$. I.e.,
$M_{u}(t_k)$ is an indicator random variable for the event that $u$ is matched by time $t_k$.
Our goal is to show that $\mb{E}[M_{u}(t_k) \mid Y_u < t_k < Y_v] \le 1 - c(t_k)$. 
We first show that if we remove the conditioning on $Y_v > t_k$,
then we can upper bound $\mb{E}[ M_{u}(t_k) \mid Y_u < t_k]$ using the induction hypothesis 
\eqref{eqn:IHmatchingProb} in terms of an integral of the selection function $c$.

First recall that $M_{u}(t_k) = \sum_{w \in N(u)} (M_{w \rightarrow u}(t_k) + M_{u \rightarrow w}(t_k))$. Thus, after taking conditional expectations,
\begin{equation} \label{eqn:unconditioned_sum}
    \mb{E}[M_{u}(t_k) \mid Y_u < t_k] = \sum_{w } \mb{E}[ M_{w \rightarrow u}(t_k) 
    \mid Y_u < t_k] + \sum_{w} \mb{E}[M_{u \rightarrow w}(t_k) \mid Y_u < t_k].
\end{equation}
On the other hand, for each $w \in N(u)$, since $\mb{P}[Y_u < y_w \mid Y_w =y_w] = y_w$
and $\mb{P}[F_w = u] = x_{u,w}$,
\begin{equation*}
\mb{E}[ M_{w \ra u}(t_k)] = \int_{0}^{t_k}  x_{w,u} y_w \bE[M_{w \ra u}(y) \mid Y_u < y_w, Y_w =y_w,F_w=u] dy_w \le x_{u,w} \int_{0}^{t_k} y_w c(y_w) dy_w,
\end{equation*}
where the inequality follows from the upper bound of \eqref{eqn:IHmatchingProb}. Thus, 
since $Y_u < t_k$ is a necessary condition for $M_{w \ra u}(t_k) =1$, we have that
$\mb{E}[ M_{w \ra u}(t_k) \mid Y_u < t_k] \le \frac{1}{t_k} \int_{0}^{t_k} y_w c(y_w) dy_w.$
The same upper bound holds for $\mb{E}[ M_{u \ra w}(t_k) \mid Y_u < t_k]$, and so applied to \eqref{eqn:unconditioned_sum},
we get that
\begin{equation} \label{eqn:unconditioned_integral}
    \mb{E}[M_{u}(t_k) \mid Y_u < t_k] \le \sum_{w \in N(u)} 2 x_{w,u} \int_{0}^{t_k} y_w c(y_w) dy_w \le \int_{0}^{t_k} 2 c(y) y dy,
\end{equation}
where the final inequality uses $\sum_{w \in N(u)} x_{w,u} \le 1$.

We now upper bound the \textit{positive correlation} between the events $Y_v > t_k$
and $M_{u}(t_k) =1$. Specifically, we
prove that if $\phi_{g}(y):=\frac{2y^{g-1}}{(g-1)!}$, where $\phi_{\infty}(y) := 0$,
then
\begin{equation} \label{eqn:difference_in_conditioning}
    \mb{E}[ M_{u}(t_k) \mid Y_u < t_k < Y_v] - \mb{E}[ M_{u}(t_k) \mid Y_u < t_k] \le \frac{1}{t_k} \int_{0}^{t_k} \phi_{g}(y) dy
\end{equation}
This will complete the proof of \Cref{lem:continuous_induction_step}, as \eqref{eqn:unconditioned_integral}
and \eqref{eqn:difference_in_conditioning} imply that
\begin{align*}
\mb{E}[M_{u}(t_k) \mid Y_u < t_k < Y_v] &= \mb{E}[M_{u}(t_k) \mid Y_u < t_k] + \mb{E}[ M_{u}(t_k) \mid Y_u < t_k < Y_v] - \mb{E}[ M_{u}(t_k) \mid Y_u < t_k]
\\ &\le\frac1{t_k}\int_0^{t_k}2 c(y) y dy + \frac{1}{t_k} \int_{0}^{t_k} \phi_{g}(y) dy
\\ &=\frac1{t_k}\int_0^{t_k}2(c(y) y+ \phi_{g}(y))dy \\
    &\le 1 - c(t_k),
\end{align*}
where the last inequality uses that $c$ is a selection function. Thus, we spend the remainder
of the section verifying \eqref{eqn:difference_in_conditioning}.

In order to prove \eqref{eqn:difference_in_conditioning}, we couple two executions of \Cref{alg:recursive_rcrs} up until time $t_k$. The first is the \textit{usual} execution of \Cref{alg:recursive_rcrs} on $G$ with the random variables $(Y_w)_{w \in V}$, $(F_w)_{w \in V}$,
    and $(A_{w_2 \ra w_1})_{w_1,w_2}$. The second is a \textit{parallel} execution on $G^{-v} := G \setminus \{v\}$ and thus does \textit{not} include the vertex $v$. Our coupling uses the same random variables $(Y_w)_{w \neq v}$, $(F_{w})_{w \neq v}$, and $(A_{w_2 \ra w_1})_{w_1,w_2 \neq v}$ as in the first execution. However, we now define $M^{-v}_{w \ra u}(t_k)$ (respectively, $M^{-v}(u)$) to be the indicator random variable for the event $w$ selects $u$ (respectively, $u$ is matched) before time $t_k$ when executing on $G^{-v}$. 
Observe first that by definition $\bE[M_u(t_k) \mid Y_u<t_k<Y_v] = \bE[M^{-v}_u(t_k) \mid Y_u < t_k]$,
and so it suffices to upper bound $\bE[M^{-v}_u(t_k)-M_u(t_k) \mid Y_u<t_k]$ when proving \eqref{eqn:difference_in_conditioning}. In order to do so, we define a ``bad'' event $B_{v \ra u}$
such that
\begin{equation} \label{eqn:indicator_bad_event}
    M^{-v}_u(t_k)-M_u(t_k) \le \bm{1}_{B_{v \ra u}}
\end{equation}
Afterwards, we prove that $\mb{P}[B_{v \ra u} \mid Y_u < t_k] \le \int_{0}^{t_k} \phi_{g}(y) dy$,
which together with the previous equation implies \eqref{eqn:difference_in_conditioning}.

Before defining $B_{v \ra u}$, we first show that a large part of our argument holds
even if $B_{v \ra u} = \emptyset$ (and so $\bm{1}_{B_{v \ra u}}$ in \eqref{eqn:indicator_bad_event} is identically $0$). Let us first write
$M^{-v}_u(t_k) = \sum_{w \in N(u) \setminus \{v\}}M^{-v}_{w\to u}(t_k) + \sum_{w \in N(u) \setminus \{v\}}M^{-v}_{u \ra w}(t_k)$. Observe that at most one of the sums is non-zero, and so we can handle each separately. We begin with the left-most sum, and show that if a vertex $w \in N(u) \setminus \{v\}$ selects $u$ during the
execution on $G^{-v}$, then $u$ must be matched during the execution on $G$. 
\begin{lemma}\label{prop:vertex_selected_unconditional}$
\sum_{w \in N(u) \setminus \{v\}}M^{-v}_{w\to u}(t_k)
\le M_{u}(t_k).$
\end{lemma}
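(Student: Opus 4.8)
The plan is to couple the two executions of \Cref{alg:recursive_rcrs} up to time $t_k$ and trace through when an edge incident to $u$ is added to each matching. First I would observe that the left-hand side is itself a $\{0,1\}$-valued random variable: for distinct $w\in N(u)\setminus\{v\}$ the events $\{M^{-v}_{w\to u}(t_k)=1\}$ are mutually exclusive, because once $u$ is selected (hence matched) in the $G^{-v}$ execution it stays matched, and no two vertices are processed simultaneously (the arrival times are distinct). So it suffices to prove the implication: if some $w\in N(u)\setminus\{v\}$ selects $u$ by time $t_k$ in the $G^{-v}$ execution, then $u$ is matched by time $t_k$ in the $G$ execution, i.e.\ $M_u(t_k)=1$.

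Next I would record the features of the coupling that actually get used. Both executions process vertices in increasing order of arrival time; for every $w\ne v$ the arrival time $Y_w$, the choice $F_w$ (with $F_w=v$ meaning that $w$ chooses nobody in the $G^{-v}$ execution, since $v$ is absent there), and the bits $A_{w\to w'}$ for $w'\ne v$ are literally the same in the two executions. When $w$ is processed in either execution with $F_w=w'\ne\perp$ (and $w'\ne v$ in the $G^{-v}$ execution), the edge $(w',w)$ is added to that execution's matching exactly when $Y_{w'}<Y_w$, $A_{w\to w'}=1$, and $w'$ is currently unmatched in that execution. Crucially, this decision never refers to whether the \emph{arriving} vertex $w$ is unmatched — and indeed $w$ is automatically unmatched when it is processed, since $w$ can only become matched either while being processed itself or via a strictly later-arriving (hence later-processed) vertex.

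Now take $w\in N(u)\setminus\{v\}$ with $M^{-v}_{w\to u}(t_k)=1$. Unpacking this, $F_w=u$, $Y_u<Y_w\le t_k$, $A_{w\to u}=1$, and $u$ is unmatched in the $G^{-v}$ execution just before $w$ is processed. Since $u\ne v$ and $F_w=u\ne v$, the facts $F_w=u$, $Y_u<Y_w\le t_k$, $A_{w\to u}=1$ carry over verbatim to the $G$ execution, where $w$ is processed at the same arrival time $Y_w$ (possibly after $v$ has already been processed, if $Y_v<Y_w$). I then case on the state of $u$ in the $G$ execution immediately before $w$ is processed: if $u$ is already matched there, then $M_u(t_k)=1$ because $Y_w\le t_k$ and $t\mapsto M_u(t)$ is nondecreasing; otherwise $u$ is unmatched in $G$ at that instant, so processing $w$ adds the edge $(u,w)$ to the $G$-matching (as $Y_u<Y_w$, $A_{w\to u}=1$, and $u$ is unmatched), matching $u$ at time $Y_w\le t_k$, hence again $M_u(t_k)=1$. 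This proves the lemma.

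I do not expect a genuine obstacle: this is the ``easy'' half of the $G$-versus-$G^{-v}$ comparison, in contrast with the companion bound needed for $\sum_{w\in N(u)\setminus\{v\}}M^{-v}_{u\to w}(t_k)$, where the presence of $v$ can block $u$ from selecting one of its neighbors and the bad event $B_{v\to u}$ must be invoked. The only points that deserve care are (i) checking that the edge-addition decision in the $G$ execution depends only on the shared randomness $(Y_\cdot, F_\cdot, A_{\cdot\to\cdot})$ and on whether $u$ is matched, so that the coupling genuinely forces $u$ to become matched in $G$ whenever it is selected in $G^{-v}$; and (ii) using monotonicity of $t\mapsto M_u(t)$ together with $Y_w\le t_k$ to pass from ``matched by time $Y_w$'' to ``matched by time $t_k$''.
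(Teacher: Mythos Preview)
Your proposal is correct and takes essentially the same approach as the paper: both couple the two executions and observe that when a vertex $w\ne v$ with $F_w=u$, $Y_u<Y_w\le t_k$, $A_{w\to u}=1$ is processed in $G$, it will match $u$ unless $u$ is already matched. Your case split on whether $u$ is matched in $G$ just before $w$ arrives is slightly more direct than the paper's phrasing (the paper first disposes of the cases $Y_u>t_k$ and $Y_v>t_k$, then characterizes ``$u$ is selected in $G$'' via the existence of some $w'$ with the right properties), but the substance is the same.
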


\begin{proof}

Set $M^{-v}_{ \boldsymbol{\cdot} \to u} (t_k):= \sum_{w \neq v} M^{-v}_{w\to u}(t_k)$ for convenience. Clearly, $M^{-v}_{ \boldsymbol{\cdot} \to u}(t_k) , M_{u}(t_k) \in \{0,1\}$. Observe that in order to prove the proposition, it suffices to argue that $M^{-v}_{ \boldsymbol{\cdot} \to u}(t_k) \le M_{u}(t_k)$. Now,
    if $Y_u > t_k$, then $u$ cannot be selected before time $t_k$, so $M^{-v}_{ \boldsymbol{\cdot} \to u}(t_k) = 0$. Similarly, if $Y_v > t_k$, then the executions are the same, and so $M^{-v}_{ \boldsymbol{\cdot} \to u}(t_k) = M_{u}(t_k)$.  Thus, it suffices to consider the case when $Y_v, Y_u < t_k$.

    Let us first consider the execution on $G^{-v}$, and assume that $w \neq v$ satisfies $M^{-v}_{w \ra u}(t_k)=1$. Observe that
    since $M^{-v}_{w \ra u}(t_k) =1$, $u$ did \textit{not} select a vertex when it arrived. Moreover, $w$ is the first vertex with $Y_u<Y_{w} < t_k$, $F_{w} = u$,
    and $A_{w \ra u} =1$. We now compare the execution on $G$. Clearly, we may assume that $u$ did not select a vertex when it arrived, as otherwise $M_{u}(t_k) =1$. But now $u$ is selected if and only if there exists a vertex $w'$ with $Y_u<Y_{w'} < t_k$, $F_{w'} = u$ and $A_{w' \ra u}=1$. Clearly, $w$ satisfies these properties, and so the proof is complete.
\end{proof}

The only remaining case is to handle when $\sum_{w \in N(u) \setminus \{v\}}M^{-v}_{u \ra w}(t_k) = 1$, and yet $M_{u}(t_k)=0$. Roughly speaking, we show that if both these events occurs, then there must exist a path of \textit{odd} length between $v$ and $F_u$. Now, if $G$ were bipartite, then such a path cannot exist, as it would imply an odd length cycle including $u$ and $v$ exists. If $G$ is not bipartite, then there may be an odd length path between $v$ and $F_u$. In this case,
we argue that such a path must satisfy certain additional properties, which we then
argue are unlikely to be satisfied. 

Given an edge $(w_1,w_2) \in E$, we say that $(w_1,w_2)$ \textit{survives}, provided
$Y_{w_1} < Y_{w_2}$, $F_{w_2} = w_1$, and $A_{w_2 \ra w_1} =1$, or $Y_{w_2} < Y_{w_1}$, $F_{w_1} = w_2$, and $A_{w_1 \ra w_2} =1$. We define its arrival time $Y_{(w_1,w_2)}$ to be $\max\{Y_{w_1},Y_{w_2}\}$. Suppose that $P =v_1, \ldots ,v_d$ is a path with $v_1 = v$ of odd length (i.e., $d$ is even).
We say that $P$ is a \textit{flipping sequence} for $u$ via $v$, provided the following events occur:
\begin{enumerate}
    \item $F_u = v_d$;
    \item $Y_{v_i} < Y_u$ for all $i=1, \ldots ,d$;
    \item The edges $\{(v_{i},v_{i+1})\}_{i=1}^{d-1}$ all survive, and $Y_{(v_1,v_2)} <  \ldots < Y_{(v_{d-1},v_d)}$.
\end{enumerate}
We define the event $B_{v \ra u}$ as there existing a flipping sequence in $G$
for $u$ via $v$. (This event is \textit{not} symmetric in $u$ and $v$.)
The intuition is that such a sequence is necessary for the exclusion of $v$ to "flip" the matched status of $u$ from 0 to 1.
See the later \Cref{fig:pentagonEg} for an illustration.

We now prove that if $\sum_{w \in N(u) \setminus \{v\}}M^{-v}_{u \ra w}(t_k) = 1$ and $M_{u}(t_k)=0$, then $B_{v \ra u}$ occurs (i.e.\ there exists a flipping sequence). The idea is
to show that if we exclude $v$, then the difference between the sets of vertices matched in the two executions, denoted $V(\scr{M}) \setminus \{v\}$ and $V(\scr{M}^{-v})$, is at most one \textit{critical vertex} $v_{c}$ at all times. The critical vertex begins as neighbor of $v$, and initially is in $V(\scr{M})\setminus\{v\}$. When a new edge $(v_c, w)$ is processed and matched in one of the executions but not the other, $w$ becomes the new critical vertex. Observe that if $v_c$ was previously in $V(\scr{M}) \setminus \{v\}$, then $w$ will be in $V(\scr{M}^{-v})$ (and vice versa). Since $M_{u}(t_k)=0$ in the usual execution but $M^{-v}_{u\to w}(t_k)=1$ for some $w = F_u \in N(u)\setminus\{v\}$ in the $v$-excluded execution, if we prepend $v$ to the path formed
by the edges that switch the critical vertex before time $Y_u$, we get
a flipping sequence. We formalize this argument in the following \namecref{prop:flippingSeq}.

\begin{lemma} \label{prop:flippingSeq}
    If $\sum_{w \in N(u) \setminus \{v\}}M^{-v}_{u \ra w}(t_k) =1$ and $M_{u}(t_k)=0$,
    then $B_{v \ra u}$ occurs.
\end{lemma}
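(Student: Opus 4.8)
The plan is to carry out the coupling sketched just before the statement. Run the usual execution on $G$ alongside the parallel execution on $G^{-v}$, sharing the randomness $(Y_w)_{w\ne v}$, $(F_w)_{w\ne v}$, $(A_{w_2\ra w_1})_{w_1,w_2\ne v}$ (implicitly conditioning on $(\hS(i))_{i=0}^{k-1}=(\hs(i))_{i=0}^{k-1}$, as throughout the section), and write $\scr{M}(s),\scr{M}^{-v}(s)$ for the partial matchings after all arrivals with arrival time $\le s$ have been processed. Consider the symmetric difference $D(s):=\scr{M}(s)\,\triangle\,\scr{M}^{-v}(s)$, which, being the symmetric difference of two matchings, is automatically a vertex-disjoint union of simple paths and even cycles on which edges alternate between $\scr{M}(s)$ and $\scr{M}^{-v}(s)$. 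The heart of the proof---and the step I expect to be the main obstacle---is the structural invariant that for every $s\le t_k$, $D(s)$ is either empty or a \emph{single} such path having exactly one endpoint at $v$; granting this, the flipping sequence will essentially be $D(Y_u^-)$ read off in path order.

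Before that, two reductions. The hypotheses force $Y_v<Y_u<t_k$: if $Y_v>Y_u$ then $v$ takes no part in any decision made before time $Y_u$ (an edge of $\partial(v)$ enters $\scr{M}$ only at a time $\ge Y_v$), so $\scr{M}$ and $\scr{M}^{-v}$ agree up to time $Y_u$ and $u$ ends up with the same fate in both runs, contradicting $M_u(t_k)=0$ versus $\sum_w M^{-v}_{u\ra w}(t_k)=1$; and $Y_u>t_k$ would force $\sum_w M^{-v}_{u\ra w}(t_k)=0$. Next, $\sum_w M^{-v}_{u\ra w}(t_k)=1$ singles out a $w\in N(u)\setminus\{v\}$ that $u$ selects before time $t_k$ in the parallel run; inspecting \Cref{alg:recursive_rcrs}, this is only possible through $F_u=w$, $Y_w<Y_u$, $A_{u\ra w}=1$, with $w$ unmatched in $\scr{M}^{-v}$ just before $u$ arrives. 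Since $M_u(t_k)=0$, in the usual run $u$ does not select $w$ on arrival, which---given $Y_w<Y_u$ and $A_{u\ra w}=1$---can happen only because $w$ is already matched in $\scr{M}$ just before $u$ arrives. Hence at time $Y_u^-$ we have $w\in V(\scr{M})\setminus V(\scr{M}^{-v})$, so $w$ is an endpoint of $D(Y_u^-)$; as $w\ne v$ this path is nonempty with endpoints exactly $v$ and $w$, so write $D(Y_u^-)=z_0z_1\cdots z_m$ with $z_0=v$, $z_m=w$, $m\ge1$.

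I would prove the invariant by induction on arrival events, driven by two observations. First, $D$ can become nonempty only at an arrival that inserts an edge of $\partial(v)$ into $\scr{M}$: while $D$ is empty the two runs treat the arriving vertex identically unless that vertex is $v$ or has $F$-value $v$, and such a ``birth'' step sends $D$ from $\emptyset$ to a single edge $(v,z_1)\in\scr{M}$ with $z_1\in N(v)$. Second, edges never leave $D$: the matchings are monotone and an edge enters a matching precisely when its later endpoint arrives, so $\scr{M}^{-v}$ can never subsequently acquire an edge already in $\scr{M}$, and vice versa; hence after birth $D$ only gains edges. A short case check then shows each new edge is appended at the current non-$v$ endpoint $v_c$ (the ``critical vertex'') by an arrival of the new critical vertex, which chooses $v_c$ through its $F$-value with $A$-bit $1$---equivalently, the appended edge $(z_i,z_{i+1})$ \emph{survives}---and this append occurs precisely at time $\max\{Y_{z_i},Y_{z_{i+1}}\}=Y_{(z_i,z_{i+1})}$. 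Consequently successive appends take place at strictly increasing times, and $v$ remains an endpoint forever once the path is born (it is matched in $\scr{M}$ and never in $\scr{M}^{-v}$). The case analysis itself is elementary but must be done with care, the subtle points being that a component can only be born through $v$ (because $v$ is never matched in $\scr{M}^{-v}$ and matchings only grow), that the path grows only at its non-$v$ end, and that each growth step makes its new edge survive.

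It then remains to read off $P=(z_0,z_1,\dots,z_m)$, with $d=m+1$, as a flipping sequence for $u$ via $v$: condition 1 is $F_u=w=z_m$; condition 2 holds because each edge $(z_i,z_{i+1})$ was appended before time $Y_u$, hence $\max\{Y_{z_i},Y_{z_{i+1}}\}<Y_u$ and so $Y_{z_i}<Y_u$ for all $i$ (together with $Y_v<Y_u$); condition 3 is exactly the ``survives, in increasing arrival-time order'' part of the invariant. For the parity ($d$ even, i.e.\ $m$ odd), note that the first edge $(v,z_1)$ lies in $\scr{M}$ (since $v\notin V(\scr{M}^{-v})$) and so does the last edge $(z_{m-1},w)$ (since $w\notin V(\scr{M}^{-v})$ at time $Y_u^-$), and on an $\scr{M}/\scr{M}^{-v}$-alternating path whose two end edges both lie in $\scr{M}$ the number of edges is odd. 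This exhibits a flipping sequence, so $B_{v\ra u}$ occurs. (When $(u,v)$ is an edge---the case relevant to the downstream bound---$vz_1\cdots z_m u$ together with $(u,v)$ is an odd cycle, which is why $B_{v\ra u}$ is impossible for bipartite $G$, consistent with $\phi_\infty\equiv0$; see \Cref{fig:pentagonEg}.)
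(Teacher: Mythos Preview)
Your proof is correct and rests on the same coupling idea as the paper's, but you organize the bookkeeping differently. The paper processes the \emph{relevant} surviving edges one at a time and maintains that the vertex sets $V(\scr{M}_i)\setminus\{v\}$ and $V(\scr{M}^{-v}_i)$ differ in exactly one \emph{critical vertex} $v_c(i)$; at the end it must strip repetitions from the sequence $v_c(1),\ldots,v_c(\ell-1)$ to obtain the path, and argue separately that the de-duplicated sequence has distinct vertices, surviving edges, increasing arrival times, and the right parity. You instead track the edge symmetric difference $D(s)=\scr{M}(s)\triangle\scr{M}^{-v}(s)$ directly and show it is always a single alternating path anchored at $v$ that only grows by appending at its non-$v$ end; the ``survives'' and ``increasing arrival time'' conditions then come for free from the append mechanism, and the parity is read off from which matching the two end edges belong to. Both arguments encode the same one-vertex-difference invariant (your non-$v$ endpoint is their critical vertex), so neither is more general, but your packaging avoids the de-duplication step and makes the three flipping-sequence conditions fall out more directly. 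The one place to be slightly more explicit is the ``short case check'' for the growth step: you should spell out that once $D\neq\emptyset$ the arriving vertex $z$ cannot be $v$ (already matched in $\scr{M}$) nor any $z_i$ (all have arrival time $<s$), and that $F_z\in\{v\}\cup\{z_1,\ldots,z_{m-1}\}$ leaves $D$ unchanged because those targets are matched in both runs (or absent from $G^{-v}$).
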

\begin{proof}

First observe that since $\sum_{w \in N(u) \setminus \{v\}}M^{-v}_{u \ra w}(t_k) =1$,
we know that $F_u \neq \perp$.
It suffices to analyze both executions up until time $Y_u$ when the edge $(u,F_u)$ is considered. Now, if $v$ is not matched
before time $Y_u$ in the first execution, then the matchings output by both executions are the same at time
$Y_u$. Thus, since $\sum_{w \in N(u) \setminus \{v\}}M^{-v}_{u \ra w}(t_k) =1$ and $M_{u}(t_k)=0$ by assumption, we know that know that there must exist some edge $e_1 =(w_1, v) \in N(v)$ which matches to $v$ in the original execution. Let us refer to an edge $f \in E$ as \textit{relevant},
provided
\begin{enumerate}
    \item $f$ survives
    \item $Y_{e_1} \le Y_f \le Y_u$
\end{enumerate}
Clearly, $e_1$ is the first relevant edge, and $(u,F_u)$ is the last relevant edge. Let us order the relevant edges in terms by their arrival times: i.e.,  $e_1, e_2, \ldots ,e_l$, where $\ell \ge 2$. We denote $\scr{M}_i$ (respectively, $\scr{M}^{-v}_i$) to be the edges matched \textit{after} the first $i$ relevant edges are processed in the
usual (respectively, parallel) execution. If a relevant edge is incident to $v$, the parallel execution passes on the edge.

Now, $\scr{M}_0 = \scr{M}^{-v}_0$, as the executions
match the same edges before $v$ is matched. We claim that
$V(\scr{M}^{v}_i)$ and $V(\scr{M}_i) \setminus \{v\}$ differ by exactly $1$ vertex for each $i=1, \ldots , \ell$.
When $i=1$, this is clearly true, as $\scr{M}^{-v}_1 = \scr{M}^{-v}_0$, and $\scr{M}_1 = \scr{M}_0 \cup \{e_1\}$. In general,
let us proceed inductively and assume this is true for $i \ge 1$. We denote the differing vertex by $v_{c}(i)$, which we refer to as the \textit{critical vertex} after $i$ steps. Now consider when edge $e_{i+1}$ is processed. First observe
that if $e_{i+1}$ is not incident to $v_{c}(i)$, then the executions either both match $e_{i+1}$ or both don't match $e_{i+1}$,
so the claim still holds. Alternatively, if $e_{i+1}=(w,v_c(i))$ for some $w \in V$, then the execution which \textit{hadn't} matched $v_{c}(i)$ will add the edge $e_{i+1}$, whereas the execution which \textit{had} matched
$v_{c}(i)$ will not. In this case, the claim still holds, and the new critical vertex is $w$.

Observe that $v_{c}(1) = w_1$, where $e_1 = (w_1, v)$ is the first relevant edge, $v_{c}(\ell) = u$,
and $v_{c}(\ell-1) = F_u$. Moreover, for each $i=2, \ldots , \ell$, either $v_{c}(i) = v_{c}(i-1)$, or $e_i=(v_{c}(i-1), v_{c}(i))$. We shall construct a flipping sequence using $v_{c}(1) \ldots , v_{c}(\ell-1)$ and the edges between them, however we first need to remove the (potential) repetitions. Observe that for each $i \ge2$, if $v_{c}(i) \neq v_{c}(i-1)$,
then it is distinct from all of $v_{c}(1), \ldots , v_{c}(i-1)$. This is because if the critical vertex changed in step $i$, then $e_i$ must have been matched. Since $v_{c}(1), \ldots , v_{c}(i-1)$ were previously matched by both executions, $e_i$ could not have been incident to any of them, and so they couldn't have become the critical vertex $v_{c}(i)$.
Due to this property, if we remove the repetitions from $v_{c}(1), \ldots , v_{c}(\ell-1)$, we're left with a path $P'$
of distinct vertices which begins at $w_1$ and ends at $F_u$. By prepending the vertex $v$, we claim that
$v P'$ is a flipping sequence. Clearly, $v P'$ is between $v$ and $F_u$, and contains surviving edges which arrive in increasing order before time $Y_u$, so it remains to argue $|V(vP')|$ is even. To see this, notice that since we've removed the repetitions, the critical vertex alternatives between $V(\scr{M}_j) \setminus \{v\}$ and $V(\scr{M}^{-v}_j)$ in each sub-step $j$. Since it begins in $V(\scr{M}_1)$ and ends in $V(\scr{M}_{\ell-1})$, $|V(P')|$ is odd, and so $|V(v P')|$ is even.
The proof is thus complete.
\end{proof}

It remains to prove $\mb{P}[B_{v \ra u} \mid  Y_u < t_k] \le \int_{0}^{t_k} \phi_{g}(y) dy$. 
Now, if $G$ is bipartite, then there is no path of odd-length between $u$ and $F_u$, so $\mb{P}[B_{v \ra u} \mid  Y_u < t_k] =0$ as desired. Let us thus instead assume that $G$ is not bipartite and has odd-girth $g \ge 3$.

\begin{figure}
\centering
\begin{tikzpicture}[>=Stealth, every node/.style={circle, draw, minimum size=0.8cm, inner sep=0pt}]

\foreach \i/\label in {1/$v_2$, 2/$v$, 3/$u$, 4/$v_4$, 5/$v_3$} {
\node (\i) at ({72*(\i-1)}:1.5) {\label};
}

\draw[->,dashed] (1) to[out=90, in=0] (2);
\draw[->,dashed] (2) to[out=270, in=180] (1);

\draw[->] (3) -- (4);
\draw[->] (4) -- (5);

\draw[->] (5) -- (1);

\end{tikzpicture}
\caption{
Example with odd girth 5, in which case $d=4$.  We are analyzing the probability that $v$ selects $u$ (at time $t_k$), which requires upper-bounding the probability that $u$ was already matched (before time $t_k$). $B_{v\to u}$ occurs when $u$ gets matched \textit{only in the parallel execution} where $v$ is "removed from the picture" (because we have conditioned on $v$ arriving at time $t_k$).  This requires a flipping sequence, which means that edges $(v,v_2),(v_2,v_3),(v_3,v_4)$ all survive and arrive in that order, all before $u$ arrives.
A pre-requisite for this occurring is path $P=v,v_2,v_3,v_4$ having potential, which requires all of the choices indicated in the figure by solid arrows (e.g., the arrow from $u$ to $v_4$ indicates that $F_u=v_4$) and at least one of the dashed choices.  Even an adversarial construction of $(F_w)_{w \in V}$ results in at most one path with potential. Conditional on $Y_u=y$, a path with potential becomes a flipping sequence with probability upper-bounded by $2y^4/4!$, because this requires all four vertices $v,v_2,v_3,v_4$ to arrive before time $y$, in order either $v,v_2,v_3,v_4$ or $v_2,v,v_3,v_4$ (this upper bound is tight if \textit{both} dashed choices occur, i.e.\ $F_v=v_2$ and $F_{v_2}=v$).}
\label{fig:pentagonEg}
\end{figure}
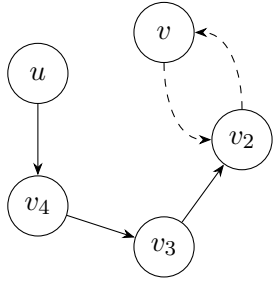

We make use of the decoupling between the random variables $(F_w)_{w \in V}$ and the arrival
times $(Y_w)_{w \in V}$. First, suppose that $P=v_1, \ldots ,v_d$ is an odd length path in $G$
with $v_1 = v$. In order for $P$ to have a chance at being a flipping sequence, its edges must be realized
using $(F_w)_{w \in V}$ in a very specific way. We say that $P$ has \textit{potential} provided the following
events occur:
\begin{enumerate}
    \item $F_{u} = v_d$;
    \item $F_{v_1} = v_2$ or $F_{v_2} = v_1$;
    \item $F_{v_i} = v_{i-1}$ for all $i=3, \ldots ,d$;
\end{enumerate}
See \Cref{fig:pentagonEg} for an illustration of a path with potential.  Observe that these events only depend on the $(F_w)_{w \in V}$ random variables.
Since the graph on $V$ with edge set $\{ (w,F_w): w \in V, F_w \neq \perp\}$ has
at \textit{most} $|V|$ edges, the next \namecref{eqn:two_flipping} follows immediately.

\begin{proposition} \label{eqn:two_flipping}
There is at most one path in $G$ with potential.
\end{proposition}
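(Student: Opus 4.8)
I would argue via the functional structure of the choices. Form the directed graph $H$ on $V$ with an arc $w \ra F_w$ for every $w$ with $F_w \neq \perp$. Each vertex has out-degree at most one, so $H$ has at most $|V|$ arcs and, crucially, there is a \emph{unique} maximal directed walk $u, F_u, F_{F_u}, \dots$ issuing from $u$; write $w_j$ for its $j$-th vertex, so $w_1 = F_u$. The plan is to show that any path with potential is forced, apart from its length, to run backwards along this walk with the prescribed endpoint $v_1 = v$, and then that only one length is admissible.

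\textbf{A path with potential is determined by its length.} Let $P = v_1, \dots, v_d$ have potential. Conditions~1 and~3 in the definition say exactly that $F_u = v_d$ and $F_{v_i} = v_{i-1}$ for $i = 3, \dots, d$, i.e.\ $u \ra v_d \ra v_{d-1} \ra \cdots \ra v_2$ is a directed walk in $H$. Since $H$ is functional, this walk is forced: $v_{d+1-j} = w_j$ for $j = 1, \dots, d-1$. Hence $v_2, \dots, v_d$ depend only on $d$, while $v_1 = v$ is prescribed, so $P$ is determined by its (even) length $d$. Moreover, $v_1, \dots, v_d$ being distinct is equivalent to the statement that $w_1, \dots, w_{d-1}$ are pairwise distinct and none of them equals $v$; this injectivity of the initial segment of $u$'s walk is the main tool in what follows. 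It remains to rule out two admissible lengths.

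\textbf{Uniqueness of the length, and the main obstacle.} Suppose $P$ (length $d$) and $P'$ (length $d' > d$) both have potential. Aligning them at the $u$-end gives $v_2 = w_{d-1} = v'_{d'-d+2}$, and since $d'-d+2 \ge 3$ this is an interior vertex of $P'$, so $v_2 \neq v'_1 = v$ and $v_2 \neq v'_2$. Now apply condition~2 to the \emph{shorter} path $P$. If $P$ uses the disjunct $F_{v_2} = v_1$, then $v = F_{v_2} = F(w_{d-1}) = w_d = v'_{d'-d+1}$, again an interior vertex of $P'$, contradicting distinctness in $P'$. Otherwise $P$ uses $F_{v_1} = v_2$, i.e.\ $F_v = v_2 = w_{d-1}$; if $P'$ also uses $F_{v'_1} = v'_2$ then $v_2 = F_v = v'_2$, a contradiction, so $P'$ must use $F_{v'_2} = v'_1$, which gives $v = w_{d'}$. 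This last configuration---$F_v = w_{d-1}$ and $v = w_{d'}$---is the delicate case: it only forces $u$'s walk to enter a cycle of length $d'-d+2$ through $v$ and $F_v$, and no clash with distinctness is visible. Here I would invoke the standing hypothesis $(u,v) \in E$: appending the arcs $(v_d,u)$ and $(u,v)$ to $P$ yields an odd closed walk of length $d+1$, which must contain an odd cycle, so $d \ge g-1$; combining the instances $d, d' \ge g-1$ with the parity requirement that both are even and with the fact that $H$ contains at most one cycle per component should leave room for only one of them. I expect this last step to be the crux. If it resists a clean argument, the statement actually needed for \eqref{eqn:difference_in_conditioning} can be recovered without uniqueness: conditional on $Y_u = y$, a fixed length-$d$ path with potential becomes a flipping sequence with probability at most $\tfrac{2y^{d}}{d!}$, and $\sum_{d \ge g-1}\tfrac{2y^{d}}{d!}$ is within a bounded factor of $\phi_g(y)$, which suffices up to constants.
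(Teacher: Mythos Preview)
Your reduction to uniqueness of the length is correct and already goes well beyond the paper's one-sentence justification. More importantly, the ``delicate case'' you isolate is not merely an obstacle to your argument: it is a counterexample to the proposition as stated. Take five distinct vertices $u,v,w_1,w_2,w_3$ with choices $F_u=w_1$, $F_{w_1}=w_2$, $F_{w_2}=w_3$, $F_{w_3}=v$, $F_v=w_1$, and add the edge $(u,v)$. Then $P=(v,w_1)$ (with $d=2$, via the disjunct $F_{v_1}=v_2$) and $P'=(v,w_3,w_2,w_1)$ (with $d'=4$, via the disjunct $F_{v'_2}=v'_1$) both satisfy all three conditions, so two paths with potential coexist. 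The paper's claim that uniqueness ``follows immediately'' from the functional graph having at most $|V|$ arcs does not hold.

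What your case analysis \emph{does} prove is the correct replacement: there are at most \emph{two} paths with potential, since you cleanly eliminate three of the four pairings of disjuncts and the fourth is realised above. Your fallback is then exactly the right repair for the downstream bound~\eqref{eqn:difference_in_conditioning}. In fact you need not sum over all lengths: the two admissible lengths satisfy $d\ge g-1$ and $d'\ge d+2\ge g+1$, so the contribution is at most
\[
\frac{2y^{g-1}}{(g-1)!}+\frac{2y^{g+1}}{(g+1)!}
=\frac{2y^{g-1}}{(g-1)!}\Bigl(1+\frac{y^2}{g(g+1)}\Bigr)
\le \frac{13}{12}\,\phi_g(y)
\]
for $g\ge 3$, and the selection function and selectability guarantees survive with only a cosmetic change of constants. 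Do not pursue the uniqueness route via odd girth and the cycle structure of $H$; the counterexample shows it cannot succeed.
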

We now consider the arrival times of an odd-length path $P=v_1, \ldots ,v_{d}$ with $v_1 = v$. We say that $P$ is \textit{badly ordered}, provided
\begin{enumerate}
    \item $Y_{v_i} < Y_u$ for all $i=1, \ldots ,d$;
    \item $Y_{v_1} < \ldots  < Y_{v_d}$ or $Y_{v_2} < Y_{v_1} < Y_{v_3} < \ldots < Y_{v_d}$.
\end{enumerate}
The existence of a badly ordered path with potential is a necessary for there
to exist a flipping sequence of $u$ via $v$.  The following is immediate from definition.
\begin{proposition} \label{eqn:flipping_hazard_equivalence}
If $B_{v \ra u}$ occurs, then there exists a badly ordered path in $G$ with potential.
\end{proposition}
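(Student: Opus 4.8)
The plan is to observe that the same path witnesses both structures. If $B_{v \ra u}$ occurs, then by definition there is a flipping sequence $P=v_1,\dots,v_d$ for $u$ via $v$ (with $v_1=v$ and $d$ even), and I claim this very $P$ is a badly ordered path in $G$ with potential. The point is that being a path with \emph{potential} only constrains the choices $(F_w)_{w\in V}$, while being \emph{badly ordered} only constrains the arrival times $(Y_w)_{w\in V}$, and each such constraint is a weaker consequence of $P$ being a flipping sequence; so the argument is just a condition-by-condition verification.

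Two of the five conditions are immediate because they literally appear in the definition of a flipping sequence: $F_u=v_d$ is condition~(1) of \emph{potential}, and $Y_{v_i}<Y_u$ for all $i$ is condition~(1) of \emph{badly ordered}. The remaining conditions must be extracted from condition~(3) of the flipping sequence, namely that each edge $(v_i,v_{i+1})$ survives and that $Y_{(v_1,v_2)}<\dots<Y_{(v_{d-1},v_d)}$, where $Y_{(w_1,w_2)}=\max\{Y_{w_1},Y_{w_2}\}$. The one genuinely nonroutine step is to turn this chain of ``$\max$'' inequalities into monotonicity of the vertex arrival times: a short two-case check shows that $\max\{Y_{v_{i-1}},Y_{v_i}\}<\max\{Y_{v_i},Y_{v_{i+1}}\}$ forces $Y_{v_i}<Y_{v_{i+1}}$ regardless of how $Y_{v_{i-1}}$ compares to $Y_{v_i}$, hence $Y_{v_2}<Y_{v_3}<\dots<Y_{v_d}$; and then $Y_{(v_1,v_2)}<Y_{(v_2,v_3)}$ reduces to $\max\{Y_{v_1},Y_{v_2}\}<Y_{v_3}$.

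With $Y_{v_{i-1}}<Y_{v_i}$ established for every $i\ge3$, the two disjuncts in the definition of ``$(v_{i-1},v_i)$ survives'' are resolved by the direction of this comparison, forcing $F_{v_i}=v_{i-1}$, which is condition~(3) of \emph{potential}; likewise, survival of $(v_1,v_2)$ forces $F_{v_2}=v_1$ or $F_{v_1}=v_2$, condition~(2) of \emph{potential}. Finally, combining $Y_{v_2}<\dots<Y_{v_d}$ with $Y_{v_1}<Y_{v_3}$ and the (almost sure) distinctness of arrival times, either $Y_{v_1}<Y_{v_2}$, giving $Y_{v_1}<\dots<Y_{v_d}$, or $Y_{v_2}<Y_{v_1}$, giving $Y_{v_2}<Y_{v_1}<Y_{v_3}<\dots<Y_{v_d}$, which is exactly condition~(2) of \emph{badly ordered}. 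This completes the verification. The main obstacle is really only the bookkeeping in that middle step; in particular, no appeal to the odd-girth hypothesis or to any earlier lemma is needed here, as those enter only in the later bound on $\mb{P}[B_{v \ra u}\mid Y_u<t_k]$.
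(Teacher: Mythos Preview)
Your proof is correct and follows exactly the approach intended by the paper, which simply declares the result ``immediate from definition'' without spelling out the verification. Your condition-by-condition check is precisely the unpacking of that immediacy, including the one nontrivial observation that $\max\{Y_{v_{i-1}},Y_{v_i}\}<\max\{Y_{v_i},Y_{v_{i+1}}\}$ forces $Y_{v_i}<Y_{v_{i+1}}$, which then pins down both the ordering and the direction of each $F_{v_i}$.
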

We now compute $\mb{P}[\text{$P=v_1, \ldots ,v_{d}$ is badly ordered} \mid Y_u = y]$
where $y \in [0,t_k)$.
Observe first that the probability $P$ is badly ordered is $2y^{d}/d!$, because we need all of $v_1,\ldots,v_{d}$ to arrive before time $y$, and conditional on this occurring, there $d!$ relative orderings, $2$ of which are permissible: $Y_{v_1} < \ldots  < Y_{v_d}$ and $Y_{v_2} < Y_{v_1} < Y_{v_3} < \ldots < Y_{v_d}$.
Thus, $\mb{P}[\text{$P=v_1, \ldots ,v_{d}$ is badly ordered}] = \int_{0}^{t_k} \frac{2 y^{d}}{d!} dy$. By applying \eqref{eqn:two_flipping} and \eqref{eqn:flipping_hazard_equivalence},
and using that any even-length path between $F_u$ and $v$ has length at least $g -1$, we get \eqref{eqn:difference_in_conditioning}, and so \Cref{lem:continuous_induction_step} is proven.  We note that via examples analogous to \Cref{fig:pentagonEg}, it is easy to see that our upper bound of $\int_0^{t_k} \frac{2y^d}{d!}dy$ is tight for any odd girth $g$ and $d=g-1$, once we concede to an adversarial choice of $(F_w)_{w\in V}$.

\section{Proving \Cref{thm:general_graph_positive}} \label{sec:general_graph_positive}

Let $\bm{x}=(x_e)_{e \in E}$ be a fractional matching in graph $G$.
For each time $t \in [0,1]$, we define a ``two-phased'' RCRS, where phase $1$ processes vertices $v$
with $Y_v < t$, and phase $2$ processes vertices $v$ with  $Y_v \ge t$. 
In order to describe each phase, we first define a function $a_{t}: [0,1] \rightarrow \mb{R}$, where 
\begin{equation}
    a_{t}(x):=\frac{3 + 6 t + 4 t^2 + 2 t^3}{3 + 6 t + 4 t^2 + 2 t^3 + 2x(1-t)(1+3t+t^2).}
\end{equation}
Note that $a_t(x) \in [0,1]$ for all $x \in [0,1]$.
We use function $a_t$ to ``prune'' each edge $e$ of $G$, based on its $x_e$ value. Specifically, we draw $A_e \sim \Ber(a_t(x_e))$ independently, and say that $e$ \textit{survives} if $e$ is active
and $A_e =1$. Observe that each $e$ survives with probability $f_t(x_e):=x_e a_t(x_e)$. Our RCRS will
select an edge only if the edge survives. Since $t$ is an algorithm parameter treated as fixed, we will sometimes drop the dependence on $t$ in our analysis (i.e., $f = f_t$ and $a =a_t$).

 In the first phase, when a vertex arrives, we execute the OCRS of \cite{Ezra_2020} on the ``pruned instance'' $(G, f(\bm{x}))$. This phase ensures that each
edge which arrives before time $t$ is selected with probability exactly $f(x_e)/2$. In the second phase, we execute the greedy RCRS on
$(G, f(\bm{x}))$. Specifically, when $e$
arrives, if the edge survives, then we add it to the matching
(if possible). Below is a formal description of our RCRS.

\begin{algorithm}[H]
\caption{Two-phase RCRS} 
\label{alg:two_phase}
\begin{algorithmic}[1]
\Require $G=(V,E)$, $\bm{x}=(x_e)_{e \in E}$, and $t \in [0,1]$
\Ensure subset of active edges forming a matching $\scr{M}$
\State $\scr{M} \leftarrow \emptyset$
\For{arriving vertices $v$ in increasing order of $Y_v$}
\If{$F_v = u$, and $Y_u < Y_v$} \Comment{$(u,v)$ is active}
\State Draw $A_{u,v} \sim \Ber(a_t(x_{u,v}))$ independently.
\If{$Y_v < t$}          \Comment{Phase $1$}
\State Set $b_{u,v}:= 1/(2 - \sum_{w < v} f_t(x_{u,w}))$.
\State Draw $B_{u,v} \sim \Ber(b_{u,v})$ independently.
\ElsIf{$Y_v \ge t$}         \Comment{Phase $2$}
\State Set $B_{u,v} :=1$.
\EndIf
\If{$u$ is unmatched,  and $A_{u,v} \cdot B_{u,v}=1$}
\State $\scr{M} \leftarrow \scr{M} \cup \{(u,v)\}$

\EndIf

\EndIf

\EndFor
\State \Return $\scr{M}$
\end{algorithmic}
\end{algorithm}
\begin{remark}
If $t = 0$, then $a_0(x)=3/(3+2x)$ which recovers the function of \cite{Fu2021}, so our algorithm is identical to their prune-greedy RCRS. If $t =1$,
then $a_t$ is identically $1$, and so we recover the OCRS of \cite{Ezra_2020}.  Also note that $a_t(0)=1$ regardless of the value of $t$.
\end{remark}
Suppose that $(u_0,u_1) \in E$ is an edge of $G$. We use the same indicator random variables as in the analysis of \Cref{alg:recursive_rcrs}. Specifically, $M_{u_0 \ra u_1}(y_0)=1$ (respectively, $M_{u_1}(y_0)=1$)
if $u_0$ selects $u_1$ (respectively, $u_1$ is matched) before time $y_0 \in [0,1]$. Observe first that
\begin{equation} \label{eqn:overall_guarantee}
\mb{P}[(u_0,u_1) \in \scr{M}] =  \int_{0}^{1} \mb{E}[M_{u_0 \ra u_1}(y_0)  \mid Y_{u_0} = y_0] dy_0 +
\int_{0}^{1} \mb{E}[M_{u_1 \ra u_0}(y_1)  \mid Y_{v_1} = y_1] dy_1.
\end{equation}
If $y_0 \le t$, then we use
the guarantee of the OCRS of \cite{Ezra_2020} applied to the pruned instance $(G, f(\bm{x}))$. This guarantee
holds for the edge $(u_0,u_1)$ no matter which subset of vertex arrival times $(Y_s)_{s \in S}$ for $S \subseteq V \setminus \{u_0,u_1\}$ we condition on:
\begin{lemma} \label{lem:ezra_guarantee}
Fix $(u_0,u_1) \in E$, $S \subseteq V \setminus \{u_0,u_1\}$ and $y_0 \le t$. Then 
$$\mb{E}[ M_{u_0 \ra u_1}(y_0) \mid (Y_s)_{s \in S}, Y_{u_0} = y_0, Y_{u_1} < y_0] = f(x_{u_0,u_1})/2.$$
\end{lemma}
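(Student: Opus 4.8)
The plan is to observe that phase~1 of \Cref{alg:two_phase}, restricted to the vertices processed before time $t$, is precisely the vertex-arrival OCRS of \cite{Ezra_2020} executed on the ``pruned'' fractional matching $f_t(\bm x)=(x_e a_t(x_e))_{e\in E}$, and that this scheme selects every surviving edge with probability \emph{exactly} one half, conditional on \emph{any} fixed vertex arrival order. Since \Cref{alg:two_phase} spells the scheme out, I would give a short self-contained verification rather than treating \cite{Ezra_2020} as a black box. One first checks that $f_t(\bm x)$ is a valid fractional matching: $f_t(x_e)=x_e a_t(x_e)\le x_e$ because $a_t$ takes values in $[0,1]$, hence $\sum_{e\ni v}f_t(x_e)\le 1$ for every $v$; in particular each quantity $b_{u,v}=1/(2-\sum_{w<v}f_t(x_{u,w}))$ used in phase~1 lies in $(0,1]$, so the algorithm is well defined, and every edge survives with probability $f_t(x_e)$.

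The heart of the matter is a ``safety invariant'': conditional on the entire vector of arrival times $(Y_w)_{w\in V}$, for every vertex $u$ and every time $s$ with $Y_u<s\le t$,
\[
\mb{P}\big[u\text{ is matched by an edge processed before time }s \;\big|\; (Y_w)_{w\in V}\big]\;=\;\tfrac12\sum_{w\in N(u):\,Y_w<s}f_t(x_{u,w}).
\]
I would prove this by induction on the arrival order, processing vertices one at a time (only vertices processed in phase~1 are relevant, since before time $t$ only phase~1 behaviour has occurred). When a vertex $v$ arrives with $F_v=u$ for an already-arrived neighbour $u$, the edge $(u,v)$ is added iff $u$ is currently unmatched and $A_{u,v}B_{u,v}=1$; the current matched status of $u$ depends only on edges processed strictly before $v$, none of which involves $F_v$, $A_{u,v}$ or $B_{u,v}$, so these events are independent. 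Writing $D:=\sum_{w\in N(u):Y_w<Y_v}f_t(x_{u,w})$, the inductive hypothesis gives $\mb{P}[u\text{ unmatched}]=1-\tfrac12 D=\tfrac{2-D}{2}$, and multiplying by $\mb{P}[F_v=u]=x_{u,v}$, $\mb{P}[A_{u,v}=1]=a_t(x_{u,v})$ and $\mb{P}[B_{u,v}=1]=b_{u,v}=\tfrac1{2-D}$ shows that $u$ becomes newly matched with probability exactly $\tfrac{2-D}{2}\cdot x_{u,v}a_t(x_{u,v})\cdot\tfrac1{2-D}=\tfrac12 f_t(x_{u,v})$ --- precisely the term that enters the right-hand side once $v$ joins the sum. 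The analogous bookkeeping for $v$ itself and the trivial cases (where $v$'s arrival cannot affect $u$) finish the induction; the decisive cancellation is $(2-D)\,b_{u,v}=1$, which is exactly how $b_{u,v}$ is defined.

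Given the invariant, fix $(u_0,u_1)$ and condition on $(Y_w)_{w\in V}$ with $Y_{u_1}<Y_{u_0}=y_0\le t$. Then the edge $(u_0,u_1)$ is considered exactly when $u_0$ arrives (phase~1), and $M_{u_0\ra u_1}(y_0)=1$ iff $F_{u_0}=u_1$, $u_1$ is unmatched at that moment, and $A_{u_1,u_0}B_{u_1,u_0}=1$; as before these are independent, and ``$u_1$ unmatched at that moment'' coincides with ``$u_1$ matched by no edge processed before $y_0$''. Applying the invariant to $u_1$ with $s=y_0$ (which excludes $w=u_0$ since $Y_{u_0}=y_0$), together with $b_{u_1,u_0}=1/(2-\sum_{w\in N(u_1):Y_w<y_0}f_t(x_{u_1,w}))$, gives
\[
\mb{E}\big[M_{u_0\ra u_1}(y_0)\mid (Y_w)_{w\in V}\big]=x_{u_0,u_1}\,a_t(x_{u_0,u_1})\,b_{u_1,u_0}\Big(1-\tfrac12\!\!\!\sum_{w\in N(u_1):Y_w<y_0}\!\!\!\!f_t(x_{u_1,w})\Big)=\tfrac12 x_{u_0,u_1}a_t(x_{u_0,u_1})=\tfrac12 f(x_{u_0,u_1}),
\]
using $a_t(x_{u_0,u_1})=a_t(x_{u_1,u_0})$. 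Since this is the \emph{same constant} for every arrival-time vector consistent with the lemma's conditioning on $(Y_s)_{s\in S}$, $Y_{u_0}=y_0$ and $Y_{u_1}<y_0$, the tower property over the unconditioned coordinates yields $\mb{E}[M_{u_0\ra u_1}(y_0)\mid (Y_s)_{s\in S},Y_{u_0}=y_0,Y_{u_1}<y_0]=f(x_{u_0,u_1})/2$, as claimed.

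The step I expect to be most delicate is justifying the independence claims behind the invariant and its application --- that a vertex's matched status just before an incident edge is processed is independent of the choice variable $F_v$ and of the fresh bits $A_{u,v},B_{u,v}$ used for that edge --- because this requires keeping careful track of which edges are ``processed'' at which times under the convention that edge $(u,v)$ is handled at $\max\{Y_u,Y_v\}$. By contrast, passing from a fixed full arrival order to the partial conditioning in the statement is essentially free: the per-order answer does not depend on the order, which is exactly the robustness of \cite{Ezra_2020}'s scheme that motivated using it in phase~1.
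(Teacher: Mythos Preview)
Your proof is correct and follows exactly the intended approach. The paper itself does not prove this lemma; it simply cites the guarantee of the vertex-arrival OCRS of \cite{Ezra_2020} applied to the pruned instance $(G,f(\bm x))$ as a black box. Your self-contained verification---establishing the ``safety invariant'' that each already-arrived vertex $u$ is matched with probability exactly $\tfrac12\sum_{w\in N(u):\,Y_w<s}f_t(x_{u,w})$ conditional on the full arrival order, via induction on arrivals and the cancellation $(2-D)\,b_{u,v}=1$---is precisely the argument underlying \cite{Ezra_2020}'s result, so you have correctly reconstructed what the citation encapsulates.
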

Observe that by taking $S= \emptyset$ in \Cref{lem:ezra_guarantee},
$
\mb{E}[M_{u_0 \ra u_1}(y_0) \mid Y_{u_0} = y_0] = f(x_{u_0,u_1}) y_0/2
$ for $y_0 \in [0,t]$,
as $\mb{P}[Y_{u_1} < y_0] = y_0$. Applied to \eqref{eqn:overall_guarantee},
this means that
\begin{equation}\label{eqn:overall_guarantee_ezra}
\mb{P}[(u_0,u_1) \in \scr{M}] = \frac{f(x_{u_0,u_1})t^2}{2} + \int_{t}^{1} \mb{E}[M_{u_0 \ra u_1}(y_0)  \mid Y_{u_0} = y_0] dy_0 +
\int_{t}^{1} \mb{E}[M_{u_1 \ra u_0}(y_1)  \mid Y_{v_1} = y_1] dy_1.
\end{equation}
Now, the integrands are symmetric, so we focus on $\mb{E}[M_{u_0 \ra u_1}(y_0)  \mid Y_{u_0} = y_0]/f(x_0)$ for $y_0 \in (t,1]$.
We introduce a parameter $\ell \ge 1$, which indicates the graph distance from $u_0$ we consider
when bounding $\mb{E}[M_{u_0 \ra u_1}(y_0) \mid Y_{u_0} = y_0]/f(x_{u_0,u_1})$. Our bounds are defined recursively in terms
of $\ell$, and depend on the graph $G$, $y_0 \in [t,1]$, and the edge $(u_0,u_1)$ of $G$. While we ultimately require a lower bound on
$\mb{E}[M_{u_0 \ra u_1}(y_0)  \mid Y_{u_0} = y_0]/f(x_0)$, due to the recursive structure of our bounds, it will be convenient to simultaneously derive upper bounds.
Specifically, if $\ell$ is odd, we derive an upper bound, denoted
$\UB_{u_0 \ra u_1}^{G}(y_0, \ell)$. Otherwise, if $\ell$ is even, we derive a lower bound, denoted $\LB_{u_0 \ra u_1}^{G}(y_0,\ell)$.  As we explain in \Cref{sec:iterate_recursion}, the analysis of
the prune-greedy RCRS of \cite{Fu2021} can be recovered
by taking $\ell=4$ and setting $t=0$.
We now define our bounds and relate
them to $\mb{E}[M_{u_0 \ra u_1}(y_0) \mid Y_{u_0} = y_0]/f(x_{u_0,u_1})$
in \Cref{lem:recursion_two_phase}.

\textbf{Base case:} For $\ell =1$, set $\UB_{u_0 \ra u_1}^{G}(y_0,1):=y_0$

\textbf{Recursion:}
Given $\ell \ge 2$ even,  assume we are given values $\UB_{u_1 \ra u_2}^{G \setminus u_0}(y, \ell-1)$
and $\UB_{u_2 \ra u_1}^{G \setminus u_0}(y, \ell-1)$ for each $u_2 \in N(u_1) \setminus u_0$ and $y \in (t,1]$. In this case, define
\begin{equation*}
    \LB_{u_0 \ra u_1}^{G}(y_0,\ell) :=  y_0  - \sum_{u_2  \in N(u_1) \setminus u_0} 
 f(x_{u_1,u_2}) \left(\frac{t^2}{2} +  \int_{t}^{y_0} \UB^{G \setminus u_0}_{u_1 \ra u_2}(y_1, \ell-1)  dy_1 + \int_{t}^{y_0} \UB^{G \setminus u_0}_{u_2 \ra u_1}(y_2, \ell-1)  dy_2\right).
\end{equation*}
Otherwise, if $\ell \ge 3$ is odd, assume we are given values $\LB_{u_1 \ra u_2}^{G \setminus u_0}(y_1, \ell-1)$
and $\LB_{u_2 \ra u_1}^{G \setminus u_0}(y, \ell-1)$ for each $u_2 \in N(u_1) \setminus u_0$ and $y \in (t,1]$. Then, define
\begin{equation*}
    \UB_{u_0 \ra u_1}^{G}(y_0,\ell) :=  y_0  - \sum_{u_2  \in N(u_1) \setminus u_0} f(x_{u_1,u_2})  \left(\frac{t^2}{2} +  \int_{t}^{y_0} \LB^{G \setminus u_0}_{u_1 \ra u_2}(y_1, \ell-1)  dy_1 + \int_{t}^{y_0} \LB^{G \setminus u_0}_{u_2 \ra u_1}(y_2, \ell-1)  dy_2\right).
\end{equation*}

\begin{lemma} \label{lem:recursion_two_phase}
Fix $\ell \ge 1$. For any graph $G=(V,E)$, $(u_0,u_1) \in E$  and $y_0 \in (t,1]$:
\begin{itemize}
    \item $\mb{E}[ M_{u_0 \ra u_1}(y_0) \mid Y_{u_0} = y_0] \le f(x_{u_0,u_1}) \cdot \UB_{u_0 \ra u_1}^{G}(y_0, \ell)$ if $\ell$ is odd. 
    \item $\LB_{u_0 \ra u_1}^{G}(y_0, \ell) \cdot f(x_{u_0,u_1}) \le \mb{E}[ M_{u_0 \ra u_1}(y_0) \mid Y_{u_0} = y_0]$ if $\ell$ is even.
\end{itemize}
\end{lemma}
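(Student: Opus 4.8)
The plan is to prove both inequalities simultaneously, by induction on $\ell$, with the inductive statement quantified over \emph{all} graphs $G$; this uniformity is essential because the level-$\ell$ recursion on $G$ is defined through the level-$(\ell-1)$ bounds on the vertex-deleted graph $G\setminus u_0$. The engine of the whole argument is the identity
\[
\mb{E}[M_{u_0\ra u_1}(y_0)\mid Y_{u_0}=y_0]=f(x_{u_0,u_1})\big(y_0-\mb{P}[M^{-u_0}_{u_1}(y_0)=1]\big)\qquad\text{for }y_0\in(t,1].
\]
Since $y_0>t$, the vertex $u_0$ is processed in phase $2$, so $u_0$ selects $u_1$ by time $y_0$ precisely when (a) $F_{u_0}=u_1$ and $A_{u_0,u_1}=1$ — the edge ``survives'', an event of probability $f(x_{u_0,u_1})$ that is independent of everything else — and (b) $Y_{u_1}<y_0$ and $u_1$ is unmatched at the instant $u_0$ arrives. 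Because $u_0$ arrives at time $y_0$ and the attenuation parameters $b_{u,v}$ in \Cref{alg:two_phase} sum only over already-arrived vertices, the run of the algorithm restricted to $(0,y_0)$ is literally unchanged if $u_0$ is removed from $G$; hence, under the natural coupling, event (b) coincides with $\{Y_{u_1}<y_0,\ M^{-u_0}_{u_1}(y_0)=0\}$, which is independent of $F_{u_0}$ and $A_{u_0,u_1}$. As $\{M^{-u_0}_{u_1}(y_0)=1\}\subseteq\{Y_{u_1}<y_0\}$, event (b) has probability $y_0-\mb{P}[M^{-u_0}_{u_1}(y_0)=1]$, and the identity follows.

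The base case $\ell=1$ is then immediate, since $\UB^G_{u_0\ra u_1}(y_0,1)=y_0$ and $\mb{P}[M^{-u_0}_{u_1}(y_0)=1]\ge0$. For the inductive step, fix $\ell\ge2$ and use the mutual exclusivity of the events that $u_1$ is matched to its various neighbours to write
\[
\mb{P}[M^{-u_0}_{u_1}(y_0)=1]=\sum_{u_2\in N(u_1)\setminus u_0}\Big(\mb{E}[M^{-u_0}_{u_1\ra u_2}(y_0)]+\mb{E}[M^{-u_0}_{u_2\ra u_1}(y_0)]\Big),
\]
and express $\mb{E}[M^{-u_0}_{u_1\ra u_2}(y_0)]=\int_0^{y_0}\mb{E}[M^{-u_0}_{u_1\ra u_2}(y_1)\mid Y_{u_1}=y_1]\,dy_1$ (and symmetrically for $u_2\ra u_1$). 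Split the integral at $t$. For $y_1\le t$, \Cref{lem:ezra_guarantee} applied to the graph $G\setminus u_0$ (with $S=\emptyset$) gives $\mb{E}[M^{-u_0}_{u_1\ra u_2}(y_1)\mid Y_{u_1}=y_1]=f(x_{u_1,u_2})y_1/2$, so the phase-$1$ contribution of the pair $\{u_1\ra u_2,\ u_2\ra u_1\}$ equals $f(x_{u_1,u_2})t^2/2$ — exactly the additive constant in the recursion. For $y_1\in(t,y_0]$, vertex $u_1$ is processed in phase $2$ of the $G\setminus u_0$ execution, so the inductive hypothesis at level $\ell-1$ \emph{on the graph} $G\setminus u_0$ applies: it bounds $\mb{E}[M^{-u_0}_{u_1\ra u_2}(y_1)\mid Y_{u_1}=y_1]$ from above by $f(x_{u_1,u_2})\UB^{G\setminus u_0}_{u_1\ra u_2}(y_1,\ell-1)$ when $\ell$ is even and from below by $f(x_{u_1,u_2})\LB^{G\setminus u_0}_{u_1\ra u_2}(y_1,\ell-1)$ when $\ell$ is odd. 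Feeding these estimates into the displayed identity and comparing with the definitions of $\LB^G_{u_0\ra u_1}(y_0,\ell)$ and $\UB^G_{u_0\ra u_1}(y_0,\ell)$ reproduces the two claimed bounds.

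The main obstacle is establishing that first identity rigorously — specifically, justifying that conditioning on $Y_{u_0}=y_0$ lets us trade the awkward event ``$u_1$ is still unmatched when $u_0$ shows up'' for the clean, $u_0$-free random variable $M^{-u_0}_{u_1}(y_0)$. This rests on three checks: that \Cref{alg:two_phase} never consults $u_0$ before time $y_0$ (so the $G$ and $G\setminus u_0$ runs agree on $(0,y_0)$ under a shared-randomness coupling); that the probability-$0$ event of some other arrival time coinciding with $y_0$ can be ignored; and that the edge-survival event is independent of the $G\setminus u_0$ execution. Everything after that is a routine parity-tracking induction that mirrors the recursion term by term, with the phase-$1$ piece pinned down exactly by \Cref{lem:ezra_guarantee} and the phase-$2$ piece controlled by the inductive hypothesis on the smaller graph.
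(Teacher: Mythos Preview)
Your proof is correct and follows essentially the same approach as the paper: induction on $\ell$ over all graphs, the coupling observation that the $G$ and $G\setminus u_0$ executions agree on $(0,y_0)$ when $Y_{u_0}=y_0$, decomposition of $M_{u_1}$ over neighbours, \Cref{lem:ezra_guarantee} for the phase-$1$ contribution, and the inductive hypothesis on $G\setminus u_0$ for phase $2$. The only cosmetic difference is that you pass to $G\setminus u_0$ immediately in the identity (and hence invoke \Cref{lem:ezra_guarantee} on $G\setminus u_0$ with $S=\emptyset$), whereas the paper keeps the conditioning $Y_{u_0}=y_0$ one step longer and invokes \Cref{lem:ezra_guarantee} on $G$ with $S=\{u_0\}$; the two are equivalent.
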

\begin{proof}
When $\ell = 1$, the lemma follows easily:
For each $(u_0,u_1) \in E$ of $G$, if $M_{u_0 \ra u_1}(y_0) =1$ and $Y_{u_0} = y_0$,
then $Y_{u_1} < y_0$ and $(u_0,u_1)$ survives.
The latter two events occur with probability $f(x_{u_0,u_1}) y_0$,
so $\mb{E}[M_{u_0 \ra u_1}(y_0) \mid Y_{u_0} =y_0] \le f(x_{u_0,u_1}) \cdot \UB_{u_0 \ra u_1}^{G}(y_0, 1)$.

Fix $\ell \ge 2$. We proceed inductively,
where we assume the lemma holds for $\ell-1$, and prove
that it holds for $\ell$. Let us assume that $\ell$ is even,
as the case when $\ell$ is odd is almost identical. 
To work with our induction hypothesis, it is
convenient to introduce additional indicator random variables for the execution
of \Cref{alg:two_phase} on $G \setminus u_0$. Specifically, for each $y \in (t,y_0)$ and $u_1, u_2 \in V \setminus u_0$, define $M^{-u_0}_{u_1 \ra u_2}(y)$ to be the indicator random variable for the event $u_1$ selects $u_2$
by time $y$ when executing \Cref{alg:two_phase} on $G \setminus u_0$. Our induction hypothesis implies that for each $y \in (t, y_0)$ and $u_1,u_2 \in V \setminus u_0$
\begin{equation} \label{eqn:induction_two_phase}
    \mb{E}[M^{-u_0}_{u_1 \ra u_2}(y) \mid  Y_{u_1} =y] \le f(x_{u_1,u_2}) \cdot \UB_{u_1 \ra u_2}^{G \setminus u_0}(y, \ell-1).
\end{equation}
Our goal is now to prove that $\mb{E}[M_{u_0 \ra v_1}(y_0) \mid Y_{u_0} = y_0] \ge \LB_{u_0 \ra u_1}^{G}(y_0, \ell) \cdot f(x_{u_0,u_1})$. Observe first that if we condition on $Y_{u_0} = y_{u_0}$,
then $M_{u_0 \ra v_1}(y_0)$ occurs if and only if $Y_{u_1} < y_0$, 
$(u_0,u_1)$ survives, and $M_{u_1}(y_0) = 0$. Thus,
\begin{align*}
\mb{E}[M_{u_0 \ra u_1}(y_0)  \mid Y_{u_0} = y_{0}] &= f(x_{u_0,u_1})( \mb{P}[Y_{u_1} < y_0 \mid Y_{u_0} =y_0] - \E[M_{u_1}(y_0) \mid Y_{u_0} =y_0]) \\
                                &=f(x_{u_0,u_1}) ( y_0 - \E[M_{u_1}(y_{0}) \mid Y_{u_0} =y_0]),
\end{align*}
Now, $$\mb{E}[M_{u_1}(y_0) \mid Y_{u_0} = y_0] = \sum_{u_2 \in N(u_1) \setminus u_0} \left( \mb{E}[M_{u_1 \ra u_2}(y_0) \mid Y_{u_0} = y_0] + \mb{E}[M_{u_2 \ra u_1}(y_0) \mid Y_{u_0} = y_0] \right).$$
We focus on the term $\mb{E}[M_{u_1 \ra u_2}(y_0) \mid Y_{u_0} = y_0]$, as $\mb{E}[M_{u_2 \ra u_1}(y_0) \mid Y_{u_0} = y_0]$ is symmetric.
By definition,
$$
\mb{E}[M_{u_1 \ra u_2}(y_0) \mid Y_{u_0} = y_0] = \int_{0}^{y_0} \mb{E}[M_{u_1 \ra u_2}(y_1) \mid Y_{u_0} = y_0, Y_{u_1} =y_1] dy_1.
$$
By applying \Cref{lem:ezra_guarantee} with $S= \{u_0\}$,
$
\mb{E}[M_{u_1 \ra u_2}(y_1) \mid Y_{u_0} = y_0, Y_{u_1} = y_1, Y_{u_2} < y_1]= f(x_{u_1,u_2})/2,
$
and so since $\mb{P}[Y_{u_2} < y_1 \mid Y_{u_0} = y_0, Y_{u_1} = y_1] =y_1$, 
$$
\int_{0}^{t} \mb{E}[M_{u_1 \ra u_2}(y_1) \mid Y_{u_0} = y_0, Y_{u_1} =y_1] dy_1 = \frac{t^2 f(x_{u_1,u_2})}{4}.
$$
Thus,
$
\mb{E}[M_{u_1 \ra u_2}(y_0) \mid Y_{u_0} = y_0] = \frac{t^2 f(x_{u_1,u_2})}{4} + \int_{t}^{y_0} \mb{E}[M_{u_1 \ra u_2}(y_1) \mid Y_{u_0} = y_0, Y_{u_1} =y_1] dy_1.
$
Now, if we condition on $Y_{u_0} = y_0$, then the executions of \Cref{alg:two_phase} on $G$ and $G \setminus u_0$
proceed identically up until time $y_0$. Thus, $$\mb{E}[M_{u_1 \ra u_2}(y_1) \mid Y_{u_0} = y_0, Y_{u_1} =y_1] = \mb{E}[M^{-u_0}_{u_1 \ra u_2}(y_1) \mid  Y_{u_1} =y_1] \le f(x_{u_1,u_2}) \cdot \UB_{u_1 \ra u_2}^{G \setminus u_0}(y_1, \ell),$$ where the inequality applies \eqref{eqn:induction_two_phase}. It follows that
$$
\mb{E}[M_{u_1 \ra u_2}(y_0) \mid Y_{u_0} = y_0] \le f(x_{u_1,u_2})\left( \frac{t^2}{4} + \int_{t}^{y_0} \UB_{u_1 \ra u_2}^{G \setminus u_0}(y_1, \ell) dy_1 \right).
$$
By applying a symmetric bound to $\mb{E}[M_{u_2 \ra u_1}(y_0) \mid Y_{u_0} = y_0]$, we get
that
$$
\mb{E}[M_{u_1}(y_0) \mid Y_{u_0} = y_0] \le \sum_{u_2 \in N(u_1) \setminus u_0} f(x_{u_1,u_2})\left( \frac{t^2}{2} + \int_{t}^{y_0} \UB_{u_1 \ra u_2}^{G \setminus u_0}(y_1, \ell) dy_1  + \int_{t}^{y_0} \UB_{u_2 \ra u_1}^{G \setminus u_0}(y_2, \ell) dy_2 \right),
$$
and so $\mb{E}[M_{u_0 \ra v_1}(y_0) \mid Y_{u_0} = y_0] /f(x_{u_0,u_1})$ is lower bounded
by 
$$
\LB_{u_0 \ra u_1}^{G}(y_0,\ell):= y_0 - \sum_{u_2 \in N(u_1) \setminus u_0} f(x_{u_1,u_2})\left( \frac{t^2}{2} + \int_{t}^{y_0} \UB_{u_1 \ra u_2}^{G \setminus u_0}(y_1, \ell-1) dy_1  + \int_{t}^{y_0} \UB_{u_2 \ra u_1}^{G \setminus u_0}(y_2, \ell-1) dy_2 \right)
$$
as required. 
\end{proof}

\subsection{Expanding the Recursive Bound} \label{sec:iterate_recursion}
When $\ell$ is even, \Cref{lem:recursion_two_phase} establishes a
lower bound $\LB^{G}_{u_0 \ra  u_1}(y_0, \ell)$
on $\mb{E}[M_{u_0 \ra u_1}(y_0)  \mid Y_{u_0} = y_0]/f(x_0)$ for an
edge $(u_0,v_0) \in E$ of an arbitrary graph $G=(V,E)$ when $y_0 \in (t,1]$. That being said,
it is easy to see that if we make no assumption on $\bm{x}$,
then \Cref{alg:two_phase} attains a selectability of at most $1/2$ for
all $t \in [0,1]$. Thus, we first apply the reduction of
\cite{Fu2021} to ensure that $\bm{x}=(x_e)_{e \in E}$ of $G$ is $1$-regular. In this case, the analysis of \cite{Fu2021} is recovered by
taking $t =0$ and $\ell=4$ in \Cref{lem:recursion_two_phase}, which yields
\begin{equation} \label{eqn:overall_guarantee_recursive}
\frac{\mb{P}[(u_0,u_1) \in \scr{M}]}{x_{u_0,u_1}} \ge a(x_{u_0,u_1})\int_{0}^{1} (\LB_{u_0 \ra u_1}^{G}(y_0, 4) + \LB_{u_1 \ra u_0}^{G}(y_0, 4) ) dy_0
\end{equation}
when applied to \eqref{eqn:overall_guarantee_ezra}.
Using the $1$-regularity of $\bm{x}$ and the analytic properties of $a_0$, \cite{Fu2021} argue that the right-hand side of \eqref{eqn:overall_guarantee_recursive}
is at least $8/15$ after expanding the recursive bound. A natural way to improve on \cite{Fu2021} would be to take $\ell = 6$,
and instead lower bound $\int_{0}^{1} (\LB_{u_0 \ra u_1}^{G}(y_0, 6) + \LB_{u_1 \ra u_0}^{G}(y_0, 6)) dy_0$. Unfortunately, since the analysis only considers vertices at graph distance at most $6$ from $u_0$ or $u_1$, an adversary can construct an input for which $\int_{0}^{1} (\LB_{u_0 \ra u_1}^{G}(y_0, 6) + \LB_{u_1 \ra u_0}^{G}(y_0, 6) )dy_0 = 8/15$ for some edge $(u_0,u_1)$, even when $\bm{x}$ is $1$-regular. While it's possible
that taking $\ell \ge 8$ might improve on $8/15$, the resulting optimization problem becomes increasingly complex, and so we abandon this approach.
Instead, we keep $\ell = 4$, and take $t \in [0,1]$ to get
\begin{equation} \label{eqn:overall_guarantee_recursive_new}
\frac{\mb{P}[(u_0,u_1) \in \scr{M}]}{x_{u_0,u_1}} \ge a(x_{u_0,u_1})\left(\frac{t^2}{2} + \int_{t}^{1} (\LB_{u_0 \ra u_1}^{G}(y_0, 4) + \LB_{u_1 \ra u_0}^{G}(y_0, 4) ) dy_0\right),
\end{equation}
after applying \Cref{lem:recursion_two_phase} to \eqref{eqn:overall_guarantee_ezra}.
Let $t_0 \in (0,1)$ be the unique root of the polynomial
\begin{equation} \label{eqn:t_root}
    4 t^6 + 16 t^5+ 100 t^4 + 180 t^3 + 80 t^2 - 4t-1,
\end{equation}
where $t_0 \approx 0.119$. The following analytic properties of $a_t$ are easily verified:
\begin{proposition}\label{prop:analytic_properties}
For each $z \in [0,1]$, we have
$f_t(z) = \frac{z(3 + 2 t^5 - 5 t^2)}{3 + 2 t^5 (1 - z) + 2 z + 10 t^3 z - 5 t^2 (1 + 2 z)}$.
\end{proposition}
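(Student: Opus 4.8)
The plan is to prove \Cref{prop:analytic_properties} as a direct algebraic identity between two rational functions of $t$ and $z$, since no probabilistic content is involved. The starting point is the definition $f_t(z)=z\,a_t(z)$, which gives
\[
f_t(z)=\frac{z\,(3+6t+4t^2+2t^3)}{(3+6t+4t^2+2t^3)+2z(1-t)(1+3t+t^2)}.
\]
The whole argument rests on the factorization
\[
(1-t)^2\,(3+6t+4t^2+2t^3)=3-5t^2+2t^5,
\]
which I would verify by expanding the left-hand side and noting that the coefficients of $t$, $t^3$, and $t^4$ all cancel. Granting this, multiply the numerator and denominator of the displayed expression for $f_t(z)$ by $(1-t)^2$: the numerator immediately becomes $z(3-5t^2+2t^5)$, matching the claimed numerator.

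For the denominator, the first summand becomes $3-5t^2+2t^5$ by the same factorization, while the second becomes $2z(1-t)^2\cdot(1-t)(1+3t+t^2)=2z(1-t)^3(1+3t+t^2)$. A second short expansion gives $(1-t)^3(1+3t+t^2)=1-5t^2+5t^3-t^5$, so the transformed denominator is $(3-5t^2+2t^5)+2z(1-5t^2+5t^3-t^5)$. It then remains only to regroup the claimed denominator $3+2t^5(1-z)+2z+10t^3z-5t^2(1+2z)$ by powers of $z$: its $z^0$-part is $3+2t^5-5t^2$ and its $z^1$-part is $2-2t^5+10t^3-10t^2=2(1-5t^2+5t^3-t^5)$, which agrees with what we obtained. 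This proves the identity as rational functions; the restriction $z\in[0,1]$ (with $t\in[0,1]$) is used only to guarantee the denominator is nonzero, which is clear since $3+6t+4t^2+2t^3+2z(1-t)(1+3t+t^2)\ge 3>0$ on that range.

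I do not expect any real obstacle here: the argument is elementary polynomial bookkeeping, and the only mild subtlety is spotting the two clean factorizations. They are not essential, however — an alternative would be to cross-multiply the two claimed forms of $f_t(z)$ directly, which reduces the proposition to the single polynomial identity $(3+6t+4t^2+2t^3)(1-5t^2+5t^3-t^5)=(1+2t-2t^2-t^3)(3-5t^2+2t^5)$, and then check that both sides expand to $3+6t-11t^2-13t^3+10t^4+7t^5+4t^6-4t^7-2t^8$. In the writeup I would present the factorization route, since it is shorter and more transparent, and mention the coefficient-matching computation only as an aside.
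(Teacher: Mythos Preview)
Your proposal is correct and in line with the paper's treatment: the paper states the identity is ``easily verified'' and gives no proof, so your direct algebraic verification (via the factorizations $(1-t)^2(3+6t+4t^2+2t^3)=3-5t^2+2t^5$ and $(1-t)^3(1+3t+t^2)=1-5t^2+5t^3-t^5$) is exactly the kind of check the paper leaves implicit. Nothing to add.
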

\begin{proposition} \label{prop:analytic_two_values}
Suppose $t \le t_0$, where $t_0$ satisfies \eqref{eqn:t_root}. 
Then for each $x, y \in [0,1]$,
\begin{align*}
f_t(x) \left(\frac{1}{3} + \frac{t^3}{6} - \frac{t^2}{2}
-\frac{(2- 2x - y)(3 + 2 t^5- 5 t^2)}{60}\right) & + f_t(y) \left(\frac{1}{3} + \frac{t^3}{6} - \frac{t^2}{2}
-\frac{(2- 2y - x)(3 + 2 t^5- 5 t^2)}{60}\right) \\
&\le \left(\frac{1}{3} + \frac{t^3}{6} - \frac{t^2}{2} -  \frac{(3 + 2 t^5- 5 t^2)}{30}\right)(x + y).
\end{align*}
\end{proposition}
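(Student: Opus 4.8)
The plan is to exploit the linear--fractional form of $f_t$ from \Cref{prop:analytic_properties} to collapse the two--variable inequality into a single one--variable polynomial inequality in $t$. Write $\beta := \tfrac13 + \tfrac{t^3}{6} - \tfrac{t^2}{2}$, $\gamma := 3 + 2t^5 - 5t^2$, and $\delta := 2(1 - 5t^2 + 5t^3 - t^5)$. Expanding the denominator appearing in \Cref{prop:analytic_properties} shows it equals $\gamma + \delta z$, so $f_t(z) = \tfrac{\gamma z}{\gamma + \delta z}$, and consequently $z - f_t(z) = \tfrac{\delta z^2}{\gamma + \delta z} = \tfrac{\delta}{\gamma}\, z f_t(z)$ --- this last identity will do essentially all the work. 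On the relevant range $t \in [0,t_0]$ (with $t_0 \approx 0.119$) one has $\gamma > 0$, $\delta > 0$, and $f_t \ge 0$ on $[0,1]$, since $\gamma$ and $\delta$ are small perturbations of their values $3$ and $2$ at $t = 0$.

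First I would rewrite the difference between the right- and left-hand sides of the claimed inequality. Moving everything to one side, substituting $x = (x - f_t(x)) + f_t(x)$ and $y = (y - f_t(y)) + f_t(y)$, and then repeatedly applying $z - f_t(z) = \tfrac{\delta}{\gamma} z f_t(z)$ and collecting like terms, the difference simplifies to
\[
\Bigl( x f_t(x) + y f_t(y) \Bigr)\Bigl( \frac{\delta(\beta - \gamma/30)}{\gamma} - \frac{\gamma}{30} \Bigr) \;-\; \frac{\gamma}{60}\Bigl( x f_t(y) + y f_t(x) \Bigr).
\]
This step is entirely mechanical given the identities above. Now, since $f_t$ is nondecreasing, $x f_t(x) + y f_t(y) - \bigl( x f_t(y) + y f_t(x) \bigr) = (x-y)(f_t(x) - f_t(y)) \ge 0$, and both $x f_t(x) + y f_t(y) \ge 0$ and $x f_t(y) + y f_t(x) \ge 0$. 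As the coefficient $-\gamma/60$ is negative, replacing $x f_t(y) + y f_t(x)$ by the larger quantity $x f_t(x) + y f_t(y)$ can only decrease the displayed expression, so it is at least $\bigl( x f_t(x) + y f_t(y) \bigr)\bigl( \tfrac{\delta(\beta - \gamma/30)}{\gamma} - \tfrac{\gamma}{30} - \tfrac{\gamma}{60} \bigr)$. Hence, the case $x = y = 0$ being trivial, it suffices to prove $\tfrac{\delta(\beta - \gamma/30)}{\gamma} - \tfrac{\gamma}{30} - \tfrac{\gamma}{60} \ge 0$, equivalently $20\,\delta\,(\beta - \gamma/30) \ge \gamma^2$, for all $t \in [0, t_0]$.

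It remains to verify this single inequality. Substituting the closed forms of $\beta, \gamma, \delta$ and expanding both sides yields the exact identity
\[
20\,\delta\Bigl(\beta - \frac{\gamma}{30}\Bigr) - \gamma^2 \;=\; -\frac{1}{3}\,(1-t)^4\bigl( 4t^6 + 16t^5 + 100t^4 + 180t^3 + 80t^2 - 4t - 1 \bigr),
\]
whose sextic factor is precisely the polynomial \eqref{eqn:t_root} that defines $t_0$. Since $(1-t)^4 \ge 0$, the left-hand side is nonnegative exactly when the sextic is $\le 0$; as the sextic equals $-1$ at $t = 0$ and $t_0$ is its unique root in $(0,1)$, it is indeed $\le 0$ on $[0, t_0]$, which finishes the proof. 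I expect the main obstacle to be purely clerical --- carrying out the two displayed algebraic identities without sign errors, especially the polynomial expansion that produces the $(1-t)^4$ factorization. Conceptually there is no slack: the rearrangement step is tight at $x = y$, which is why the threshold $t_0$ emerges on the nose rather than as a sufficient but loose condition.
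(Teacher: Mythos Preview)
Your proof is correct. The paper itself does not supply a proof of this proposition; it simply introduces \Cref{prop:analytic_properties} and \Cref{prop:analytic_two_values} together with the sentence ``The following analytic properties of $a_t$ are easily verified,'' and then uses them as black boxes inside the proof of \Cref{lem:expand_recursion}. So there is no authors' proof to compare against, and your write-up supplies exactly the missing verification.

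A few remarks on your argument, all confirmatory. Your key identity $z - f_t(z) = \tfrac{\delta}{\gamma}\, z f_t(z)$ is correct, since by \Cref{prop:analytic_properties} the denominator of $f_t(z)$ is indeed $\gamma + \delta z$ with your $\gamma,\delta$. The reduction of $\mathrm{RHS}-\mathrm{LHS}$ to $(xf_t(x)+yf_t(y))\bigl[\tfrac{\delta(\beta-\gamma/30)}{\gamma}-\tfrac{\gamma}{30}\bigr]-\tfrac{\gamma}{60}(xf_t(y)+yf_t(x))$ checks out line by line. Your use of the rearrangement inequality is valid because $f_t$ is nondecreasing ($f_t'(z)=\gamma^2/(\gamma+\delta z)^2>0$) and because $\gamma,\delta>0$ on $[0,1)$, which follows from the factorizations $\gamma=(1-t)^2(2t^3+4t^2+6t+3)$ and $\delta=2(1-t)^3(t^2+3t+1)$. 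Finally, the polynomial identity
\[
20\,\delta\Bigl(\beta-\frac{\gamma}{30}\Bigr)-\gamma^2 \;=\; -\frac{1}{3}(1-t)^4\bigl(4t^6+16t^5+100t^4+180t^3+80t^2-4t-1\bigr)
\]
is exact, and the sextic factor is precisely the defining polynomial \eqref{eqn:t_root} of $t_0$, so the inequality is nonnegative on $[0,t_0]$ as claimed. Your observation that the rearrangement step is tight at $x=y$, so that $t_0$ arises with no slack, is a nice explanation of why \eqref{eqn:t_root} is the ``right'' threshold.
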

For each fixed $t \in [0,t_0]$, we now use Propositions \ref{prop:analytic_properties} and \ref{prop:analytic_two_values},
to lower bound the right-hand side of \eqref{eqn:overall_guarantee_recursive_new}. As in the previously
described case of $t =0$ and $\ell =6$, our analysis of the right-hand side of \eqref{eqn:overall_guarantee_recursive_new} is tight.
\begin{lemma} \label{lem:expand_recursion}
If $\bm{x}=(x_e)_{e \in E}$ of $G =(V,E)$ is $1$-regular and $(u_0,u_1) \in E$, then for each $t \in [0,t_0]$,
$$
a(x_{u_0,u_1})\left(\frac{t^2}{2} + \int_{t}^{1} (\LB_{u_0 \ra u_1}^{G}(y_0, 4) + \LB_{u_1 \ra u_0}^{G}(y_0, 4) ) dy_0\right) \ge \frac{1}{30} (16 + 5 t^2 - 10 t^3 + 4 t^5).
$$
\end{lemma}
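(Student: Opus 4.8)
The plan is to follow the template of \cite{Fu2021}: fully expand the recursion of \Cref{lem:recursion_two_phase} from the base case down to level $4$, evaluate the nested integrals, and then use $1$-regularity together with \Cref{prop:analytic_properties} and \Cref{prop:analytic_two_values} to collapse the graph-dependent terms into the claimed closed form. Concretely, starting from $\UB_{w \ra w'}^{H}(y,1)=y$, I would compute in turn $\LB_{w \ra w'}^{H}(y,2)$, then $\UB_{w \ra w'}^{H}(y,3)$, and finally $\LB_{u_0 \ra u_1}^{G}(y_0,4)$ together with the symmetric $\LB_{u_1 \ra u_0}^{G}(y_0,4)$, where at each step the only new ingredient is an upper bound on a neighbourhood sum $\sum_{w' \in N_{H}(w) \setminus w''} f_t(x_{w,w'})$. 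I would bound such sums using $f_t(z) \le z$ and the $1$-regularity of $\bm x$, being careful about which of $u_0, u_1, u_2, u_3$ have been deleted from the host graph at the relevant recursion level: at the deepest (level-$2$) step the crude bound $\sum f_t \le 1$ suffices, whereas at the shallower steps I would retain the incident weights --- e.g.\ $\sum_{u_2 \in N(u_1)\setminus u_0}f_t(x_{u_1,u_2}) \le 1 - x_{u_0,u_1}$ and $\sum_{u_3 \in N(u_2)\setminus\{u_0,u_1\}}f_t(x_{u_2,u_3}) \le 1 - x_{u_1,u_2}$ --- since those negative terms are exactly what the final inequality needs.

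After this expansion, $\LB_{u_0 \ra u_1}^{G}(y_0,4)$ is an explicit polynomial in $y_0$, $t$, $x_{u_0,u_1}$, and the weights of edges incident to $u_0$ or $u_1$. I would integrate it over $y_0 \in [t,1]$, add the $\LB_{u_1 \ra u_0}^{G}(y_0,4)$ term, multiply by $a_t(x_{u_0,u_1})$, and use the identity $a_t(x_{u_0,u_1}) x_{u_0,u_1} = f_t(x_{u_0,u_1})$ --- together with \Cref{prop:analytic_properties} to put $f_t$ in its rational form --- to rewrite the entire lower bound in terms of $f_t$. One further application of $1$-regularity to the remaining neighbour sums reduces the claim to an inequality of the shape ``$f_t(x)\cdot(\text{affine in } x,y) + f_t(y)\cdot(\text{affine in } y,x) \le \text{const}(t)\cdot(x+y)$'', with $x$ and $y$ the weights of the two edges flanking $(u_0,u_1)$ at $u_1$ and at $u_0$; this is exactly \Cref{prop:analytic_two_values}. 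The role of the hypothesis $t \le t_0$ (with $t_0$ the root of \eqref{eqn:t_root}) is to ensure that the coefficients produced by the expansion --- in particular the sign of the quadratic coefficient --- lie in the range for which \Cref{prop:analytic_two_values} is valid; applying it and simplifying then yields $\tfrac1{30}(16 + 5t^2 - 10t^3 + 4t^5)$.

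I expect the main obstacle to be the bookkeeping of the expansion rather than any conceptual leap: keeping the polynomial algebra from the three nested $y$-integrations organized, tracking the deleted vertices so that $1$-regularity is applied with the correct incident terms at every level, and then massaging the resulting polynomial into precisely the form of \Cref{prop:analytic_two_values}. For the cleanest presentation one would also want to exhibit a ``locally path-like'' $1$-regular instance on which every inequality used is tight --- this is what both pins down the exact statement of \Cref{prop:analytic_two_values} and explains why $\ell = 4$ is the correct truncation depth (going to $\ell = 6$ gains nothing at $t = 0$). A secondary, purely computational, point is verifying the two analytic propositions and the value $t_0 \approx 0.119$ of the relevant root.
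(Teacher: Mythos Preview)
Your high-level plan --- expand the recursion to depth $4$, integrate, use $1$-regularity, and finish with \Cref{prop:analytic_properties} and \Cref{prop:analytic_two_values} --- is the same as the paper's, but the placement of the key cancellation is off, and this is not just bookkeeping.

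The point you are missing is that \Cref{prop:analytic_properties} is used not at the outermost layer (to rewrite $a_t(x_{u_0,u_1})$) but at level~$3$, inside the sum over $u_3$. At level~$2$ you must \emph{retain} the incident weight, i.e.\ use $\sum_{u_4\neq u_0,u_1,u_2} f_t(x_{u_3,u_4})\le 1-x_{u_2,u_3}$ rather than the crude $\sum f_t\le 1$ you propose. Carrying that $(1-x_{u_2,u_3})$ through, the double integral $\int_t^1\!\int_t^{y_0}\UB^{G\setminus u_0}_{u_1\to u_2}(y_1,3)\,dy_1\,dy_0$ contributes, for each $u_3$, the term
\[
f_t(x_{u_2,u_3})\cdot \tfrac{1}{60}\bigl(3 + 2t^5(1-x_{u_2,u_3}) + 2x_{u_2,u_3} + 10t^3 x_{u_2,u_3} - 5t^2(1+2x_{u_2,u_3})\bigr),
\]
and the rational form of $f_t$ in \Cref{prop:analytic_properties} collapses this \emph{exactly} to $\tfrac{1}{60}(3+2t^5-5t^2)\,x_{u_2,u_3}$. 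This cancellation is the reason $a_t$ was given its particular shape. If instead you bound $\sum_{u_3} f_t(x_{u_2,u_3}) \le 1 - x_{u_1,u_2}$ as you suggest, the sum over $u_3$ does not linearize, and the expression never reaches the form that \Cref{prop:analytic_two_values} handles.

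Two smaller corrections. First, \Cref{prop:analytic_two_values} is not applied once to ``the two edges flanking $(u_0,u_1)$''; it is applied \emph{termwise} for every $u_2\neq u_0,u_1$, with $(x,y)=(x_{u_1,u_2},x_{u_0,u_2})$, and the right-hand sides then sum via $1$-regularity to a multiple of $2-2x_{u_0,u_1}$. Second, after this step the bound depends only on $x_{u_0,u_1}$, and the proof finishes by checking that $a_t(x_{u_0,u_1})\bigl[\text{const}+x_{u_0,u_1}\cdot\text{const}'\bigr]$ is minimized at $x_{u_0,u_1}=0$, where $a_t(0)=1$ yields $\tfrac{1}{30}(16+5t^2-10t^3+4t^5)$; the outer $a_t$ is dealt with by this minimization, not by \Cref{prop:analytic_properties}.
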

Assuming \Cref{lem:expand_recursion}, \Cref{thm:general_graph_positive} follows by setting $t = t_0$ in \Cref{lem:expand_recursion} and using \eqref{eqn:overall_guarantee_recursive_new} to attain a selectability
bound of $\frac{1}{30} (16 + 5 t_0^2 - 10 t_0^3 + 4 t_0^5) \approx 0.535$.
\begin{proof}[Proof of \Cref{lem:expand_recursion}]
We begin by lower bounding $\int_{t}^{1} \LB_{u_0 \ra u_1}^{G}(y_0,4) dy_0$ by
\begin{equation} \label{eqn:u_1}
\frac{1}{2} - \frac{t^2}{2} - \frac{x_{u_0,u_1}(3 + 2 t^5 - 5 t^2)}{60} - \sum_{u_2 \in N(u_1) \setminus u_0} 
 f(x_{u_1,u_2}) \left(\frac{1}{3} + \frac{t^3}{6} - \frac{t^2}{2}
-\frac{(2- 2x_{u_2,u_0} - x_{u_2,u_1})(3 + 2 t^5- 5 t^2)}{60}\right)
\end{equation}
and $\int_{t}^{1} \LB_{u_1 \ra u_0}^{G}(y_0,4) dy_0$ by
\begin{equation} \label{eqn:u_0}
\frac{1}{2} - \frac{t^2}{2} - \frac{x_{u_0,u_1}(3 + 2 t^5 - 5 t^2)}{60} - \sum_{u_2 \in N(u_0) \setminus u_1} 
 f(x_{u_1,u_2}) \left(\frac{1}{3} + \frac{t^3}{6} - \frac{t^2}{2}
-\frac{(2- 2x_{u_2,u_1} - x_{u_2,u_0})(3 + 2 t^5- 5 t^2)}{60}\right).
\end{equation}
Before proving these lower bounds, we first
show how they allow us to complete the proof of the lemma. Observe
that for $t \in [0,t_0]$, \Cref{prop:analytic_two_values} implies that
\begin{align*}
&\sum_{u_2 \in N(u_1) \setminus u_0} f(x_{u_1,u_2}) \left(\frac{1}{3} + \frac{t^3}{6} - \frac{t^2}{2}
-\frac{(2- 2x_{u_2,u_0} - x_{u_2,u_1})(3 + 2 t^5- 5 t^2)}{60}\right)  \\
&+ \sum_{u_2 \in N(u_0) \setminus u_1} f(x_{u_1,u_2}) \left(\frac{1}{3} + \frac{t^3}{6} - \frac{t^2}{2}
-\frac{(2- 2x_{u_2,u_1} - x_{u_2,u_0})(3 + 2 t^5- 5 t^2)}{60}\right) \\
&\le \left(\frac{1}{3} + \frac{t^3}{6} - \frac{t^2}{2}-  \frac{(3 + 2 t^5- 5 t^2)}{30}\right)(2 - 2x_{u_0,u_1}),
\end{align*}
where we have used that $\sum_{u_2 \neq u_0, u_1} ( x_{u_1,u_2} + x_{u_0,u_2}) = 2 -2x_{u_1,u_0}$.
Thus, $\frac{t^2}{2} + \int_{t}^{1} (\LB_{u_0 \ra u_1}^{G}(y_0, 4) + \LB_{u_1 \ra u_0}^{G}(y_0, 4)) dy_0$ is lower bounded by
$$
1 - \frac{t^2}{2}-  2\left(\frac{1}{3} + \frac{t^3}{6} - \frac{t^2}{2} -  \frac{(3 + 2 t^5- 5 t^2)}{30}\right) + x_{u_0,u_1}\left(\frac{11}{30} - \frac{t^2}{2} + \frac{t^3}{3} - \frac{t^5}{5}\right),
$$
after applying the simplification 
$$2\left(\frac{1}{3} + \frac{t^3}{6} - \frac{t^2}{2} -  \frac{(3 + 2 t^5- 5 t^2)}{30}\right)- \frac{(3 + 2 t^5 - 5 t^2)}{60} = \frac{11}{30} - \frac{t^2}{2} + \frac{t^3}{3} - \frac{t^5}{5}.$$
The function $x_{u_0,u_1} \ra a(x_{u_0,u_1})\left(1 - \frac{t^2}{2}-  2\left(\frac{1}{3} + \frac{t^3}{6} - \frac{t^2}{2} -  \frac{(3 + 2 t^5- 5 t^2)}{30}\right) + x_{u_0,u_1}\left(\frac{11}{30} - \frac{t^2}{2} + \frac{t^3}{3} - \frac{t^5}{5}\right)\right)$ is minimized at $x_{u_0,u_1} = 0$ for each fixed $t \in [0,1]$. Since $a(0)=1$ and $1 - \frac{t^2}{2}-  2\left(\frac{1}{3} + \frac{t^3}{6} - \frac{t^2}{2} -  \frac{(3 + 2 t^5- 5 t^2)}{30}\right)= \frac{1}{30} (16 + 5 t^2 - 10 t^3 + 4 t^5)$, this implies
the bound claimed by the lemma.

We now prove \eqref{eqn:u_1} and \eqref{eqn:u_0}. Since the terms are symmetric, it suffices to
only prove \eqref{eqn:u_1}. First observe that by iterating
the recursive definition of
$\LB_{u_0 \ra u_1}^{G}(y_0,4)$ and integrating over $y_0 \in [0,1]$,
$$
\int_{t}^{1} \LB_{u_0 \ra u_1}^{G}(y_0,4) dy_0 = \frac{1}{2} - \frac{t^2}{2}  - \sum_{u_2 \neq u_0} 
 f(x_{u_1,u_2}) \left(\frac{(1-t)t^2}{2} +  \int_{t}^{1} \int_{t}^{y_0} (\UB^{G \setminus u_0}_{u_1 \ra u_2}(y_1, 3) + \UB^{G \setminus u_0}_{u_2 \ra u_1}(y_1, 3) ) dy_1 dy_0\right).
$$
We focus on upper bounding $\int_{t}^{1} \int_{t}^{y_0} \UB^{G \setminus u_0}_{u_1 \ra u_2}(y_1, 3)dy_1 dy_0$, as the upper bound on $\int_{t}^{1} \int_{t}^{y_0} \UB^{G \setminus u_0}_{u_2 \ra u_1}(y_1, 3) dy_1 dy_0$
will follow by reversing the roles of $u_1$ and $u_2$. Now, by expanding the recursion twice more,
$$
\UB^{G \setminus u_0}_{u_1 \ra u_2}(y_1, 3) = y_1  - \sum_{u_3  \in N(u_2) \setminus u_0,u_1} f(x_{u_2,u_3})  \left(\frac{t^2}{2} +  \int_{t}^{y_1} (\LB^{G \setminus u_0,u_1}_{u_2 \ra u_3}(y_2, 2)  + \LB^{G \setminus u_0,u_1}_{u_3 \ra u_2}(y_2, 2) ) dy_2\right),
$$
and
$$
\LB^{G \setminus u_0,u_1}_{u_2 \ra u_3}(y_2, 2) = y_2 - \sum_{u_4  \in N(u_3) \setminus u_0,u_1,u_2} f(x_{u_3,u_4})  \left(\frac{t^2}{2} +  \int_{t}^{y_2} (\UB^{G \setminus u_0,u_1,u_2}_{u_3 \ra u_4}(y_3, 1)  + \UB^{G \setminus u_0,u_1,u_2}_{u_4 \ra u_3}(y_3, 1) ) dy_3\right).
$$
But, $\UB^{G \setminus u_0,u_1,u_2}_{u_3 \ra u_4}(y_3, 1)  + \UB^{G \setminus u_0,u_1,u_2}_{u_4 \ra u_3}(y_3, 1) = 2y_3$,
so
\begin{align*}
    \LB^{G \setminus u_0,u_1}_{u_2 \ra u_3}(y_2, 2) &= y_2 - \sum_{u_4  \in N(u_3) \setminus u_0,u_1,u_2}f(x_{u_3,u_4})\left(y^2_2 -\frac{t^2}{2} \right) \\
    &\ge y_2 - ( 1- x_{u_2,u_3})\left(y^2_2 -\frac{t^2}{2} \right),
\end{align*}
where the inequality uses $f(z) \le z$, and the $1$-regularity of $\bm{x}$. Similarly,
$$
\LB^{G \setminus u_0,u_1}_{u_3 \ra u_2}(y_2, 2) \ge y_2 - ( 1- x_{u_2,u_3})\left(y^2_2 -\frac{t^2}{2} \right).
$$
By applying these bounds to $\UB^{G \setminus u_0}_{u_1 \ra u_2}(y_1, 3)$, after simplifcation
we get that
\begin{align*}
\UB^{G \setminus u_0}_{u_1 \ra u_2}(y_1, 3) &\le y_1 - \sum_{u_3  \in N(u_2) \setminus u_0,u_1} f(x_{u_2,u_3})  \left(\frac{t^2}{2} +  \int_{t}^{y_1} 2 \left(y_2 - ( 1- x_{u_2,u_3})\left(y^2_2 -\frac{t^2}{2} \right) \right) dy_2 \right)\\
&= y_1 - \sum_{u_3  \in N(u_2) \setminus u_0,u_1} f(x_{u_2,u_3}) \left(y_1^2 - \frac{t^2}{2}- \frac{(1-x_{u_2,u_3})}{3}(t^3 - 3 t^2 y_1 + 2 y_1^3) \right).
\end{align*}
We are now ready to upper bound $\int_{t}^{1} \int_{t}^{y_0} \UB^{G \setminus u_0}_{u_1 \ra u_2}(y_1, 3) dy_1 dy_0$. First observe that
$$\int_{t}^{1} \int_{t}^{y_0} y_1 dy_1 dy_0 =  \frac{1}{2} \left( \frac{1}{3} + \frac{2t^3}{3} - t^2\right)$$
and $\int_{t}^{1} \int_{t}^{y_0}\left(y_1^2 - \frac{t^2}{2}- \frac{(1-x_{u_2,u_3})}{3}(t^3 - 3 t^2 y_1 + 2 y_1^3) \right) dy_1 dy_0$ is equal to 
$$
\frac{1}{60}(3 + 2 t^5 (1 - x_{u_2,u_3}) + 2 x_{u_2,u_3} + 10 t^3 x_{u_2,u_3} - 5 t^2 (1 + 2 x_{u_2,u_3})).
$$
Thus, $\int_{t}^{1} \int_{t}^{y_0}\UB^{G \setminus u_0}_{u_1 \ra u_2}(y_1, 3) dy_0 dy_1$ is upper bounded
by
$$
\frac{1}{2} \left( \frac{1}{3} + \frac{2t^3}{3} - t^2\right)  - \sum_{u_3 \neq u_0,u_1} f(x_{u_2,u_3}) \frac{1}{60}(3 + 2 t^5 (1 - x_{u_2,u_3}) + 2 x_{u_2,u_3} + 10 t^3 x_{u_2,u_3} - 5 t^2 (1 + 2 x_{u_2,u_3})).
$$
Now, $f(z) = \frac{z(3 + 2 t^5 - 5 t^2)}{3 + 2 t^5 (1 - z) + 2 z + 10 t^3 z - 5 t^2 (1 + 2 z)}$ by \Cref{prop:analytic_properties}, and so
\begin{equation}\label{eqn:u1_to_u2}
\int_{t}^{1} \int_{t}^{y_0}\UB^{G \setminus u_0}_{u_1 \ra u_2}(y_1, 3) dy_1 dy_0 \le
\frac{1}{2} \left( \frac{1}{3} + \frac{2t^3}{3} - t^2\right)- \frac{(1- x_{u_2,u_0} - x_{u_2,u_1})(3 + 2 t^5 - 5 t^2)}{60},
\end{equation}
where we've used the $1$-regularity of $\bm{x}$. 
By reversing the roles of $u_1$ and $u_2$, we also get that 
\begin{equation}\label{eqn:u2_to_u1}
\int_{t}^{1} \int_{t}^{y_0}\UB^{G \setminus u_0}_{u_2 \ra u_1}(y_1, 3) dy_1 dy_0 \le \frac{1}{2} \left( \frac{1}{3} + \frac{2t^3}{3} - t^2\right) - \frac{(1- x_{u_1,u_0} - x_{u_2,u_0})(3 + 2 t^5 - 5 t^2)}{60}.
\end{equation}
Returning to $\int_{t}^{1} \LB_{u_0 \ra u_1}^{G}(y_0,4) dy_0$, since $\frac{1}{3} + \frac{2t^3}{3} - t^2 + \frac{(1-t)t^2}{2} = \frac{1}{3} + \frac{t^3}{6} - \frac{t^2}{2}$, we can apply \eqref{eqn:u1_to_u2} and \eqref{eqn:u2_to_u1}
to lower bound $\int_{t}^{1} \LB_{u_0 \ra u_1}^{G}(y_0,4) dy_0$ by
$$
\frac{1}{2} - \frac{t^2}{2} - \sum_{u_2  \in N(u_1) \setminus u_0} 
 f(x_{u_1,u_2}) \left(\frac{1}{3} + \frac{t^3}{6} - \frac{t^2}{2}
-\frac{(2- 2x_{u_2,u_0} - x_{u_1,u_0} - x_{u_2,u_1})(3 + 2 t^5 - 5 t^2)}{60}\right).
$$
Now, $\sum_{u_2  \in N(u_1) \setminus u_0} f(x_{u_1,u_2}) \le 1$, so we can remove the $x_{u_0,u_1}$ term from the summand and lower bound $\int_{t}^{1} \LB_{u_0 \ra u_1}^{G}(y_0,4) dy_0$ by
$$
\frac{1}{2} - \frac{t^2}{2} - \frac{x_{u_0,u_1}(3 + 2 t^5 - 5 t^2)}{60} - \sum_{u_2 \in N(u_1) \setminus u_0} 
 f(x_{u_1,u_2}) \left(\frac{1}{3} + \frac{t^3}{6} - \frac{t^2}{2}
-\frac{(2- 2x_{u_2,u_0} - x_{u_2,u_1})(3 + 2 t^5- 5 t^2)}{60}\right),
$$
which proves \eqref{eqn:u_1}, and thus the lemma.
\end{proof}
We complete the section by discussing the case of vanishing edges values,
as mentioned in \Cref{rem:general_graph_vanishing}. Crucially, the reduction of \cite{Fu2021} to make
$\bm{x}$ $1$-regular preserves vanishing edges values. Thus, we may assume that $x_e \le \eps$ for all $e \in E$, where $\eps(n)=\eps$ 
satisfies $\eps(n) \rightarrow 0$ as $n = |V| \rightarrow \infty$. 
The inequality of \Cref{prop:analytic_two_values} is then true up to a multiplicative error which tends to $0$
as $n \rightarrow \infty$
for \textit{all} $t \in [0,1]$. The same computations in the proof of \Cref{lem:expand_recursion} then apply, and we get that
$$
\frac{\mb{P}[e \in \scr{M}]}{x_{u,v}} \ge (1 - o_n(1)) \frac{1}{30} (16 + 5 t^2 - 10 t^3 + 4 t^5),
$$
where $o_{n}(1)$ is a function which tends to $0$ as $n \rightarrow \infty$.
The function $t \rightarrow \frac{1}{30} (16 + 5 t^2 - 10 t^3 + 4 t^5)$
is maximized on $[0,1]$ at $t = \frac{1}{2}(\sqrt{3}-1) \approx 0.366$ when it takes the value 
$\frac{4}{5} - \frac{3 \sqrt{3}}{20} \approx 0.540$.

\section{Proving \Cref{thm:hardness_vertex_rom}}

In order to prove \Cref{thm:hardness_vertex_rom},
we analyze the complete bipartite graph with
$2 n$ vertices and uniform edge values.
Let $G=(L,R,E)$ where $E = L \times R$, and $|L|= |R| =n$ for $n \ge 1$,
and set $x_e = 1/n$ for all $e \in E$. We work in the asymptotic setting as $n \rightarrow \infty$,
and say that a sequence of events $(\scr{E}_n)_{n \ge 1}$ occurs \textit{with high probability} (w.h.p.), provided $\mb{P}[\scr{E}_n] \rightarrow 1$ as $n \rightarrow \infty$.

\begin{lemma}\label{lem:expectation_negative}
For any RCRS which outputs matching $M$ on $G$, $\E[|M|] \le \frac{(1 + o(1))(1 + e^2)n}{2e^2}$.
\end{lemma}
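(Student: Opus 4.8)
Following the approach of \cite{Fu2021} for the complete graph, I would track the number of matched edges over the continuous time horizon and control it via the differential equation method, with the new ingredient being a concentration argument showing that a uniformly random vertex arrival order stays balanced between $L$ and $R$. For $t \in [0,1]$ let $m(t)$ be the number of edges of the output matching matched by time $t$ (so $m(1) = |M|$), and let $\ell(t)$ and $r(t)$ be the numbers of vertices of $L$ and of $R$ that have arrived by time $t$. The first step is the concentration lemma: since the arrival times are i.i.d.\ uniform on $[0,1]$, a Chernoff bound together with a union bound over a fine grid of times and monotonicity of $\ell(\cdot)$ and $r(\cdot)$ shows that the event $\mathcal{G} := \{\sup_{t}|\ell(t) - tn| \le n^{2/3}\} \cap \{\sup_{t}|r(t) - tn| \le n^{2/3}\}$ holds w.h.p.; on its complement we use only the trivial bound $m(1) \le n$.

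\textbf{One-step drift.} Let $\mathcal{F}_t$ denote the information revealed by time $t$ (the arrival times and choices $F_v$ seen so far, together with the RCRS's internal randomness). When a vertex $w \in L$ arrives at time $t$, it brings the single potential active edge $(F_w, w)$, where $F_w$ is uniform on $R$ and independent of $\mathcal{F}_t$ (in the reformulated arrival process $F_w$ only influences events at or after $w$'s arrival); this edge can enter the matching only if $F_w$ has already arrived and is currently unmatched, an event of conditional probability at most $(r(t) - m(t))/n$. Combining this with the symmetric statement for arriving $R$-vertices, and using that the vertices of $L$ (resp.\ $R$) not yet arrived by time $t$ have arrival times uniform on $[t,1]$, I would obtain
\begin{align*}
\E[m(t+dt) - m(t) \mid \mathcal{F}_t] &\le \frac{n - \ell(t)}{1-t}\cdot\frac{r(t) - m(t)}{n}\,dt + \frac{n - r(t)}{1-t}\cdot\frac{\ell(t) - m(t)}{n}\,dt \\
&= \frac{dt}{1-t}\Big( \ell(t) + r(t) - 2m(t) - \frac{2\ell(t) r(t)}{n} + \frac{m(t)(\ell(t)+r(t))}{n} \Big).
\end{align*}
For a fixed value of $\ell(t) + r(t)$ this bound is largest when $|\ell(t) - r(t)|$ is largest, so an adaptive RCRS could in principle exploit an asymmetric arrival order (if $L$ arrived entirely before $R$, a greedy RCRS would match $\approx (1 - 1/e)n$ edges). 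On $\mathcal{G}$, however, for $t \le 1 - n^{-1/4}$ we may substitute $\ell(t) = r(t) = tn + O(n^{2/3})$, so that $\ell(t)r(t)/n = t^2 n + o(n)$ and the displayed bound simplifies to $\big(2(tn - m(t)) + o(n)\big)\,dt$; the remaining interval $[1 - n^{-1/4}, 1]$ contributes only $O(n^{3/4}) = o(n)$ further arrivals and is handled separately.

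\textbf{Conclusion and main obstacle.} Let $g$ solve $g'(t) = 2(t - g(t))$ with $g(0) = 0$, that is, $g(t) = t - \tfrac12 + \tfrac12 e^{-2t}$, so $g(1) = \tfrac12 + \tfrac{1}{2e^2} = \tfrac{1+e^2}{2e^2}$. The drift bound yields $\E[m(t+dt) - g(t+dt)n \mid \mathcal{F}_t] \le (m(t) - g(t)n)(1 - 2\,dt) + o(n)\,dt$ on $\mathcal{G}$, and a discrete Grönwall / supermartingale argument — precisely the differential equation method of \cite{de, warnke2019wormalds, bennett2023}, using that $m(\cdot)$ has $\{0,1\}$-bounded increments and replacing $\mathcal{G}$ by its $\mathcal{F}_t$-measurable prefix events to make the stopping rigorous — gives $\E[m(1)\mathbf{1}_{\mathcal{G}}] \le g(1)n + o(n)$. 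Since $\E[m(1)\mathbf{1}_{\mathcal{G}^c}] \le n\,\mb{P}[\mathcal{G}^c] = o(n)$, we conclude $\E[|M|] = \E[m(1)] \le \tfrac{(1+o(1))(1+e^2)n}{2e^2}$. The main obstacle is making the one-step drift bound rigorous for an arbitrary adaptive (possibly randomized) RCRS while carrying the lower-order error terms: the crux is that the only channel through which a random arrival order could benefit the algorithm is the cross term $\ell(t)r(t)/n$, and that the concentration event $\mathcal{G}$ pins this term down to $t^2 n + o(n)$ — the quantitative form of the statement that a uniform random arrival order does not help the algorithm by more than $o(n)$.
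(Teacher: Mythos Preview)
Your proposal is correct and follows essentially the same approach as the paper: bound the one-step drift of the matching size using a concentration event guaranteeing that $L$- and $R$-arrivals stay balanced, then apply the differential-equation method of \cite{bennett2023}. The only difference is your choice of time parametrization: you track continuous time $t\in[0,1]$ with ODE $g'(t)=2(t-g(t))$, whereas the paper tracks the discrete round index $i\in\{0,\ldots,2n\}$ with rescaled ODE $m'(s)=s/2-m(s)$ on $[0,2]$; these are equivalent under $s=2t$, and the paper's discrete-round version is marginally cleaner since $|L_t|+|R_t|=t$ holds exactly, avoiding the cross term $\ell(t)r(t)/n$ in your drift expansion.
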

Assuming \Cref{lem:expectation_negative}, \Cref{thm:hardness_vertex_rom} then follows immediately.
\begin{proof}[Proof of \Cref{thm:hardness_vertex_rom} using \Cref{lem:expectation_negative}]
Suppose that we fix an arbitrary RCRS which is $c$-selectable for $c \ge 0$,
and let $M$ be the matching it creates when executing on $G$. 
Clearly,
$\mb{E}[ |M|]/n \ge c$ by definition of $c$-selectability. 
By applying \Cref{lem:expectation_negative},
and taking $n\to\infty$, we get that $c\le (1+\sfe^{-2})/2$.
\end{proof}

To prove \Cref{lem:expectation_negative}, we consider an algorithm for maximizing $\bE[|M|]$ and show that the number of edges in the matching cannot exceed $\frac{(1 + o(1))(1 + \sfe^{-2})n}{2}$ in expectation.  Without loss of generality, we can assume such an algorithm is deterministic.

For each \textit{round} $1 \le i \le 2 n$, let $v_i$ be the $i^{th}$ vertex of $G$ presented to the RCRS.
Observe that if $V_t := \{v_1, \ldots , v_t\}$, then conditional on $V_t$, $v_{t+1}$ is distributed
uniformly at random (u.a.r.) amongst $V(G) \setminus V_t$ for $0 \le t \le 2n -1$. Let $F_{t+1} := F_{v_{t+1}}$ be the vertex chosen by $v_{t+1}$.
Observe that $\mb{P}[ F_{t+1} = v_i \mid V_t, v_{t+1}] = 1/n$ for each $v_i \in N(v_{t+1})$.

If $\scr{M}_t$ is the matching
constructed by the RCRS after $t$ rounds, then since the RCRS is deterministic,
$\scr{M}_t$ is a function of $(v_i, F_i)_{i=1}^{t}$. Thus, $\scr{M}_t$
is measurable with respect to $\scr{H}_t$, the sigma-algebra generated from $(v_i, F_i)_{i=1}^{t}$
(here $\scr{H}_{0}:= \{ \emptyset, \Omega\}$, the trivial sigma-algebra).
We refer to $\scr{H}_t$ as the \textit{history} after $t$ steps.

Let $m(s):= (\sfe^{-s} + s -1)/2$ for each real $s \ge 0$. Note that $m$ is the unique solution
to the differential equation $m'(s) = s/2- m(s)$
with initial condition $m(0)=0$.
Roughly speaking,
we will prove that w.h.p., the random variable $M(t)/n$ is upper bounded by  $(1 + o(1)) m(t/n)$ for each $0 \le t \le 2n$,
where $M(t)$ is the number of edges in $\scr{M}_t$.

\begin{proposition} \label{prop:de_dominance}
Fix any constant $0 \le \eps < 2$. Then, $\frac{M(t)}{n} \le (1 + o(1)) m(t/n)$ for all $0 \le t \le \eps n$ with probability at least $1 - o(1/n)$.
\end{proposition}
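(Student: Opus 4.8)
The plan is to run the differential equation method, comparing the (without loss of generality deterministic) algorithm's match count $M(t)$ to $n\,m(t/n)$, with the imbalance between how many vertices have arrived on the two sides playing the role of the only ``error term''. Write $\scr{H}_t$ for the history after $t$ rounds, let $a_t$ (resp.\ $b_t$) be the number of vertices of $L$ (resp.\ $R$) among $v_1,\dots,v_t$, and set $\Delta_t := a_t - t/2 = t/2 - b_t$. Since declining to match only decreases $M$, in round $t+1$ we have $M(t+1)-M(t) \le \mb{1}[F_{t+1}\in V_t \text{ and } F_{t+1} \text{ is unmatched in } \scr{M}_t]$ for \emph{any} algorithm. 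Conditioning first on the side of $v_{t+1}$ (uniform among the $2n-t$ unarrived vertices) and then using that $F_{t+1}$ is uniform over the $n$ vertices on the opposite side, together with $a_t+b_t=t$ and the identity $(n-a_t)(b_t-M(t))+(n-b_t)(a_t-M(t)) = 2(n-t/2)(t/2-M(t)) + 2\Delta_t^2$, I get
\[
\E[M(t+1)-M(t)\mid \scr{H}_t] \;\le\; \frac{t/2 - M(t)}{n} \;+\; \frac{\Delta_t^2}{(n-t/2)\,n}.
\]
The first term is exactly the drift that produces $m$ (recall $m'(s)=s/2-m(s)$); the second is an over-count in our favor, governed entirely by the imbalance.

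The next step is to control the imbalance. Here $a_t$ is hypergeometric (a uniformly random size-$t$ subset of $n$ left and $n$ right vertices), so Hoeffding's inequality for sampling without replacement gives $\mb{P}[|\Delta_t| \ge n^{3/5}] \le 2\exp(-2n^{6/5}/t) \le 2\exp(-(2/\eps)\,n^{1/5})$ for $t\le \eps n$; a union bound over $t$ shows the event $\scr{G} := \{\max_{0\le t\le \eps n}|\Delta_t|\le n^{3/5}\}$ has probability $1-o(1/n)$. On $\scr{G}$, for $t<\eps n$ (so $n-t/2\ge (1-\eps/2)n$), the error term above is at most $\delta_n := n^{6/5}/((1-\eps/2)n^2) = O(n^{-4/5})$. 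This is the step that encodes the conceptual point of the section: a uniformly random arrival order keeps the two sides nearly balanced, so the algorithm cannot exploit the asymmetry that would be available to it against the complete-graph instance of \citet{Fu2021}.

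Then I would set $D(t):=M(t)-n\,m(t/n)$, so $D(0)=0$ and $|D(t+1)-D(t)|\le 1 + \sup|m'| + \tfrac{1}{2}\sup|m''| = O(1)$. Writing $m((t+1)/n)-m(t/n) = \tfrac1n m'(t/n) + O(n^{-2})$ with $m'(t/n)=\tfrac{t}{2n}-m(t/n)$ and plugging in the drift bound, one obtains on $\scr{G}$, for every $t<\eps n$,
\[
\E[D(t+1)\mid \scr{H}_t] \;\le\; \Bigl(1-\tfrac1n\Bigr)D(t) + \delta_n .
\]
The contracting factor $1-1/n$ is the self-correction that prevents $D$ from accumulating; in particular $D(t)-\delta_n t$ behaves like a supermartingale with $O(1)$ increments along the part of the trajectory where $D(t)\ge 0$. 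Introducing the stopping time $\tau := \min\{t : |\Delta_t|>n^{3/5}\}\wedge\lceil\eps n\rceil$, so the displayed bound is valid for all $t<\tau$, and applying the one-sided Azuma--Hoeffding estimate underlying the differential equation method (as in \cite{de, warnke2019wormalds, bennett2023}), I would conclude $\mb{P}[\exists\, t\le \eps n:\ D(t\wedge\tau)\ge \sqrt{n}\log n] = o(1/n)$. Combining this with $\mb{P}[\tau<\eps n]=\mb{P}[\neg\scr{G}]=o(1/n)$ gives, with probability $1-o(1/n)$, that $M(t)\le n\,m(t/n)+\sqrt{n}\log n$ for all $t\le \eps n$, i.e.\ $M(t)/n\le m(t/n)+o(1)$, which is $(1+o(1))m(t/n)$ for $t$ of order $n$ (and the additive form is what is used for smaller $t$).

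The main obstacle is not any single estimate but the bookkeeping of the differential equation method: correctly isolating $\Delta_t$ as the \emph{only} source of error in the one-step change of $M$, showing $\Delta_t$ stays $O(n^{3/5})$ (hence $o(n)$) \emph{uniformly} over all of the first $\eps n$ rounds via a maximal concentration bound, and then pushing the resulting supermartingale estimate through despite the drift inequality being one-sided (available only while $D\ge 0$) and holding only on the favorable event $\scr{G}$ --- the latter handled in the standard way by truncating at $\tau$ and, when bounding the probability of an up-crossing to level $\Lambda$, looking back to the last time before it that $D$ was below a fixed constant (a union bound over the at most $\eps n$ candidate such times costs only a polynomial factor).
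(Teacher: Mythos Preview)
Your proposal is correct and follows essentially the same approach as the paper: control the left/right imbalance via hypergeometric concentration, show this makes the one-step drift of $M(t)$ close to $t/(2n)-M(t)/n$, and then run a differential-equation/supermartingale comparison with $m$. The paper packages the drift bound slightly differently (as a multiplicative $(1+n^{-1/3})$ error on an event $Q_t$) and invokes the one-sided differential-equation method of \cite{bennett2023} as a black box rather than carrying out the Azuma/up-crossing argument by hand, but your explicit isolation of the $\Delta_t^2/((n-t/2)n)$ term is an equally valid (and arguably cleaner) route to the same conclusion.
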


We emphasize that in \Cref{prop:de_dominance}, a constant $\eps$ is fixed first, and $n$ is taken to $\infty$ afterward.
As we can take constant $\eps$ to be arbitrarily close to $2$, the $(1+\sfe^{-2})/2$ upper bound in \Cref{lem:expectation_negative} follows easily:

\begin{proof}[Proof of \Cref{lem:expectation_negative} using \Cref{prop:de_dominance}]
Fix $0 \le \eps < 2$. Observe that \Cref{prop:de_dominance} implies
\begin{equation}
    \frac{\E[ M(\eps n)]}{n} \le  (1 - o(1/n))(1 + o(1)) m(\eps) + o(1/n)\eps = (1 + o(1)) m(\eps),
\end{equation}
where the $o(1/n)\eps$ term uses the trivial upper bound that $M(\eps n) \le \eps n$.
Moreover, the same bound yields $\E[ M(2 n) - M(\eps n)] \le (2 - \eps) n $. Thus, since $m(\eps)$
is increasing in $\eps$,
\[
    \frac{\E[M(2n)]}{n} \le (1 + o(1))\frac{(\sfe^{-\eps} + \eps -1)}{2} + (2 - \eps) \le \frac{(1 + \sfe^{-2})}{2} + (2 - \eps),
\]
for all $n$ sufficiently large. Since this holds for each $0 \le \eps < 2$,  we get that
$\frac{\E[ M(2n)]}{n} \le (1 + o(1)) \frac{(1 + \sfe^{-2})}{2}$.
As $\E[ M(2n)]$ is an upper bound on the expected
size of any matching created by an RCRS, the proof is complete.
\end{proof}

In order to prove \Cref{prop:de_dominance}, for each constant $0 \le \eps < 2$,
and $0 \le t \le \eps n$, 
we first upper bound the expected one-step changes of $M(t)$, conditional
on the current history $\scr{H}_t$. More formally, we upper bound
$\E[ \Delta M(t) \mid \scr{H}_t]$, where $\Delta M(t):= M(t+1) - M(t)$.
Our goal is to show that \begin{equation} \label{eqn:de_expected_change}
\E[ \Delta M(t) \mid \scr{H}_t] \le \frac{t}{2n} -\frac{M(t)}{n}.
\end{equation}
This bound only holds for certain instantiations of the random variables
$(v_i,F_i)_{i=1}^t$. Specifically, there must be approximately the same
number of vertex arrivals in $R$ as in $L$ by step $t$.
We formalize this using the event $Q_t$,
which is defined to occur when $|R \setminus R_t| \le (1 + n^{-1/3}) (2n -t)/2$
and $|L \setminus L_t| \le (1 + n^{-1/3}) (2n -t)/2$.
We first argue that $Q_t$ occurs w.h.p. for all $0 \le t \le \eps n$.
\begin{lemma} \label{lem:high_probability_event}
$\mb{P}[\cap_{t=0}^{\eps n} Q_t] \ge 1 - o(1/n)$.
    
\end{lemma}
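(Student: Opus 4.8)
The plan is to reduce each event $Q_t$ to a single concentration statement about a hypergeometric random variable, bound the failure probability of that statement by something exponentially small in a power of $n$, and then take a union bound over the $O(n)$ rounds $t \le \eps n$.

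First I would set up the reduction. Write $L_t := V_t \cap L$ and $R_t := V_t \cap R$, so that $|L_t| + |R_t| = t$ deterministically, and hence $|L \setminus L_t| + |R \setminus R_t| = 2n - t$. Since the arrival order is a uniformly random permutation of $L \cup R$, the random variable $|L_t|$ has the hypergeometric distribution obtained by drawing $t$ elements without replacement from a ground set of size $2n$ containing exactly $n$ elements of $L$; in particular $\E[|L_t|] = t/2$. Unwinding the definition of $Q_t$ using $|L \setminus L_t| = n - |L_t|$ and $|R \setminus R_t| = t - |L_t| + (n - t) \ge \ldots$, one checks that $Q_t$ is \emph{exactly} the event $\bigl|\,|L_t| - t/2\,\bigr| \le \lambda_t$, where $\lambda_t := n^{-1/3}(2n - t)/2$. (The two inequalities defining $Q_t$ become the upper and lower deviation bounds on $|L_t|$, respectively.) So it suffices to control the two-sided deviation of the single hypergeometric variable $|L_t|$.

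Next I would apply a Hoeffding-type tail inequality for sampling without replacement. One cannot use the ordinary independent-Bernoulli Chernoff bound directly, since the indicators $\{v_i \in L\}$ are not independent; this is the one place requiring a little care. The fix is Hoeffding's classical observation (or Serfling's refinement) that the mean of a sample drawn without replacement is at least as concentrated as one drawn with replacement, which yields $\mb{P}\bigl[\,|\,|L_t| - t/2\,| \ge \lambda\,\bigr] \le 2\exp(-2\lambda^2/t)$. Plugging in $\lambda = \lambda_t$ and using $2n - t \ge (2-\eps)n$ (valid since $t \le \eps n$) together with $t \le 2n$, we get $2\lambda_t^2/t \ge \tfrac{(2-\eps)^2}{4}\,n^{1/3}$, hence $\mb{P}[\neg Q_t] \le 2\exp\bigl(-\tfrac{(2-\eps)^2}{4} n^{1/3}\bigr)$ for every $0 \le t \le \eps n$. (For the small-$t$ regime where $\lambda_t$ may exceed $t/2$, the event $\neg Q_t$ is simply vacuous, so the bound holds trivially there too.) Finally, a union bound over the at most $\eps n + 2$ values of $t$ gives $\mb{P}\bigl[\bigcup_{t=0}^{\eps n}\neg Q_t\bigr] \le 2(\eps n + 2)\exp\bigl(-\tfrac{(2-\eps)^2}{4} n^{1/3}\bigr)$, which is $o(1/n)$ since the exponential term decays faster than any polynomial in $n$ (here $\eps < 2$ is fixed, so $(2-\eps)^2/4$ is a positive constant). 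This proves $\mb{P}[\cap_{t=0}^{\eps n} Q_t] \ge 1 - o(1/n)$.

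There is no genuine obstacle here — the lemma is a routine concentration estimate — but the two points worth stating carefully in the write-up are: (i) the exact algebraic equivalence of $Q_t$ with a two-sided deviation bound on the single variable $|L_t|$ (so that only one concentration bound, not two correlated ones, is needed), and (ii) invoking a concentration inequality valid for sampling without replacement rather than naively treating the arrivals as i.i.d. The slack $n^{-1/3}$ in the window width is chosen precisely so that $\lambda_t = \Theta(n^{2/3})$, making $\lambda_t^2/t = \Theta(n^{1/3}) \to \infty$ while still $\lambda_t/n \to 0$, which is what later lets $Q_t$ hold w.h.p.\ yet be a useful restriction when establishing \eqref{eqn:de_expected_change}.
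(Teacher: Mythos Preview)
Your proposal is correct and follows essentially the same approach as the paper: identify $|L_t|$ (equivalently $|R_t|$) as a hypergeometric random variable, invoke a concentration inequality valid for sampling without replacement to get $\mb{P}[\neg Q_t] = o(1/n^2)$, and union bound over the $O(n)$ rounds. Your reduction of $Q_t$ to a single two-sided deviation bound on $|L_t|$ via the identity $|L_t| + |R_t| = t$ is a minor streamlining over the paper, which simply bounds the $L$-side and $R$-side inequalities separately, but otherwise the arguments are the same.
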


\begin{proof}
Fix $0 \le t \le \eps n$.
Observe that since $V_t$ is a uniform at random subset of $V$ of size $t$,
$R_t := |R \cap V_t|$ is
distributed as a hyper-geometric random variable on a universe of size
$2 n$ with $t$ trials and success probability $1/2$.
We denote this by $|R_t| \sim \text{Hyper}(2n ,n, t)$. Similarly,
$L_t:=|L \cap V_t| \sim \text{Hyper}(2n ,n, t)$.

Now, $\text{Hyper}(N,K, j)$ is at least as concentrated as $\Bin(j, K/N)$ (see Chapter~21 in~\cite{Frieze2015} for details).
Since $\eps <2$, we can apply standard Chernoff bounds to $R_t$ and $L_t$ to get that $\mb{P}[Q_t] \ge 1 - o(1/n^2)$. The result follows by applying a union bound over $0 \le t \le \eps n$.
\end{proof}

\begin{lemma} \label{lem:matching_expected_difference}
For each $0 \le t \le  \eps n$, if $Q_t$ occurs, then
$\E[ \Delta M(t) \mid \scr{H}_t] \le (1 + n^{-1/3}) \left( \frac{t}{2n} - \frac{M(t)}{n} \right)$.
\end{lemma}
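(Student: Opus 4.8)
The plan is to condition on the history $\scr{H}_t$ and bound the one-step increment $\Delta M(t)=M(t+1)-M(t)\in\{0,1\}$ pointwise. Given $\scr{H}_t$, the ordered arrival set $V_t$ is determined, hence so are $a:=|L\cap V_t|$ and $b:=|R\cap V_t|$ with $a+b=t$, and so is $\scr{M}_t$, which — being a matching in the bipartite graph $G$ — saturates exactly $M(t)$ vertices of $L$ and $M(t)$ of $R$. Now $\Delta M(t)=1$ forces the $(t{+}1)$-st vertex $v_{t+1}$ to bring, at round $t{+}1$, an active edge incident to a currently unmatched vertex. If $v_{t+1}\in L$ this happens only when its choice $F_{v_{t+1}}$ (uniform over $R$ and independent of $\scr{H}_t$) lies in $R\cap V_t$ and is unmatched, which given $\scr{H}_t$ and $v_{t+1}\in L$ has probability $(b-M(t))/n$; symmetrically, if $v_{t+1}\in R$ the probability is $(a-M(t))/n$. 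Since $v_{t+1}$ is uniform over the $2n-t$ unarrived vertices given $\scr{H}_t$, this yields
\begin{equation*}
\E[\Delta M(t)\mid\scr{H}_t]\ \le\ \frac{(n-a)(b-M(t))+(n-b)(a-M(t))}{n(2n-t)}.
\end{equation*}
Expanding the numerator and using $a+b=t$ (so that $t^2-4ab=(a-b)^2$), this simplifies to
\begin{equation*}
\E[\Delta M(t)\mid\scr{H}_t]\ \le\ \frac{t}{2n}-\frac{M(t)}{n}+\frac{(a-b)^2}{2n(2n-t)}.
\end{equation*}

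It then remains to show the error term $\frac{(a-b)^2}{2n(2n-t)}$ is at most $n^{-1/3}\big(\frac{t}{2n}-\frac{M(t)}{n}\big)$, and for this two elementary facts combine. First, since $\scr{M}_t$ saturates $M(t)$ vertices on each side, $M(t)\le\min(a,b)$, so $\frac{t}{2n}-\frac{M(t)}{n}\ge\frac{a+b-2\min(a,b)}{2n}=\frac{|a-b|}{2n}$. Second, unwinding $Q_t$ via $|R\setminus R_t|=n-b$, $|L\setminus L_t|=n-a$ and $(n-a)+(n-b)=2n-t$ forces both $n-a$ and $n-b$ into the interval $\big[(2n-t)(1-n^{-1/3})/2,\ (2n-t)(1+n^{-1/3})/2\big]$, hence $|a-b|\le(2n-t)n^{-1/3}$. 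Writing $\frac{(a-b)^2}{2n(2n-t)}=\frac{|a-b|}{2n-t}\cdot\frac{|a-b|}{2n}$ and substituting the two bounds gives $\frac{(a-b)^2}{2n(2n-t)}\le n^{-1/3}\cdot\big(\frac{t}{2n}-\frac{M(t)}{n}\big)$, which combined with the displayed inequality proves $\E[\Delta M(t)\mid\scr{H}_t]\le(1+n^{-1/3})\big(\frac{t}{2n}-\frac{M(t)}{n}\big)$.

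I expect the only real subtlety to be the first step: one must correctly account for which active edges are actually \emph{processed} at round $t{+}1$ — namely those whose endpoint selected via $F_{v_{t+1}}$ has already arrived (is in $V_t$) — so that the conditional acceptance probability cleanly equals $(b-M(t))/n$ or $(a-M(t))/n$ rather than something involving all of $R$ or $L$. After that, the argument is a two-line algebraic simplification plus the two estimates above ($M(t)\le\min(a,b)$ and the consequence $|a-b|\le(2n-t)n^{-1/3}$ of $Q_t$). The remaining ingredients used to conclude \Cref{lem:expectation_negative} — that $Q_t$ holds simultaneously for all $t\le\eps n$ with probability $1-o(1/n)$ (\Cref{lem:high_probability_event}) and the differential-equation-method domination (\Cref{prop:de_dominance}) — are handled separately.
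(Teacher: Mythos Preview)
Your proof is correct. Both you and the paper start from the same decomposition
\[
\E[\Delta M(t)\mid\scr{H}_t]\ \le\ \frac{n-a}{2n-t}\cdot\frac{b-M(t)}{n}+\frac{n-b}{2n-t}\cdot\frac{a-M(t)}{n},
\]
but diverge in how $Q_t$ is applied. The paper immediately replaces each of $\frac{n-a}{2n-t}$ and $\frac{n-b}{2n-t}$ by the common upper bound $(1+n^{-1/3})/2$ granted by $Q_t$; the two remaining terms then sum to $\frac{t-2M(t)}{n}$ and the lemma follows in one line, without ever invoking $M(t)\le\min(a,b)$. You instead carry the exact expression through, extract the error term $\frac{(a-b)^2}{2n(2n-t)}$, and then control it via the combination of $|a-b|\le(2n-t)n^{-1/3}$ (from $Q_t$) and $\frac{t}{2n}-\frac{M(t)}{n}\ge\frac{|a-b|}{2n}$ (from $M(t)\le\min(a,b)$). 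Your route is a bit longer and uses one extra ingredient, but it has the minor aesthetic virtue of isolating precisely the quantity $\frac{(a-b)^2}{2n(2n-t)}$ that measures how much the bipartite asymmetry could help the algorithm.
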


\begin{proof}
First observe that
$\E[ \Delta M(t) \mid \scr{H}_t]  = \mb{P}[ v_{t+1} \in \scr{M}_{t+1} \mid \scr{H}_t].$
Now,  $\mb{P}[ v_{t+1} \in \scr{M}_{t+1} \mid \scr{H}_t]$ is equal to
$$
\mb{P}[ v_{t+1} \in \scr{M}_{t+1} \mid \scr{H}_t, v_{t+1} \in R] \cdot \mb{P}[ v_{t+1} \in R \mid \scr{H}_t] + \mb{P}[ v_{t+1} \in \scr{M}_{t+1} \mid \scr{H}_t, v_{t+1} \in L ] \cdot \mb{P}[ v_{t+1} \in L \mid \scr{H}_t]
$$
Clearly, $\mb{P}[ v_{t+1} \in R \mid \scr{H}_t] = |R \setminus R_t|/(2n -t)$
and $\mb{P}[ v_{t+1} \in L \mid \scr{H}_t] = |L \setminus L_t|/(2n -t)$. On the other hand,
if $Q_t$ occurs, then $\max\{ |R \setminus R_t|, |L \setminus L_t|\} \le (1 + n^{-1/3}) (2n -t)/2$.
Thus, if $Q_t$ occurs, then we can upper bound $\mb{P}[ v_{t+1} \in \scr{M}_{t+1} \mid \scr{H}_t]$
by
\[
  \frac{(1 + n^{-1/3})}{2} \left( \mb{P}[ v_{t+1} \in \scr{M}_{t+1} \mid \scr{H}_t, v_{t+1} \in R] + \mb{P}[ v_{t+1} \in \scr{M}_{t+1} \mid \scr{H}_t, v_{t+1} \in L ] \right)
\]
It remains to bound $\mb{P}[ v_{t+1} \in \scr{M}_{t+1} \mid \scr{H}_t, v_{t+1} \in R]$
and $\mb{P}[ v_{t+1} \in \scr{M}_{t+1} \mid \scr{H}_t, v_{t+1} \in L ]$. We begin with the first
expression.

Observe that if we condition on $\scr{H}_t$ and $v_{t+1} \in R$, then 
$v_{t+1}$ can be added to the matching if and only if $F_{t+1} =u_{t+1}$ 
is not matched by $\scr{M}_t$. On the other hand, conditional
on $v_{t+1} \in R$, $u_{t+1}$ is distributed u.a.r. in $L$. Thus,
$$\mb{P}[ v_{t+1} \in \scr{M}_{t+1} \mid \scr{H}_t,  v_{t+1} \in R] \le \frac{|L_t| - |\scr{M}_t|}{n}.$$
and by symmetry,
$
\mb{P}[ v_{t+1} \in \scr{M}_{t+1} \mid \scr{H}_t,  v_{t+1} \in L] \le \frac{|R_t| - |\scr{M}_t|}{n}.
$
Combining everything together,
\begin{align*}
\mb{P}[ v_{t+1} \in \scr{M}_{t+1} \mid \scr{H}_t] \cdot \bm{1}_{Q_t} &\le \frac{(1 + n^{-1/3})}{2} \left(\frac{|L_t| - |\scr{M}_t|}{n} + \frac{|R_t| - |\scr{M}_t|}{n}\right) \\
&= (1 + n^{-1/3})\left( \frac{t}{2n} - \frac{M(t)}{n} \right),
\end{align*}
where the final equation uses $|L_t| + |R_t| =t$. 
\end{proof}
In order to complete the proof of \Cref{prop:de_dominance}, consider $(M(t)/n)_{t=0}^{\eps n}$. Observe that if we scale $t$ by $n$,
then we can view $\mb{E}[ \Delta M(t)]$ as the ``derivative'' of
$(M(t)/n)_{t=0}^{\eps n}$ evaluated at $t/n$. Thus, \Cref{lem:high_probability_event,lem:matching_expected_difference}  imply that 
$(M(t)/n)_{t=0}^{\eps n}$ satisfies the \textit{differential inequality}, $r'(s) \le s/2 - r(s)$ with $r(0)=0$. Intuitively, since $m(s)$ satisfies the corresponding differential \textit{equation}, $m' = s/2 - m(s)$ with $m(0)=0$,
this suggests that $M(t)/n$ ought to be dominated by $m(t/n)$. This is
precisely the statement of \Cref{prop:de_dominance}, and assuming the above lemmas,
follows from the general purpose ``one-sided'' differential equation method of \citet{bennett2023}.
We provide the details
in \Cref{sec:de_dominance}.

\appendix 

\section{Deferred Proofs}\label{appendix}

\subsection{Rank-$1$ RCRS} \label{sec:rank_one}

Consider set $V$ with fractional values $(x_v)_{v \in V}$ satisfying $\sum_{v \in V} x_v =1$. Each $v$ has an independently drawn indicator random variable $X_v \sim \Ber(x_v)$ for the event $v$ is active, and an arrival time $Y_v$ drawn uniformly from $[0,1]$. An RCRS can select at most one active $v \in V$. For an arbitrary RCRS, we define $S(y)$ to be the indicator random variable for the event that no vertex is selected by time $y \in [0,1]$.

Recall the recursively defined RCRS from  \Cref{sec:introContInd} using $c(y) := \sfe^{-y}$: Upon the arrival
of $v$ with $Y_v = y$, the RCRS draws an independent random bit $A_v$ which
is $1$ with probability
$
c(y)/\mb{E}[S(y) \mid Y_v = y].
$
Then, if no selection was made before time $Y_v$, and $A_v \cdot X_v = 1$,
the RCRS selects $v$. This RCRS has the property that $\mb{P}[\text{$v$ is selected} \mid Y_v = y] = c(y)$ for each $v \in V$ and $y \in [0,1]$.
 We claim that for all $y \in [0,1]$,
$$
    \mb{E}[ S(y) \mid Y_v = y] = \sfe^{-y (1 - x_{v})}.
$$
We provide a proof sketch that verifies this via continuous induction. A formal
proof requires defining and analyzing the algorithm when
there is a discretization parameter $T$ (as done in \Cref{alg:recursive_rcrs}).

\begin{proof}

Fix $t \in [0,1]$, and assume that for each $y < t$ and $v \in V$, it holds
that 
$$
    \mb{E}[ S(y) \mid Y_v = y] = \sfe^{-y (1- x_{v,u}) }.
$$
We shall verify that this holds at time $t$ for each $v$.
Let us say that $v'$ survives, provided $A_{v'} X_{v'} =1$.
Observe that conditional on $Y_v = t$, $S(t) =0$ if and only if
each $v' \neq v$ either does \textit{not} survive, or has $Y_{v'} > t$.
Thus, using the independence over $v' \neq v$, we get that
$$
    \mb{E}[S(t) \mid Y_v =t] = \prod_{v' \neq v} \left(1 - \mb{P}[ A_{v'} X_{v'} =1, Y_{v'} < t \mid Y_v =t] \right).
$$
On the other hand, using the independence of events $Y_v = t$
and $\{A_{v'} X_{v'} =1\} \cap \{Y_{v'} < t\}$,
$$
\mb{P}[A_{v'} X_{v'} =1, Y_{v'} < t \mid Y_v = t] = \mb{P}[A_{v'} X_{v'} =1, Y_{v'} < t] =  x_{v'} \int_{0}^{t} \frac{c(y) dy}{\mb{P}[S(y) \mid Y_{v'} = y]}.
$$
Now, after applying the induction hypothesis and integrating,
$
   \int_{0}^{t} \frac{c(y) dy}{\mb{P}[S(z) \mid Y_{v'} = y]} = 1 - \sfe^{- t x_{v'}}.
$
After combining all equations, we get the desired guarantee,
$
\mb{E}[S(t) \mid Y_v =t] = \prod_{v' \neq v} \sfe^{- t x_{v'}} = \sfe^{-t (1 -x_v)},
$
thereby verifying that the induction hypothesis holds at time $t$.
\end{proof}

\subsection{Closed-form Expression for $\alpha_g$}
Recall that the \textit{upper incomplete gamma function} is defined as $\Gamma(s,z):= \int_{z}^{\infty} \zeta^{s-1} \sfe^{-\zeta} d\zeta$ for $s >0$ and $z \in \mb{R}$, where $\Gamma(s,0) = \int_{0}^{\infty} \zeta^{s-1} \sfe^{-\zeta} d\zeta$ is the usual definition of the \textit{gamma function}. Now,
suppose that \Cref{alg:recursive_rcrs} is given a graph $G$ with odd-girth $g \ge 3$. In this
case, we use the selection function $c: [0,1] \ra [0,1]$, where and $c(0):=1$ and for each $y \in (0,1]$,
$$c(y) := \frac{1}{2y} - \frac{\sfe^{-2 y}}{2y} - \frac{\sfe^{-2 y} ( \Gamma(g, -2y) - \Gamma(g, 0))}{2^{g-1} y(g -1)!}.$$
Note that since $g$ is odd,  $\Gamma(g, -2y) - \Gamma(g, 0)) \ge 0$.
Now, $\alpha_g := \int_{0}^{1} c(y) 2y dy$, and $\int_{0}^{1} (1 - \sfe^{-2y}) dy = \frac{1}{2} + \frac{1}{2 \sfe^2}$, thus
$$\alpha_g = \frac{1}{2} + \frac{1}{2 \sfe^2} - \frac{1}{2^{g-2} (g -1)!}\int_{0}^{1} \sfe^{-2 y} ( \Gamma(g, -2y) - \Gamma(g, 0)) dy.$$
Moreover, if we change the order of integration when evaluating $\int_{0}^{1} \int_{-2y}^{\infty} \sfe^{-2y} \zeta^{s-1}  \sfe^{-\zeta} d\zeta dy$ and apply integration by parts, then since $g$ is odd,
$$
\frac{1}{2^{g-2} (g -1)!} \int_{0}^{1} \sfe^{-2 y} ( \Gamma(g, -2y) - \Gamma(g, 0)) dy = 
\frac{1}{(g-1)!} \left( \frac{2}{g} - \frac{\Gamma(g, -2) - \Gamma(g,0)}{ 2^{g-1} \sfe^2} \right)
$$
and so we get that 
\begin{equation} \label{eqn:gamma_explicit}
    \alpha_g = \frac{1}{2} + \frac{1}{2 \sfe^2} - \frac{1}{(g-1)!} \left( \frac{2}{g} - \frac{\Gamma(g, -2) - \Gamma(g,0)}{ 2^{g-1} \sfe^2} \right).
\end{equation}
Clearly, $\alpha_{g} \ra \frac{1}{2} + \frac{1}{2 \sfe^2}$ as $g \ra \infty$.
It is not hard to show that if $s_g:= \frac{2}{g} - \frac{\Gamma(g, -2) - \Gamma(g,0)}{ 2^{g-1} \sfe^2}$,
then $(s_{2k+1})_{k =1}^{\infty}$ is a decreasing sequence of positive terms. Thus, $(\alpha_{2k+1})_{k=1}^{\infty}$ is an increasing sequence as claimed in \Cref{prop:increasing_alpha}.

\subsection{Proof of \Cref{prop:de_dominance} using \Cref{lem:matching_expected_difference}} \label{sec:de_dominance}

We apply Theorem $3$ from \citet{bennett2023}, where
we have intentionally chosen the below notation
to match the parameters of [Theorem $3$,\citet{bennett2023}]. Define $\scr{D}= [0,2] \times [0,1]$, and set $f(s, r) := s/2 - r$ for $(s,r) \in \scr{D}$. Recall that $m(s)=(\sfe^{-s} + s -1)/2$ is the unique solution to the differential equation $m'(s) = f(s,m(s))$ with initial condition $m(0)=0$. We apply
the theorem to a single sequence of random variables $(M(i)/n)_{i=0}^{\eps n}$ (i.e., $a=1$), the filtration $(\scr{H}_i)_{i=0}^{\eps n}$, and the above system. 

We first set the analytic parameters of the system. We can take $L :=2$, and $B:=1$, since $f$ has Lipschitz constant at most $2$, and $|f| \le 1$. More, we can take $\sigma := \eps$, as $m$ is a solution to
the system on the interval $[0, \eps]$ (in fact, it satisfies the differential equation on the entire real-line).

We now consider $(M(i)/n)_{i=0}^{2n}$. First recall the sequence of events $(Q_i)_{i=0}^{2 n}$ defined before \Cref{lem:matching_expected_difference}. Let $I$ be defined as the first $t \ge 0$ such that the event $Q_t$ does \textit{not} hold. Clearly, $I$ is a stopping time with respect to $(\scr{H}_i)_{i=0}^{2 n}$. It remains to verify `Initial Condition', `Boundedness Hypothesis' and `Trend Hypothesis' from Theorem $3$.
The `Initial Condition' clearly holds, as $M(0) = m(0) =0$. The `Boundedness Hypothesis' holds for $\beta = b =1$ as $M(i)$ changes by at most one in each step. Finally, the `Trend Hypothesis' holds with $\delta = O(n^{-1/3})$ for all $0 \le i \le \min\{I, \sigma n\}$, due to \Cref{lem:matching_expected_difference}.  
By taking $\lambda = n^{2/3}$, we get $\lambda \ge \max\{ B + \beta, \frac{L + BL + \delta n}{3 L} \}$ for $n$ sufficiently large, and so [Theorem $3$, \cite{bennett2023}] implies that 
$$
\frac{M(i)}{n} \le  m(i/n) + 3 n^{-1/3} \sfe^{4 i/n} = (1 + o(1)) m(i/n)
$$
for all $0 \le i \le \min\{I, \eps n\}$ with probability at least $1 - 2 \exp\left( -\frac{n^{4/3}}{2( \eps n + 2n^{2/3})}\right) \ge 1 - o(1/n)$. Since $\mb{P}[\cap_{t=0}^{\eps n} Q_{t}] \ge 1 - o(1/n)$ by \Cref{prop:de_dominance}, we know that $\min\{I, \eps n\} = \eps n$ with probability at least $1 - o(1/n)$. By taking a union bound,
this completes the proof of \Cref{prop:de_dominance}.

\bibliographystyle{amsalpha}
\bibliography{bibliography}

\newcommand{\etalchar}[1]{$^{#1}$}
\providecommand{\bysame}{\leavevmode\hbox to3em{\hrulefill}\thinspace}
\providecommand{\MR}{\relax\ifhmode\unskip\space\fi MR }
\providecommand{\MRhref}[2]{%
  \href{http://www.ams.org/mathscinet-getitem?mr=#1}{#2}
}
\providecommand{\href}[2]{#2}
\begin{thebibliography}{FTW{\etalchar{+}}21}

\bibitem[AGM15]{Adamczyk15}
Marek Adamczyk, Fabrizio Grandoni, and Joydeep Mukherjee, \emph{Improved
  approximation algorithms for stochastic matching}, Algorithms - {ESA} 2015 -
  23rd Annual European Symposium, Patras, Greece, September 14-16, 2015,
  Proceedings (Nikhil Bansal and Irene Finocchi, eds.), Lecture Notes in
  Computer Science, vol. 9294, Springer, 2015, pp.~1--12.

\bibitem[BGMS21]{brubach2021offline}
Brian Brubach, Nathaniel Grammel, Will Ma, and Aravind Srinivasan,
  \emph{Improved guarantees for offline stochastic matching via new ordered
  contention resolution schemes}, Advances in Neural Information Processing
  Systems \textbf{34} (2021), 27184--27195.

\bibitem[BM23a]{bennett2023}
Patrick Bennett and Calum MacRury, \emph{Extending wormald's differential
  equation method to one-sided bounds}, CoRR \textbf{abs/2302.12358} (2023).

\bibitem[BM23b]{borodin2023}
Allan Borodin and Calum MacRury, \emph{Online bipartite matching in the
  probe-commit model}, CoRR \textbf{abs/2303.08908} (2023).

\bibitem[BSSX20]{BrubachSSX20}
Brian Brubach, Karthik~Abinav Sankararaman, Aravind Srinivasan, and Pan Xu,
  \emph{Attenuate locally, win globally: Attenuation-based frameworks for
  online stochastic matching with timeouts}, Algorithmica \textbf{82} (2020),
  no.~1, 64--87.

\bibitem[BZ22]{BruggmannZ22}
Simon Bruggmann and Rico Zenklusen, \emph{An optimal monotone contention
  resolution scheme for bipartite matchings via a polyhedral viewpoint}, Math.
  Program. \textbf{191} (2022), no.~2, 795--845.

\bibitem[CIK{\etalchar{+}}09]{Chen}
Ning Chen, Nicole Immorlica, Anna~R. Karlin, Mohammad Mahdian, and Atri Rudra,
  \emph{Approximating matches made in heaven}, Proceedings of the 36th
  International Colloquium on Automata, Languages and Programming: Part I,
  ICALP '09, 2009, pp.~266--278.

\bibitem[CVZ14]{chekuri2014submodular}
Chandra Chekuri, Jan Vondr{\'a}k, and Rico Zenklusen, \emph{Submodular function
  maximization via the multilinear relaxation and contention resolution
  schemes}, SIAM Journal on Computing \textbf{43} (2014), no.~6, 1831--1879.

\bibitem[DS23]{derakhshan2023query}
Mahsa Derakhshan and Mohammad Saneian, \emph{Query efficient weighted
  stochastic matching}, 2023.

\bibitem[EFGT22]{Ezra_2020}
Tomer Ezra, Michal Feldman, Nick Gravin, and Zhihao~Gavin Tang, \emph{Prophet
  matching with general arrivals}, Mathematics of Operations Research
  \textbf{47} (2022), no.~2, 878--898.

\bibitem[FK15]{Frieze2015}
Alan Frieze and Michał Karoński, \emph{Introduction to random graphs},
  Cambridge University Press, 2015.

\bibitem[FSZ21]{feldman2021online}
Moran Feldman, Ola Svensson, and Rico Zenklusen, \emph{Online contention
  resolution schemes with applications to bayesian selection problems}, SIAM
  Journal on Computing \textbf{50} (2021), no.~2, 255--300.

\bibitem[FTW{\etalchar{+}}21]{Fu2021}
Hu~Fu, Zhihao~Gavin Tang, Hongxun Wu, Jinzhao Wu, and Qianfan Zhang,
  \emph{{Random Order Vertex Arrival Contention Resolution Schemes for
  Matching, with Applications}}, 48th International Colloquium on Automata,
  Languages, and Programming (ICALP 2021) (Dagstuhl, Germany) (Nikhil Bansal,
  Emanuela Merelli, and James Worrell, eds.), Leibniz International Proceedings
  in Informatics (LIPIcs), vol. 198, Schloss Dagstuhl -- Leibniz-Zentrum
  f{\"u}r Informatik, 2021, pp.~68:1--68:20.

\bibitem[Kal07]{continuos_induction_2007}
Iraj Kalantari, \emph{Induction over the continuum}, pp.~145--154, 01 2007.

\bibitem[KLS{\etalchar{+}}22]{kulkarni2022}
Janardhan Kulkarni, Yang~P. Liu, Ashwin Sah, Mehtaab Sawhney, and Jakub
  Tarnawski, \emph{Online edge coloring via tree recurrences and correlation
  decay}, Proceedings of the 54th Annual ACM SIGACT Symposium on Theory of
  Computing (New York, NY, USA), STOC 2022, Association for Computing
  Machinery, 2022, p.~104–116.

\bibitem[KS81]{karp1981maximum}
Richard~M Karp and Michael Sipser, \emph{Maximum matching in sparse random
  graphs}, 22nd Annual Symposium on Foundations of Computer Science (sfcs
  1981), IEEE, 1981, pp.~364--375.

\bibitem[LS18]{Lee2018}
Euiwoong Lee and Sahil Singla, \emph{{Optimal Online Contention Resolution
  Schemes via Ex-Ante Prophet Inequalities}}, 26th Annual European Symposium on
  Algorithms (ESA 2018) (Dagstuhl, Germany) (Yossi Azar, Hannah Bast, and
  Grzegorz Herman, eds.), Leibniz International Proceedings in Informatics
  (LIPIcs), vol. 112, Schloss Dagstuhl--Leibniz-Zentrum fuer Informatik, 2018,
  pp.~57:1--57:14.

\bibitem[Ma14]{ma2014improvements}
Will Ma, \emph{Improvements and generalizations of stochastic knapsack and
  multi-armed bandit approximation algorithms}, Proceedings of the twenty-fifth
  annual ACM-SIAM symposium on Discrete algorithms, SIAM, 2014, pp.~1154--1163.

\bibitem[MMG23]{macrury2023random}
Calum MacRury, Will Ma, and Nathaniel Grammel, \emph{On (random-order) online
  contention resolution schemes for the matching polytope of (bipartite)
  graphs}, Proceedings of the 2023 Annual ACM-SIAM Symposium on Discrete
  Algorithms (SODA), SIAM, 2023, pp.~1995--2014.

\bibitem[NV23]{nuti2023towards}
Pranav Nuti and Jan Vondr{\'a}k, \emph{Towards an optimal contention resolution
  scheme for matchings}, International Conference on Integer Programming and
  Combinatorial Optimization, Springer, 2023, pp.~378--392.

\bibitem[PRSW22]{pollner2022improved}
Tristan Pollner, Mohammad Roghani, Amin Saberi, and David Wajc, \emph{Improved
  online contention resolution for matchings and applications to the gig
  economy}, Proceedings of the 23rd ACM Conference on Economics and Computation
  (New York, NY, USA), EC '22, Association for Computing Machinery, 2022.

\bibitem[Sin18]{singla2018price}
Sahil Singla, \emph{The price of information in combinatorial optimization},
  Proceedings of the twenty-ninth annual ACM-SIAM symposium on discrete
  algorithms, SIAM, 2018, pp.~2523--2532.

\bibitem[W{\etalchar{+}}99]{de}
Nicholas~C Wormald et~al., \emph{The differential equation method for random
  graph processes and greedy algorithms}, Lectures on approximation and
  randomized algorithms \textbf{73} (1999), 155.

\bibitem[War19]{warnke2019wormalds}
Lutz Warnke, \emph{On wormald's differential equation method}, Accepted to
  Combinatorics, Probability, and Computing \textbf{abs/1905.08928} (2019).

\end{thebibliography}

\end{document}